\newcommand{\qedhere}{\qed}
\newcommand{\m}[1]{\ensuremath{\mathsf{#1}}}
\newcommand{\pid}[1]{\m{#1}}
\newcommand{\pids}[1]{\til {\pid #1}}
\newcommand{\rhop}{\rho_{\pid p}}
\newcommand{\rhoq}{\rho_{\pid q}}
\newcommand{\gentell}{\tell{\pid p}{\pid q}{\pid r}}
\newcommand{\tell}[3]{{#1}\!: {#2}\, \code{<->}\, {#3}}
\newcommand{\genatell}{\pid p\!: \bullet_{\pid q}\stackrel{x,y}{\code{<->}}\bullet_{\pid r}}
\newcommand{\genatold}{\acom{\bullet_{\pid p}.\pid r}{\pid q}{\maybe {\pid p}}}
\newcommand{\gentellsubscript}{\pid p: \pid q\, \code{<->}\, \pid r}
\newcommand{\emptyN}{\nil} 
\newcommand{\code}[1]{\mbox{\upshape{\texttt{#1}}}}
\newcommand{\nil}{\boldsymbol 0}
\newcommand{\com}[2]{#1\;\code{-\hspace{-0.3mm}>}\;#2}
\newcommand{\acom}[3]{#1\stackrel{#3}{\code{-\hspace{-0.3mm}>}}#2}
\newcommand{\gencomf}{\com{\pid p.e}{\pid q.f}}
\newcommand{\assend}[4]{\acom{{#1}.{#2}}{\bullet_{#3}}{#4}}
\newcommand{\asrecv}[4]{\acom{\bullet_{#1}}{{#2}.{#3}}{#4}}
\newcommand{\genasend}{\assend{\pid p}{e}{\pid q}{x}}
\newcommand{\genarecv}{\asrecv{\pid p}{\pid q}{f}{\maybe v}}
\newcommand{\maybe}[1]{\hat{#1}}
\newcommand{\sel}[3]{\com{#1}{#2 [#3]}}
\newcommand{\gensel}{\sel{\pid p}{\pid q}{l}}
\newcommand{\genasels}{\acom{\pid p}{\bullet_{\pid q}[l]}x}
\newcommand{\genaselr}{\acom{\bullet_{\pid p}}{\pid q[l]}{\maybe l}}
\newcommand{\lleft}{\textsc{l}}
\newcommand{\lright}{\textsc{r}}
\newcommand{\equivd}{\equiv_{\defs}}
\newcommand{\defs}{\mathcal{D}}
\newcommand{\pdefs}{\mathcal{B}}
\newcommand{\tod}{\to_{\defs}}
\newcommand{\tob}{\to_{\pdefs}}
\renewcommand{\merge}{\sqcup}
\newcommand{\cond}[3]{\m{if}\, #1 \, \m{then} \, #2 \, \m{else} \, #3}
\newcommand{\eqcom}[2]{#1\;\code{<=}\;#2}
\newcommand{\gencond}{\cond{\eqcom{\pid p}{\pid q}}{C_1}{C_2}}
\newcommand{\gencondE}{\cond{\pid p.e}{C_1}{C_2}}
\newcommand{\start}[2]{#1 \, \m{start} \, #2}
\newcommand{\astart}[2]{\m{start} \, #1 \triangleright #2}
\newcommand{\atells}[2]{{#1}!!{#2}}
\newcommand{\atellr}[2]{\arecv{#1}{#2}}
\newcommand{\genstart}{\start{\pid p}{\pid q^T}}
\newcommand{\gendef}{X(\wtil{\pid q^T}) = C}
\newcommand{\genpdef}{X(\pids q) = B}
\newcommand{\pn}{\m{pn}}
\newcommand{\callP}[2]{#1\langle #2 \rangle}
\newcommand{\gencallP}{\callP{X}{\pids p}}
\definecolor{light-gray}{gray}{0.928}
\newcommand{\pcont}{\ast}
\newcommand{\til}{\tilde}
\newcommand{\wtil}{\widetilde}
\newcommand{\amend}{\m{Amend}}
\newcommand{\epp}[2]{[\![#1]\!]_{#2}}
\newcommand{\eval}[2]{{#1}\downarrow{#2}}
\newcommand{\rname}[2]{\ensuremath{\left\lfloor\mbox{{#1}$|${#2}}\right\rceil}}
\newcommand{\asend}[2]{{#1}!{#2}}
\newcommand{\arecv}[2]{#1?#2}
\newcommand{\actor}[4][]{{#2} \triangleright^{#1}_{#3} {#4}}
\newcommand{\parp}{\, \boldsymbol{|} \, }
\newcommand{\asel}[2]{{#1}\oplus#2}
\newcommand{\abranch}[2]{{#1}\&{#2}}
\newcommand{\precongr}{\preceq}
\newcommand{\precongrd}{\preceq_{\defs}}
\newcommand{\precongrb}{\preceq_{\pdefs}}
\newcommand{\defeq}{\stackrel{\Delta}{=}}
\newcommand{\acspar}{\hspace{0.8mm} | \hspace{0.8mm} }
\newcommand{\seq}{\vdash}
\newcommand{\cseq}{\triangleright}
\newcommand{\smallpar}[1]{\smallskip\noindent \textbf{\textit{#1.}}}
\newcommand{\tjudge}[4]{{#1};{#2} \seq {#3} \cseq {#4}}
\newcommand{\ACrich}{PC}
\newcommand{\knows}[3][G]{\ensuremath{{#2}\stackrel{#1}{\longleftrightarrow}{#3}}}
\newcommand{\update}[3][G]{\ensuremath{{#1}\cup\{{#2}\leftrightarrow{#3}\}}}
\newcommand{\know}[3][G]{\ensuremath{{#2}\stackrel{#1}{\rightarrow}{#3}}}
\newcommand{\updates}[3][G]{\ensuremath{{#1}\cup\{{#2}\rightarrow{#3}\}}}
\newcommand{\types}{\mathbb T}
\newcommand{\fst}{\m{fst}}
\newcommand{\snd}{\m{snd}}
\newcommand{\lcfa}[3][\til{G_i}]{\ensuremath{\lcfaname^{#1}_{#2}(#3)}}
\newcommand{\lcfb}[4][\til{G_i}]{\ensuremath{\lcfbname^{#1}_{#2}(#3,#4)}}
\newcommand{\lcfaname}{\m{fwd}}
\newcommand{\lcfbname}{\m{bck}}
\newcommand{\something}[2]{(\!|{#1}|\!)_{#2}}
\begin{document}
\pagestyle{plain}

\setlist[description]{leftmargin=2ex}



\title{
%
A Language for the Declarative Composition of Concurrent Protocols
%
}

\author{Lu\'is Cruz-Filipe and Fabrizio Montesi}
\institute{University of Southern Denmark \qquad \email{\{lcf,fmontesi\}@imada.sdu.dk}}

\maketitle

\begin{abstract}
%
A recent study of bugs in real-world concurrent and distributed systems found
that, while implementations of individual  protocols tend to be robust, the
composition of multiple protocols and its interplay with internal computation is
the culprit for most errors.

Multiparty Session Types and Choreographic Programming are methodologies for
developing correct-by-construction concurrent and distributed software, based on
global descriptions of communication flows. However, protocol composition is
either limited or left unchecked. Inspired by these two methodologies, in this
work we present a new language model for the safe composition of protocols,
called Procedural Choreographies (PC).

Protocols in PC are procedures, parameterised on the processes that enact them.
Procedures define communications declaratively using global descriptions, and
programs are written by invoking and composing these procedures. An
implementation in terms of a process model is then mechanically synthesised,
guaranteeing correctness and deadlock-freedom. We study PC in the settings of
synchronous and asynchronous communications, and illustrate its expressivity
with some representative examples.

\end{abstract}




\section{Introduction}\label{sec:intro}
%
In the last decades,
advances in multi-core hardware and large-scale networks have made concurrent and distributed systems widespread.
Unfortunately, programming such systems is a notoriously error-prone activity.
In \cite{LPSZ08}, $105$ randomly-selected concurrency bugs from real-world software projects are analysed.
Of these, $31$ are caused by deadlocks. Of the $74$ remaining ones, $97\%$ are caused by
violations of the programmer's intentions on atomicity or the ordering of actions.

The theory of Multiparty Session Types tackles these problems in communication protocols~\cite{HYC16}.
The idea is that developers express their intentions on communications declaratively, by writing protocol
specifications from a global viewpoint using an ``Alice and Bob'' notation.
Given such a global specification, an EndPoint Projection (EPP) mechanically synthesises the local specifications
of the I/O actions for each participant.
The local specifications are thus correct by construction.
Then, a type system can be used to check that implementations in process models follow
the generated local specifications. Multiparty session types guarantee that the implementation of each protocol, taken
in isolation, is deadlock-free and faithful to its global specification -- which, by its declarative nature, respects the
programmer's intentions. However, errors can still occur due to the composition of multiple protocol executions, which
is left unchecked. Different approaches for avoiding deadlocks in protocol compositions have been proposed
\cite{CMSY15,CDYP16}, but these limit how protocols can be composed -- e.g., protocols can be instantiated only in
a certain order, or connections among processes should form a tree structure -- and do not offer a means to easily and 
correctly translate the programmer's intentions on ordering: the language for composing protocols is not 
declarative.

A more recent empirical study~\cite{LLLG16} reveals that protocol
composition deserves more attention. It presents a taxonomy of 104 Distributed
Concurrency (DC) bugs. The authors' findings include a significant insight, which
we quote (emphasis ours):

\begin{quote}
``Real-world DC bugs are hard to find because many of them
linger in complex concurrent executions of multiple protocols.
[\ldots]
\emph{Individual protocols tend to be robust in general.} Only 18
DC bugs occur in individual protocols without any input
fault condition [\ldots].
\emph{On the other hand, a large majority of DC bugs happen
due to concurrent executions of multiple protocols} [\ldots]''
\end{quote}

In this quote, a protocol is not intended as an abstract specification as
in multiparty session types, but rather as the concrete series of events that
happen in an implementation, including internal computation.
This motivates us to address the problem of protocol composition in the
framework of Choreographic Programming~\cite{M13:phd}: a paradigm similar to
Multiparty Session Types, where global descriptions of communications are not
used as types, but rather as programs. In such programs, called choreographies,
developers declaratively program communications among processes together with
the internal computations that they perform. EPP is then used to synthesise an
implementation in terms of a process model~\cite{CHY12}. However, state of the
art models for choreographic programming still do not support many of the features used in real-world programs, such as 
those in~\cite{LLLG16}. One prominent aspect, which is the focus of this paper, is that these models do not support 
arbitrary compositions of protocol executions, since they are not modular. In particular, the absence
of full procedural abstraction disallows the development of reusable libraries
that can be composed as ``black boxes''.

Inspired by these observations, we propose a language model for
the correct programming of concurrent and distributed systems based on message
passing, called Procedural Choreographies (PC). PC is a new model for
choreographic programming, where we can define a protocol execution as a
procedure, parameterised on the processes that will actually enact it.
Composition of protocol executions is then obtained by allowing for the
arbitrary composition of procedure calls, a feature lacked by previous
choreography models. Nevertheless, PC inherits all the good properties of
languages based on declarative global descriptions of communications -- as in
choreographic programming and multiparty session types -- and extends them to
protocol compositions. In particular, the process implementations synthesised from choreographies are correct by 
construction and deadlock-free. Thus, PC contributes to bringing the current body of work on safe concurrent 
programming nearer to dealing with the kinds of bugs analysed in~\cite{LLLG16}.

\begin{example}
\label{ex:mergesort}
We discuss a parallel version of merge sort in PC. Although this is a toy example,
it cannot be written in any previous model for choreographic programming, thus
allowing us to present the key features of our work for a simple scenario.
More realistic and involved examples are presented after the formal presentation of PC.
  We make the standard assumption that we have concurrent processes with local storage and computational capabilities.
  In this example, each process stores a list and can use the following local functions:
  \lstinline$split1$ and \lstinline$split2$, respectively returning the first or second half of a list;
  \lstinline$is_small$, which tests if a list has at most one element;
  and \lstinline$merge$, which combines two sorted lists into one.
  The following (choreographic) procedure, \lstinline$MS$, implements merge sort on the list stored at its parameter process
  \lstinline$p$.\footnote{In the remainder, we use a \lstinline+monospaced+ font for readability of our
concrete examples, and other fonts for distinguishing syntactic categories in our formal arguments as
usual.}
\begin{lstlisting}
MS(p) =	if p.is_small then 0
	else p start q1,q2; p.split1 -> q1; p.split2 -> q2;
             MS<q1>; MS<q2>; q1.* -> p; q2.* -> p.merge
\end{lstlisting}
Procedure \lstinline$MS$ starts by checking whether the list at process \lstinline$p$ is small,
in which case it does not need to be
sorted (\lstinline$0$ denotes termination); otherwise, \lstinline$p$ starts two other processes \lstinline$q1$ and \lstinline$q2$
(\lstinline$p start q1,q2$), to which it
respectively sends the first and the second half of the list (\lstinline$p.split1 -> q1$ and \lstinline$p.split_2 -> q2$).
The procedure is recursively reapplied to \lstinline$q1$ and \lstinline$q2$, which independently (concurrently)
proceed to ordering their respective sub-lists. When this is done, \lstinline$MS$ stores the first ordered half from
\lstinline$q1$ to \lstinline$p$ (\lstinline$q1.* -> p$, where \lstinline$*$ retrieves the data stored in
\lstinline$q1$) and
merges it with the ordered sub-list from \lstinline$q2$ (\lstinline$q2.* -> p.merge$).
\end{example}

Our merge sort example showcases the key desiderata for PC:
\begin{description}
\item[General recursion.] Procedure calls can be followed by arbitrary code.
\item[Parameterised procedures.]
Procedures
are parametric on their processes
(\lstinline$p$ in
  \lstinline+MS+),
and can thus be reused 
with different processes (as in \lstinline+MS<q1>+ and \lstinline+MS<q2>+).
\item[Process spawning.] The ability of starting new processes.
There must be no bound on how many
  processes can be started, since this is decided at runtime. (In \lstinline+MS+, the number of spawned processes
  depends on the size of the initial list.)
\item[Implicit Parallelism.]
In the composition \lstinline+MS<q1>; MS<q2>+, the two calls can be run in parallel because they involve separate 
processes and are thus non-interfering. Thus, the code synthesised from this program should be parallel.
\end{description}

In the remainder, we explore more sophisticated programs
that require additional features, e.g., mobility of process names.

\subsection{Contributions}
We summarise our development and contributions.

%

\smallpar{Procedural Choreographies}
We introduce Procedural Choreographies (PC), a new
language model that supports all the features discussed
above (\S~\ref{sec:pc}).
%
%
We evaluate the expressivity of PC not only with our concurrent merge sort example, but also with a more involved
parallel downloader. This example makes use of additional features:
mobility of process names (networks with connections that evolve at runtime) and propagation of choices among
processes. It also makes heavy use of implicit parallelism to deal with the parallelisation of multiple streams.

\smallpar{Asynchrony}
PC can be endowed with both a synchronous (\S~\ref{sec:pc}) and an asynchronous semantics (\S~\ref{sec:async}). All
our results hold for both versions, and the two semantics enjoy a strong correspondence result
(Theorem~\ref{thm:pc-vs-apc}). This allows developers to reason about their programs in the simpler synchronous
setting and then to use the asynchronous semantics to obtain a more concurrent implementation, without worrying
about introducing unsafe behaviour.

\smallpar{Typing}
Mobility of process names requires careful handling, especially its interplay with procedure composition: if 
a procedure specifies that two processes interact, then they should be properly connected (know each other's
names).
We introduce a novel typing discipline (\S~\ref{sec:types}) that prevents such errors by tracking the connections
required by each procedure.
It also checks that processes store data of the correct type for the local functions that use it.
PC enjoys decidable type checking (Theorem~\ref{thm:dec}) and type inference (Theorems~\ref{thm:inference}
and~\ref{cor:inference}).

\smallpar{Endpoint Projection}
We define an EndPoint Projection (EPP) that, given a choreography, synthesises a concurrent implementation
in Procedural Processes (PP), our target process calculus (\S~\ref{sec:epp}).
PP is an abstraction of systems where concurrent processes communicate by referring to each other's 
locations or identifiers, as it happens, e.g., in MPI~\cite{MPI} or the Internet Protocol.
The synthesised code is correct by construction: it faithfully follows the behaviour
of the originating choreography (Theorem~\ref{thm:epp}).
Also, EPP is transparent, in the sense that it does not introduce any auxiliary communications or computations.
This means that a choreography always faithfully represents the actual efficiency and behaviour of the algorithm
written by the programmer.
All generated implementations are deadlock-free by construction (Corollary~\ref{cor:df-ap}).


\smallpar{Extensions}
We further discuss two extensions that allow us to write more general procedures, enhancing the expressive power of
PC: allowing parameters to be lists of processes and allowing procedure bodies to contain holes that can be filled at
runtime with arbitrary code (\S~\ref{sec:params}).
These extensions can be elegantly obtained by introducing minimal additions to the theory of PC,
demonstrating its robustness, but they are nevertheless very useful in practice.


\section{Procedural Choreographies (PC)}\label{sec:pc}
%
We begin by introducing the language model of Procedural Choreographies (\ACrich).
We focus on the synchronous semantics in this section, as the underlying theory is a bit simpler.
The asynchronous model is discussed in \S~\ref{sec:async}.

\smallpar{Syntax}
The syntax of PC is displayed in Figure~\ref{fig:pc_syntax}.
A procedural choreography is a pair $\langle\defs,C\rangle$, where $C$ is a choreography and $\defs$ is a set of
procedure definitions.
\begin{figure}[t]
\footnotesize
\begin{align*}
C ::={} & \eta;C \acspar I;C \acspar \nil
& \eta ::={} & \gencomf \acspar \gensel \acspar \genstart \acspar \gentell \\
\defs ::={} & \gendef, \defs \acspar \emptyset
& I ::={} & \gencondE \acspar \gencallP \acspar \nil
\end{align*}
\caption{Procedural Choreographies, Syntax.}
\label{fig:pc_syntax}
\end{figure}

Process names ($\pid p,\pid q,\pid r,\ldots$), identify processes that execute concurrently. Each process is
equipped with a memory cell that stores a single value of a fixed type.
Specifically, we consider a fixed set $\types$ of datatypes (numbers, lists,
etc.); each process $\pid p$ stores only values of type $T_{\pid p}\in\types$.
Statements in a choreography can either be communication actions ($\eta$) or
compound instructions ($I$), both of which can have continuations. Term $\nil$
is the terminated choreography, which we often omit in examples. We call
all terms but $\nil;C$ \emph{program terms}, or simply programs, since these
form the syntax intended for developers to use for writing programs. Term
$\nil;C$ is necessary only for the technical definition of the semantics, to
capture termination of procedure calls with continuations, and can appear only
at runtime. It is thus called a \emph{runtime term}. This distinction
plays a bigger role in \S~\ref{sec:async}, which introduces more
runtime terms to track the state of asynchronous communications.

Processes communicate via direct references (names) to each other.%
\footnote{This makes PC easy to apply to mainstream settings based on actors, objects, 
or ranks (e.g., MPI).
We could give a formulation of PC based on
standard channels (from process calculi), where each pair of process names is a channel. This is evident also
from the connection graph used in the semantics of PC, defined below.}
In a value communication $\com{\pid p.e}{\pid q.f}$, process $\pid p$ sends the result of evaluating expression $e$
to $\pid q$.
In $e$, the placeholder $\pcont$ is replaced at runtime with the data stored at process $\pid p$.
When $\pid q$ receives the value from $\pid p$, it applies to it the (total) function $f$
and stores the result.
The definition of $f$ may also access the contents of
$\pid q$'s memory.

In a selection term $\sel{\pid p}{\pid q}{l}$, $\pid p$ communicates to $\pid q$
its choice of label $l$, which is a constant. This term is intended to propagate
information on which internal choice has been made by a process to another
(see Remark~\ref{remark:selections} below).

In term $\genstart$, process $\pid p$ spawns the new process $\pid q$, which
stores data of type $T$. Process name $\pid q$ is bound in the continuation
$C$ of $\genstart;C$.

Process spawning introduces the need for mobility of process names. In
real-world systems, after execution of $\genstart$, $\pid p$ is the
only process that knows the name of $\pid q$. Any other process wanting to
communicate with $\pid q$ must therefore be first informed of its existence
(as happens, e.g., in object- and service-oriented computing~\cite{BJLZ14,GGM15}).
This is achieved with the introduction term $\gentell$, read ``$\pid p$ introduces $\pid q$ and $\pid r$''
(with $\pid p$, $\pid q$ and $\pid r$ distinct).
As its double-arrow syntax suggests, this action represents
\emph{two} communications -- one where $\pid p$ sends $\pid q$'s name to
$\pid r$, and another where $\pid p$ sends $\pid r$'s name to $\pid q$. This
will become explicit in \S~\ref{sec:epp}.

In a conditional term $\gencondE$, process $\pid p$ evaluates $e$ to choose between the possible
continuations $C_1$ and $C_2$.

The set $\defs$ contains global procedures.
Term ${X(\wtil{\pid q^T}) = C_X}$ defines a procedure $X$ with body $C_X$, which can be used anywhere in $\langle\defs,C\rangle$ -- in particular, inside $C_X$.
The names $\pids q$ are bound to $C_X$, and they are exactly the free process names in $C_X$.
Each procedure can be defined at most once in $\defs$.
Term $\gencallP$ calls (or invokes) procedure $X$ by passing $\pids p$ as parameters.
Procedure calls inside definitions must be guarded, i.e., they can only occur after a communication action.

We work up to $\alpha$-equivalence in choreographies, assuming the Barendregt convention.
Bound variables are renamed as needed when expanding procedure calls.

\begin{example}
\label{ex:ms_pc}
Recall procedure \lstinline+MS+ from our merge sort example in the Introduction (Example~\ref{ex:mergesort}).
If we annotate the parameter $\pid p$ and the started processes $\pid q_1$ and $\pid q_2$ with a type, e.g.,
$\mathbf{List}(T)$ for some $T$ (the type of lists containing elements of type $T$), then \lstinline+MS+ is a valid procedure
definition in PC, as long as we allow two straightforward syntactic conventions:
(i)~$\start{\pid p}{\wtil{\pid q^T}}$ stands for the sequence
$\start{\pid p}{\pid q_1^{T_1}};\ldots;\start{\pid p}{\pid q_n^{T_n}}$;
(ii)~a communication of the form $\com{\pid p.e}{\pid q}$ stands for $\com{\pid p.e}{\pid q.\m{id}}$, where $\m{id}$ is
the identity function: it sets the content of $\pid q$ to the value received from $\pid p$.
We adopt these conventions also in the remainder.
\end{example}

\begin{remark}[Design choices]
\label{rem:design}
We comment on two of our design choices.

The introduction action ($\gentell$) requires a three-way synchronization, and essentially performs two
communications.
The alternative development of PC with asymmetric introduction (an action
$\pid p\!: \pid q\, \code{->}\, \pid r$ whereby $\pid p$ sends $\pid q$'s name to $\pid r$,
but not conversely) is essentially the same as ours. Since in our examples we always perform introductions
in pairs, the current choice makes the presentation easier.

The restriction that each process stores only one value of a fixed type is, in practice, a minor
constraint.
As shown in Example~\ref{ex:ms_pc}, types can be tuples or lists, which mimics storing
several values.
Also, a process can create new processes with different types -- so we can encode changing the
type of $\pid p$ by having $\pid p$ create a new process $\pid p'$ and then continuing the choreography
with $\pid p'$ instead of $\pid p$.
\end{remark}

\begin{remark}[Label Selection]
\label{remark:selections}
We briefly motivate the need for selections ($\gensel$).
Consider the
choreography
$
\cond{\pid p.\m{coinflip}}{
	(\com{\pid p.\pcont}{\pid r})
}{
	(\com{\pid r.\pcont}{\pid p})
  }
$.
Here, $\pid p$ flips a coin to decide whether to send a value to $\pid r$ or to receive a value from $\pid r$.
Since processes run independently and share no data, only $\pid p$ knows which branch of the
conditional will be executed; but this information is essential for $\pid r$ to decide on its behaviour.
To propagate $\pid p$'s decision to $\pid r$, we use selections:
\[
\cond{\pid p.\m{coinflip}}{
  (\sel{\pid p}{\pid r}{\lleft}; \com{\pid p.\pcont}{\pid r})
}{
  (\sel{\pid p}{\pid r}{\lright}; \com{\pid r.\pcont}{\pid p})
  }
\]
Now $\pid r$ receives a label reflecting the choice made by $\pid p$, and can decide what to do.

This intuition is formalised by the definition of EndPoint Projection in~\S~\ref{sec:epp}.
The first choreography above is not projectable, whereas the second one is.
See also Example~\ref{ex:parallel_downloader} at the end of this section.
\end{remark}

\smallpar{Semantics}
We define a reduction semantics $\tod$ for PC, parameterised over
$\defs$.
We model the state of processes with a (total) state function $\sigma$, where $\sigma(\pid p)$ denotes the value stored in $\pid p$.
We assume that each type $T\in\types$ has a special value $\bot_T$, representing an uninitialised process state.
\begin{figure}[t]
\footnotesize
\begin{eqnarray*}
&\infer[\rname{C}{Com}]
{
  G,\gencomf;C,\sigma
  \ \tod \ 
  G, C, \sigma[\pid q \mapsto w]
}
{
  \knows{\pid p}{\pid q} &
  \eval{e[\sigma(\pid p)/\pcont]}v & \eval{f[\sigma(\pid q)/\pcont](v)}w
}
\\[1ex]
&\infer[\rname{C}{Sel}]
{
  G,\gensel;C,\sigma \ \tod \ G,C,\sigma
}
{
  \knows{\pid p}{\pid q}
}
\\[1ex]
&\infer[\rname{C}{Start}]{
  G, \genstart;C, \sigma
  \ \tod\ 
  \update{\pid p}{\pid q},C, \sigma[\pid q\mapsto\bot_T]
}{}
\\[1ex]
&\infer[\rname{C}{Tell}]{
  G, \gentell;C, \sigma
  \ \tod\ 
  \update{\pid q}{\pid r}, C, \sigma
}{
  \knows{\pid p}{\pid q}
  &
  \knows{\pid p}{\pid r}
}
\\[1ex]
&\infer[\rname{C}{Cond}]
{
  G, (\gencondE);C, \sigma \ \tod \ G, C_i\fatsemi C, \sigma
}
{
  i = 1 \ \text{if } \eval{e[\sigma(\pid p)/\pcont]}{\m{true}}, &
  i = 2 \ \text{otherwise}
}
\\[1ex]
&\infer[\rname{C}{Struct}]
{
  G, C_1, \sigma \ \tod \  G', C'_1, \sigma'
}
{
  C_1\, \precongrd \,C_2
  & G, C_2, \sigma\ \tod \  G', C'_2, \sigma'
  & C'_2 \, \precongrd\, C'_1
}
\end{eqnarray*}
\caption{Procedural Choreographies, Semantics.}
\label{pc:semantics}
\end{figure}

The semantics of PC also includes a connection graph $G$, keeping track of which processes know each other.
In the rules, $\knows{\pid p}{\pid q}$ denotes that $G$ contains an edge between $\pid p$ and $\pid q$, and
$\update{\pid p}{\pid q}$ denotes the graph obtained from $G$ by adding an edge between $\pid p$ and $\pid q$ (if
missing).

Executing a communication action $\gencomf$ in rule \rname{C}{Com} requires that: $\pid p$ and $\pid q$
are connected in $G$; $e$ is well typed; and the type of $e$ matches that
expected by the function $f$ at the receiver. The last two conditions are
encapsulated in the notation $\eval ev$, read ``$e$ evaluates to $v$''.
Choreographies can thus deadlock (be unable to reduce) because of errors in
the programming of communications; this issue is addressed by our typing
discipline in \S~\ref{sec:types}.

Rule \rname{C}{Sel} defines selection as a no-op for choreographies (see Remark~\ref{remark:selections}).

Rule \rname{C}{Start} models the creation of a process. In the reductum, the
starter and started processes are connected and can thus communicate with each
other.
Rule \rname{C}{Tell} captures name mobility, by creating a connection between
two processes $\pid q$ and $\pid r$ when they are introduced by a process $\pid p$
that is connected to both.

Rule \rname{C}{Cond} uses the auxiliary operator $\fatsemi$ to obtain a reductum
in the syntax of PC regardless of the forms of the branches $C_1$ and $C_2$ and the continuation $C$. Operator
$\fatsemi$ is defined by
$\eta\fatsemi C = \eta;C$,
$I\fatsemi C = I;C$
and
$(C_1;C_2)\fatsemi C = C_1;(C_2\fatsemi C)$.
This operator extends the scope of bound names: any name $\pid p$ bound in $C$ has its scope extended also to
$C'$.
This scope
extension is capture-avoiding, as the Barendregt convention guarantees that $\pid p$ is not used in $C'$.

Rule \rname{C}{Struct} makes use of the structural precongruence $\precongrd$,
which is defined by the rules in Figure~\ref{fig:pc_precongr}.
We write $C\equivd C'$ when $C\precongrd C'$ and $C'\precongrd C$, and denote the set of process names (free or bound)
in a choreography $C$ by $\pn(C)$.
\begin{figure}[t]
\footnotesize
\begin{eqnarray*}
&
\infer[\rname{C}{Eta-Eta}]
{
	\eta;\eta'\ \equivd \ \eta';\eta
}
{
	\pn(\eta) \cap \pn(\eta') = \emptyset
}
\quad
\infer[\rname{C}{I-I}]
{
	I;I'
	\ \equivd\ 
	I';I
}
{
	\pn(I) \cap \pn(I') = \emptyset
}
\quad
\infer[\rname{C}{I-Eta}]
{
	\eta;I
	\ \equivd\ 
	I;\eta
}
{
	\pn(I) \cap \pn(\eta) = \emptyset
}
\\[1ex]
&\infer[\rname{C}{End}]
{
	\nil;C
	\precongrd 
	C
}
{
}
\quad
\infer[\rname{C}{Unfold}] 
{
  \callP X{\pids p};C
  \ \,\precongrd\ \,
  C_X[\pids p/\pids q]\fatsemi C
}
{
  X(\wtil{\pid q^T}) = C_X \in \defs
}
\\[1ex]
&\infer[\rname{C}{Eta-Cond}]
{
	\cond{\pid p.e}{(\eta;C_1)}{(\eta;C_2)}
	\ \equivd\
	\eta;\gencondE
}
{
	\{ \pid p, \pid q\} \cap \pn(\eta) = \emptyset
}
\\[1ex]
&\infer[\rname{C}{Cond-Eta}]
{
	\cond{\pid p.e}{(C_1;\eta)}{(C_2;\eta)}
	\ \equivd\
	\gencondE;\eta
}
{
}
\\[1ex]
&\infer[\rname{C}{Cond-Cond}]
{
	\begin{array}{c}
	\cond{\pid p.e}{
 		(\cond{\pid q.e'}
 		{C_1}{C_2})
	}{
 		(\cond{\pid q.e'}
 		{C'_1}{C'_2})
	}
	\\
	\ \equivd \
	\\
	\cond{\pid q.e'}{
		(\cond{\pid p.e}
		{C_1}{C'_1})
	}{
		(\cond{\pid p.e}
		{C_2}{C'_2})
	}
	\end{array}
}
{
	\{ \pid p, \pid q \} \cap \{ \pid r, \pid s \} = \emptyset
}
\end{eqnarray*}
\caption{Procedural Choreographies, Structural precongruence $\precongr$.}
\label{fig:pc_precongr}
\end{figure}

%
Rule \rname{C}{Unfold} unfolds a procedure call, again using the $\fatsemi$ operator defined above, and rule
\rname{C}{End} is garbage collection of $\nil$.

The other rules formalise the notion of implicit parallelism anticipated in \S~\ref{sec:intro}.
Rule \rname{C}{Eta-Eta} permutes two communications performed
by processes that are all distinct, modelling that processes run independently of one another.
For example,
$\com{\pid p.\pcont}{\pid q};\com{\pid r.\pcont}{\pid s} \equivd
\com{\pid r.\pcont}{\pid s}; \com{\pid p.\pcont}{\pid q}$
because these two communications are non-interfering,
but $\com{\pid p.\pcont}{\pid q};\com{\pid q.\pcont}{\pid s} \not\equivd
\com{\pid q.\pcont}{\pid s}; \com{\pid p.\pcont}{\pid q}$:
since the second communication causally depends on the first (both involve $\pid q$).

This reasoning is extended to instructions in rule \rname{C}{I-I}; in particular,
two procedure calls can be swapped if they share no arguments.
This is sound because a procedure can only define actions for processes that are either passed as arguments or 
started inside of the procedure itself, and the latter cannot be leaked to the original call site.
Thus, any actions obtained by unfolding the first procedure call involve different processes than those obtained by unfolding the second one.
As the example below shows, calls to the same procedure can be exchanged, since $X$ and $Y$ need not be distinct.
The other rules follow similar intuitions; we omit rules \rname{C}{I-Cond} and \rname{C}{Cond-I}, analogous to \rname{C}{Eta-Cond} and \rname{C}{Cond-Eta}.
%

\begin{example}
\label{ex:impl-par}
In our merge sort example, rule \rname{C}{I-I} allows the recursive
calls \lstinline+MS<q_1>+ and \lstinline+MS<q_2>+ to be exchanged.
Furthermore, after the calls are unfolded, implicit parallelism allows their code to be interleaved 
in any way.

This example exhibits typical map-reduce behaviour: each new process receives its input, runs
independently from all others, and then sends its result to its creator.
\end{example}

\begin{example}
\label{ex:fine-grained}
A more refined example of implicit parallelism involves swapping communications from procedure
calls that share process names.
Consider the procedure
\begin{lstlisting}
auth(c,a,r,l) = c.creds -> a.rCreds;
                a.chk -> r.res; a.log -> l.app
\end{lstlisting}
Client \lstinline+c+ sends its credentials to an authentication server \lstinline+a+, which stores the result of
authentication in \lstinline+r+ and appends a log of this operation at process \lstinline+l+.
In the choreography
\lstinline+auth<c,a1,r1,l>; auth<c,a2,r2,l>+,
a client \lstinline+c+ authenticates at two different authentication servers \lstinline+a1+ and \lstinline+a2+.
After unfolding the two calls, rule \rname{C}{Eta-Eta} yields the following interleaving:
\begin{lstlisting}
c.creds -> a1.rCreds; c.creds -> a2.rCreds;
a2.chk -> r2.res; a1.chk -> r1.res;
a1.log -> l.app; a2.log -> l.app
\end{lstlisting}
Thus, the two authentications proceed in parallel. Observe that the
logging operations cannot be swapped, since they use the same logging process \lstinline+l+.
%
\end{example}

\begin{example}
\label{ex:parallel_downloader}
A more sophisticated example involves modularly composing different procedures that take multiple parameters.
Here, we write a choreography where a client $\pid c$ downloads a collection of files from a server $\pid s$.
The key idea is to download all files in parallel via streaming, by having the client and the server each
create subprocesses to handle the transfer of each file.
This allows the client to request and start downloading each file without waiting for previous
downloads to finish.
\begin{lstlisting}
par_download(c,s) = if c.more
  then c -> s [more]; c start c'; s start s';
       s: c <-> s'; c.top -> s'; pop<c>;
       c: c' <-> s'; download<c',s'>;
       par_download<c,s>; c'.file -> c.store
  else c -> s [end]
\end{lstlisting}
At the start of \lstinline+par_download+, the client $\pid c$ checks whether it wants to download more files
and informs the server $\pid s$ of the result via a label selection.
In the affirmative case, the client and the server start two subprocesses, $\pid c'$ and $\pid s'$ respectively, and
the server introduces $\pid c$ to $\pid s'$ (\lstinline+s: c <-> s'+).
The client $\pid c$ sends to $\pid s'$ the name of the file to download (\lstinline+c.top -> s'+) and removes it from its collection, using procedure \lstinline+pop+ (omitted),
afterwards introducing its own subprocess $\pid c'$ to $\pid s'$.
The file download is handled by $\pid c'$ and $\pid s'$ (using procedure \lstinline+download+), while $\pid c$ and $\pid s$ continue
operating (\lstinline+par_download<c,s>+). Finally, $\pid c'$ waits until $\pid c$ is ready to store the downloaded file.

Procedure \lstinline+download+ has a similar structure. It implements a stream where a file is
sequentially transferred in chunks from a process $\pid s$ to another process~$\pid c$.
\begin{lstlisting}
download(c,s) = if s.more
  then s -> c [more]; s.next -> c.append; pop<s>; download<c,s>
  else s -> c [end]
\end{lstlisting}

The implementation of \lstinline+par_download+ exploits implicit parallelism considerably.
All calls to \lstinline+download+ are made with disjoint sets of parameters (processes), thus they can be fully
parallelised by our semantics: many instances of \lstinline+download+ run at the same time, each one
implementing a (sequential) stream.
By implicit parallelism, we effectively end up executing many streaming behaviours in parallel.

We can even compose \lstinline+par_download+ with \lstinline+auth+, such that we execute the parallel download only if the client can successfully authenticate with an authentication server \lstinline+a+. Below, we use the shortcut $\sel{\pid p}{\pids q}{l}$ for $\sel{\pid p}{\pid q_1}{l};\dots;\sel{\pid p}{\pid q_n}{l}$.

\begin{lstlisting}
auth<c,a,r,l>; if r.ok then r -> c,s[ok]; par_download<c,s>
                       else r -> c,s[ko]
\end{lstlisting}
\end{example}



\section{Typability and Deadlock-Freedom}\label{sec:types}
%
We give a typing discipline for PC, to checks that (a) the types of functions and processes are respected by
communications and (b) processes that need to communicate are first properly introduced (or connected).
Regarding (b), two processes created independently can communicate only after they receive the names of
each other. For instance, in Example~\ref{ex:parallel_downloader}, the execution of \lstinline+download<c',s'>+ would
get stuck if \lstinline+c'+ and \lstinline+s'+ were not properly introduced in \lstinline+par_download+, since our semantics
requires them to be connected.

Typing judgements have the form $\tjudge{\Gamma}{G}{C}{G'}$, read ``$C$ is well-typed according to
the typings in $\Gamma$, and when executed from a connection graph that contains $G$ it produces a connection graph
that includes $G'$''.
Typing environments $\Gamma$ are used to track the types of processes and procedures; they are defined as:
$\Gamma ::={} \emptyset \mid \Gamma,\,\pid p:T \mid \Gamma, \, X\!:\!G\cseq G'$.
A typing $\pid p:T$ states that process $\pid p$ stores values of type $T$, and a typing $X:G\cseq G'$ records the
effect of the body of $X$ on graph~$G$.

The rules for deriving typing judgements are given in Figure~\ref{fig:pc_types}.
We assume standard typing judgements for functions and expressions, and write $\pcont:T\vdash_\types e:T$ and
$\pcont:T_1\vdash_\types f:T_2\to T_3$ meaning, respectively ``$e$ has type $T$ assuming that $\pcont$ has type $T$''
and ``$f$ has type $T_2\to T_3$ assuming that $\pcont$ has type $T_1$''.
\begin{figure}[t]
\footnotesize
\begin{eqnarray*}
&\infer[\rname{T}{End}]
{
	\Gamma;G \seq \nil \cseq G
}
{
}
\qquad
\infer[\rname{T}{Sel}]
{
  \Gamma;G\vdash\gensel;C \cseq G'
}
{
  \knows{\pid p}{\pid q} &
  \Gamma;G\vdash C \cseq G'
}
\qquad
\infer[\rname{T}{EndSeq}]
{
  \Gamma;G\seq\nil;C \cseq G'
}
{
  \Gamma;G\seq C \cseq G'
}
\\[1ex]
&
\infer[\rname{T}{Com}]
{
  \Gamma;G\seq\gencomf;C \cseq G'
}
{
  \knows{\pid p}{\pid q} &
  \Gamma\seq\pid p:T_{\pid p},\pid q:T_{\pid q} &
  \pcont:T_{\pid p}\vdash_\types e:T_1 &
  \pcont:T_{\pid q}\vdash_\types f:T_1\to T_{\pid q} &
  \Gamma;G\seq C \cseq G'
}
\\[1ex]
&\infer[\rname{T}{Cond}]
{
  \Gamma;G \seq (\gencondE);C \cseq G'
}
{
	\Gamma \seq \pid p:T
	&
  \pcont:T \seq_\types e:\m{bool} &
  \Gamma;G\seq C_i \cseq G_i &
  \Gamma;G_1 \cap G_2\vdash C \cseq G'
}
\\[1ex]
&
\infer[\rname{T}{Start}]{
  \Gamma;G\seq\genstart;C \cseq G'
}{
  \Gamma,\pid q:T;{\update{\pid p}{\pid q}} \seq C \cseq G'
}
\qquad
\infer[\rname{T}{Tell}]{
  \Gamma;G\vdash\gentell;C \cseq G'
}{
  \knows{\pid p}{\pid q} &
  \knows{\pid p}{\pid r} &
  \Gamma;\update{\pid q}{\pid r} \vdash C \cseq G'
}
\\[1ex]
&\infer[\rname{T}{Call}]
{
  \Gamma;G \seq \gencallP;C \cseq G'
}
{
	\Gamma \seq X(\wtil{\pid q^T}): G_X \cseq G'_X &
	\Gamma \seq \pid p_i:T_i &
	G_X[\pids p / \pids q] \subseteq G &
  \Gamma;G \cup (G'_X[\pids p/\pids q]) \seq C\cseq G'
}
\end{eqnarray*}
\caption{Procedural Choreographies, Typing Rules.}
\label{fig:pc_types}
\end{figure}

Verifying that communications respect the expected types is straightforward, using the connection graph $G$ to
track which processes have been introduced to each other.
In rule \rname{T}{Start}, we implicitly use the fact that $\pid q$ does not appear yet in $G$, which is another
consequence of using the Barendregt convention.
The final graph $G'$ is only used in procedure calls (rule \rname{T}{Call}).
Other rules leave it unchanged. 

To type a procedural choreography, we need to type its set of procedure definitions $\defs$.
We write $\Gamma\vdash\defs$ if: for each $X(\wtil{\pid q^T})=C_X\in\defs$, there is exactly one typing
$X(\wtil{\pid q^T}):G_X\cseq G'_X\in\Gamma$, and this typing is such that
$\Gamma,\wtil{\pid q:T},G_X\vdash C_X\cseq G'_X$.
We say that $\Gamma\vdash\langle\defs,C\rangle$ if $\Gamma,\Gamma_\defs;G_C\vdash C,G'$ for some $\Gamma_\defs$ such
that $\Gamma_\defs\vdash\defs$ and some $G'$, where $G_C$ is the full graph whose nodes are the free process names in
$C$.
The choice of $G_C$ is motivated by observing that (i)~all top-level processes should know each other and (ii)~eventual
connections between processes not occuring in $C$ do not affect its typability.

Well-typed choreographies either terminate or diverge.\footnote{Since we are
interested in communications, we assume evaluation of functions
and expressions to terminate on values with the right types (see \S~\ref{sec:related}, Faults).}
\begin{theorem}[Deadlock freedom and Subject reduction]
  \label{thm:type}
  Given a choreography $C$ and a set $\defs$ of procedure definitions, if $\Gamma\vdash\defs$ and
  $\tjudge{\Gamma}{G_1}{C}{G'_1}$ for some $\Gamma$, $G_1$ and $G'_1$, then either:
  \begin{itemize}
  \item $C\precongrd\nil$; or,
  \item for every $\sigma$, there exist $G_2$, $C'$ and $\sigma'$ such that $G_1,C,\sigma\tod G_2,C',\sigma'$ and
    $\tjudge{\Gamma'}{G_2}{C'}{G'_2}$ for some $\Gamma' \supseteq \Gamma$ and $G'_2$.
  \end{itemize}
\end{theorem}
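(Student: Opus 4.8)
The plan is to prove the two disjuncts by structural induction on the derivation of $\tjudge{\Gamma}{G_1}{C}{G'_1}$, with the standard reinforcement needed to make subject reduction go through: I will actually prove that if $C$ is well-typed and $C \precongrd C'$ then $C'$ is well-typed with a compatible graph (a structural-precongruence lemma), and then handle the reduction rules one at a time. The proof naturally splits: the ``progress'' part (either $C\precongrd\nil$ or a reduction exists from every $\sigma$) and the ``preservation'' part (the reduct is again well-typed).

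For progress, I would inspect the topmost non-$\nil$ statement of $C$ (after using $\precongrd$ to strip leading $\nil;$ via \rname{C}{End}). If $C$ is $\nil$ or structurally congruent to it, we are in the first case. Otherwise the head is one of $\eta$, a conditional, or a procedure call. For $\gencomf;C$, rule \rname{T}{Com} guarantees $\knows{\pid p}{\pid q}$ in $G_1$ and that $e,f$ are well-typed against the stored types, so by the assumption that well-typed expressions evaluate to values of the right type (the footnote), $\eval{e[\sigma(\pid p)/\pcont]}{v}$ and $\eval{f[\sigma(\pid q)/\pcont](v)}{w}$ hold, hence \rname{C}{Com} fires. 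The cases \rname{C}{Sel}, \rname{C}{Start}, \rname{C}{Tell} are similar but easier, the connectivity side-conditions coming directly from the corresponding typing rules. For $(\gencondE);C$, rule \rname{T}{Cond} gives $\pcont:T\vdash_\types e:\m{bool}$, so $e[\sigma(\pid p)/\pcont]$ evaluates to $\m{true}$ or $\m{false}$ and \rname{C}{Cond} applies. For $\gencallP;C$, we use \rname{C}{Unfold} from $\precongrd$ — here we need that $X$ is defined in $\defs$, which follows because $\Gamma\vdash\defs$ gives a typing $X(\wtil{\pid q^T}):G_X\cseq G'_X\in\Gamma$ used by \rname{T}{Call}, matched by a definition in $\defs$.

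For preservation, I would check each reduction rule. The cases \rname{C}{Com}, \rname{C}{Sel} leave $G$ unchanged and only reduce $C$ to its continuation, which is well-typed by the last premise of the relevant typing rule; for \rname{C}{Com} one also notes $\sigma[\pid q\mapsto w]$ still assigns $\pid q$ a value of type $T_{\pid q}$ since $f:T_1\to T_{\pid q}$. For \rname{C}{Start} and \rname{C}{Tell}, the new connection graph is exactly the one appearing in the premise of \rname{T}{Start} / \rname{T}{Tell}, so the continuation is well-typed in the enlarged graph (for \rname{C}{Start} we also extend $\Gamma$ with $\pid q:T$, which is why $\Gamma'\supseteq\Gamma$). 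The case \rname{C}{Cond} requires a substitution/sequencing lemma: if $\Gamma;G\seq C_i\cseq G_i$ and $\Gamma;G_1\cap G_2\seq C\cseq G'$ then $\Gamma;G\seq C_i\fatsemi C\cseq G''$ for a suitable $G''$ — i.e.\ well-typedness is preserved by $\fatsemi$, with the graph effect composing appropriately; this is proved by induction on $C_i$ using that $G_i\supseteq G_1\cap G_2$. The case \rname{C}{Struct} uses the structural-precongruence lemma on both sides. The real work, and what I expect to be the main obstacle, is the $\precongrd$ lemma together with the $\fatsemi$-typing lemma — in particular \rname{C}{Unfold}: we must show that expanding $X$ into $C_X[\pids p/\pids q]\fatsemi C$ preserves typability, which needs (i) a substitution lemma that $\Gamma,\wtil{\pid q:T},G_X\seq C_X\cseq G'_X$ implies $\Gamma;G_X[\pids p/\pids q]\seq C_X[\pids p/\pids q]\cseq G'_X[\pids p/\pids q]$ when $\Gamma\seq\pid p_i:T_i$, and (ii) composing this with the $\fatsemi$-typing lemma to recover exactly the obligations that \rname{T}{Call} discharges ($G_X[\pids p/\pids q]\subseteq G_1$ and continuation typed under $G_1\cup G'_X[\pids p/\pids q]$). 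The swap rules \rname{C}{Eta-Eta}, \rname{C}{I-I}, etc.\ need an auxiliary observation that the graph effect of a statement touches only its own process names, so reordering non-interfering statements does not change the net effect on $G$; this is where the remark in the text that procedure-internal processes cannot be leaked is used. Once these lemmas are in place, Theorem~\ref{thm:type} follows by assembling the progress and preservation arguments, taking $\Gamma'$ to be $\Gamma$ possibly extended by the type of a freshly started process.
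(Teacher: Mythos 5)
Your proposal is correct, and its core ingredients coincide with the paper's: a substitution lemma for instantiating procedure bodies, a sequentiality lemma for typing $C_1\fatsemi C_2$ (together with monotonicity in the graph, which is what makes the $G_1\cap G_2$ premise of \rname{T}{Cond} and the $G_X[\pids p/\pids q]\subseteq G$ premise of \rname{T}{Call} compose), and a case analysis on the head statement in which the typing premises discharge exactly the side conditions of the reduction rules. Where you genuinely diverge is in the decomposition. The paper does \emph{not} split progress from preservation: it proves the existential statement directly by induction on the typing derivation, with a single lemma that simultaneously exhibits a reduction and a typing for the reduct whenever the head is not $\nil$ or a call, and then handles \rname{T}{Call} by unfolding via \rname{C}{Unfold} and the substitution/monotonicity/sequentiality lemmas. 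Because the theorem only asserts the \emph{existence} of one well-typed reduct, the paper never needs to show that the commutation rules (\rname{C}{Eta-Eta}, \rname{C}{I-I}, \rname{C}{Cond-Cond}, \ldots) preserve typing; \rname{C}{Unfold} is the only piece of $\precongrd$ it must analyse. Your route instead proves preservation for every reduction, which forces you to establish the full ``$C\precongrd C'$ preserves typability'' lemma for all the swap rules; this is provable (the key observation you identify --- that a statement's effect on the connection graph only touches its own process names --- is exactly what makes the swapped derivations re-typecheck, with the \rname{C}{Cond-Cond} case reducing to associativity and commutativity of graph intersection), but it is substantially more work than the statement requires. What you buy for that extra effort is a genuinely stronger property: true subject reduction along \emph{arbitrary} reductions, including those enabled by out-of-order execution, which the paper's proof does not literally deliver. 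One small point to tidy up in both your argument and the paper's: after unfolding a call, guardedness only guarantees the body does not \emph{begin} with another call, but it may begin with (or be) $\nil$, so the progress step needs a well-founded recursion that also strips leading $\nil$'s via \rname{C}{End} before the head-statement case analysis applies.
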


(Proofs of theorems can be found in the Appendix.)

Checking that $\Gamma\vdash\langle\defs,C\rangle$ is not trivial, as it requires ``guessing'' $\Gamma_\defs$.
However, this set can be computed from $\langle\defs,C\rangle$, entailing type inference properties for PC.
\begin{theorem}
  \label{thm:dec}
  Given $\Gamma$, $\defs$ and $C$, $\Gamma\vdash\langle\defs,C\rangle$ is decidable.
\end{theorem}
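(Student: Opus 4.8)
The plan is to reduce the decidability of $\Gamma\vdash\langle\defs,C\rangle$ to the problem of computing a suitable $\Gamma_\defs$, since once $\Gamma_\defs$ is fixed, checking $\Gamma_\defs\vdash\defs$ and $\Gamma,\Gamma_\defs;G_C\vdash C\cseq G'$ is a routine structural recursion over the finitely many rules of Figure~\ref{fig:pc_types} (the only mild subtlety being rule \rname{T}{Call}, whose side conditions $G_X[\pids p/\pids q]\subseteq G$ and the computation of $G\cup(G'_X[\pids p/\pids q])$ are all decidable set operations on finite graphs). So the real content is: \emph{from $\defs$ alone, compute a candidate assignment of interface typings $X(\wtil{\pid q^T}):G_X\cseq G'_X$ to each procedure $X$, such that $\Gamma_\defs\vdash\defs$ holds iff this canonical candidate works.}

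The approach I would take is a fixed-point computation over the space of connection graphs on the (fixed, finite) parameter sets $\pids q$ of each procedure. For each $X(\wtil{\pid q^T})=C_X\in\defs$, the pre-graph $G_X$ must be a subgraph of the complete graph on $\pids q$, and likewise $G'_X$; since $\defs$ is finite and each $\pids q$ is finite, there are only finitely many candidate tuples $(G_X,G'_X)_{X\in\defs}$. One could in principle brute-force over all of them, type-checking each $C_X$ against the candidate, and accept if any candidate validates all bodies — this already gives decidability. For a cleaner (and inference-oriented) argument I would instead characterise the \emph{least} valid $G'_X$ by iteration: start from the smallest reasonable $G_X$ — intuitively the graph encoding exactly the connections that $C_X$ forces between its parameters before any procedure call unfolds — and the empty $G'_X$, then repeatedly re-typecheck each body, propagating through \rname{T}{Call} the effect graphs discovered so far, and enlarging $G'_X$ monotonically. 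The $G_X$ component is extracted analogously as the union of connectivity requirements. Because the lattice of graphs on a fixed finite vertex set is finite and the typing rules are monotone in the relevant arguments, this iteration stabilises in finitely many steps, and the stabilised tuple is the canonical $\Gamma_\defs$; $\Gamma\vdash\langle\defs,C\rangle$ then holds iff $\Gamma,\Gamma_\defs;G_C\vdash C\cseq G'$ for some $G'$, which is decidable.

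The key steps in order: (1) observe that for fixed $\Gamma_\defs$, the judgement $\Gamma,\Gamma_\defs;G\vdash C\cseq G'$ is syntax-directed, so $G'$ is uniquely determined (or the derivation fails) and type-checking terminates by induction on $C$ — here the guardedness condition on procedure calls is what rules out infinite regress, since \rname{C}{Unfold} is not a typing rule and \rname{T}{Call} merely consults $\Gamma_\defs$ rather than re-entering the body; (2) bound the search space for $\Gamma_\defs$ by noting each $G_X,G'_X$ lives in the finite lattice of subgraphs of the complete graph on $\pids q$; (3) set up the monotone fixed-point iteration for $(G_X,G'_X)_X$ and argue termination from finiteness plus monotonicity; (4) prove the fixed point is the canonical witness: if \emph{any} $\Gamma_\defs$ validates $\defs$, so does the least one, by a monotonicity argument on the typing rules (widening $G$ or $G'$ in the premises of each rule preserves derivability in the required direction); (5) conclude by type-checking $C$ against $\Gamma,\Gamma_\defs$.

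I expect the main obstacle to be step (4): establishing the exact monotonicity property that makes the canonical $\Gamma_\defs$ sound and complete. One has to be careful about the direction of monotonicity in each argument position — in \rname{T}{Call}, $G_X[\pids p/\pids q]\subseteq G$ is antitone in $G_X$ (smaller $G_X$ is easier to satisfy) while $G\cup(G'_X[\pids p/\pids q])$ passed to the continuation is monotone in $G'_X$ — and in \rname{T}{Cond} the intersection $G_1\cap G_2$ mixes the two. Getting a clean induction that the least-fixed-point assignment is accepted whenever any assignment is, without circularity between the $G_X$ and $G'_X$ components across mutually recursive procedures, is the delicate part; the paper may well sidestep it via the brute-force "try all finitely many candidates" argument, which is less elegant but makes decidability immediate and defers the canonical-witness refinement to the inference theorems (\ref{thm:inference}, \ref{cor:inference}).
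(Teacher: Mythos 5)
Your proposal is correct and matches the paper's proof in essence: the paper also reduces the problem to computing a canonical minimal typing $\Gamma_\defs$ by a least-fixed-point iteration over the finite lattice of graphs on each procedure's parameter set, and then argues minimality so that the canonical candidate succeeds whenever any candidate does. The monotonicity difficulty you flag in step (4) is resolved in the paper by splitting the iteration into two sequential phases --- first a forward pass computing the edges each body \emph{adds} (the output-graph underapproximations $G^\circ_X$), then a backward pass using those to compute the \emph{required} input graphs $G_X$ --- which decouples the antitone and monotone argument positions you identified.
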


Theorem~\ref{thm:dec} may seem a bit surprising; the key idea of its proof is that type-checking may require expanding
recursive definitions, but their parameters only need to be instantiated with process names from a finite set.
With a similar idea, we also obtain type inference.

\begin{theorem}
  \label{thm:inference}
  There is an algorithm that, given any $\langle\defs,C\rangle$, outputs:
  \begin{itemize}
  \item a set $\Gamma$ such that $\Gamma\vdash\langle\defs,C\rangle$, if such a $\Gamma$ exists;
  \item \lstinline+NO+, if no such $\Gamma$ exists.
  \end{itemize}
\end{theorem}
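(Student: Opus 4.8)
The plan is to reduce type inference to type checking (Theorem~\ref{thm:dec}) by showing that the ``shape'' of a candidate environment $\Gamma$ is essentially determined by $\langle\defs,C\rangle$, so that there are only finitely many candidates to examine, or—more precisely—that the unknowns in $\Gamma$ can be solved by a constraint-generation procedure. First I would observe that $\Gamma$ must contain: (i) a process typing $\pid p:T$ for each free process name of $C$, (ii) a procedure typing $X(\wtil{\pid q^T}):G_X\cseq G'_X$ for each $X\in\defs$. The datatypes $T$ assigned to processes and procedure parameters are constrained by the local typing judgements $\pcont:T\vdash_\types e:T'$, $\pcont:T\vdash_\types f:T_2\to T_3$ appearing in rules \rname{T}{Com} and \rname{T}{Cond}; assuming the function/expression language admits type inference (a standard assumption, which I would state explicitly as a hypothesis, matching the paper's stance in \S~\ref{sec:types}), these judgements either force each relevant $T$ uniquely or are solvable by a finitary unification-style procedure. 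The start types inside procedure bodies are written syntactically in the term $\genstart$, so no guessing is needed there.

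Next I would handle the graph components $G_X,G'_X$. The key point, already flagged in the discussion after Theorem~\ref{thm:dec}, is that these graphs have vertex set exactly $\pids q$ (the formal parameters of $X$), which is fixed and finite; hence there are only finitely many possible values for each pair $(G_X,G'_X)$. So the algorithm enumerates all tuples of graph-pairs over the respective parameter sets—a finite search space—and for each candidate assignment of graphs (together with the process/parameter datatypes determined above) invokes the decision procedure of Theorem~\ref{thm:dec} to check $\Gamma\vdash\langle\defs,C\rangle$. If some candidate succeeds, output it; if all fail, output \lstinline+NO+. Correctness is immediate: if a valid $\Gamma$ exists, its graph components are among those enumerated and its datatype components are among those produced by the (complete) local inference, so the search finds it; conversely any $\Gamma$ the algorithm outputs has been verified by Theorem~\ref{thm:dec}. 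Termination follows because the search space is finite and each membership test terminates by Theorem~\ref{thm:dec}.

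The main obstacle I expect is not the graph enumeration—that is genuinely finite and routine—but the interaction between the datatype unknowns and recursion: a procedure $X$ may be called with different actual process arguments at different sites, and rule \rname{T}{Call} requires the actuals' types to match the parameter types $T_i$, while the body of $X$ imposes its own constraints via its internal communications, and $X$'s body may call other procedures (or itself). So the datatype constraints form a system spanning all of $\defs$ and $C$ simultaneously. I would argue this system is still solvable: because each procedure parameter has a single declared-or-inferred type and procedure calls only propagate (rather than create fresh) type variables, the whole thing reduces to a finite unification problem over $\types$, which I would either solve directly or, more cheaply, sidestep by noting that Theorem~\ref{thm:dec} already lets us \emph{check} any fully-instantiated guess, so it suffices to bound the number of distinct datatypes that can appear—these are drawn from the finitely many type expressions syntactically occurring in $\defs$ and $C$ (closed under the finitely many unifications the local rules can force), again yielding a finite candidate space. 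I would present the cleanest of these arguments (probably the unification reduction) as the core of the proof, deferring the mechanical details to the appendix.
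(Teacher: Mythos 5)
Your proposal is correct, but it takes a noticeably different route from the paper's, and it does strictly more work than necessary. The paper's proof of Theorem~\ref{thm:inference} is a one-pass construction: it walks through $C$, records $\pid p:T_{\pid p}$ whenever an action constrains $\pid p$'s type (as sender, receiver, or procedure argument, with the types of expressions and functions assumed known from $\vdash_\types$), and outputs \texttt{NO} exactly when two conflicting types are recorded for the same process. Crucially, no guessing or enumeration of the graph components $G_X\cseq G'_X$ is needed, because the proof of Theorem~\ref{thm:dec} already \emph{computes} a canonical minimal $\Gamma_\defs$ via two least-fixpoint constructions ($\lcfaname$ and $\lcfbname$) and proves that this canonical typing succeeds whenever any typing does; moreover, by the definition of $\Gamma\vdash\langle\defs,C\rangle$, the procedure typings are existentially quantified inside the judgement, so the output $\Gamma$ need only contain process typings at all. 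Your brute-force enumeration of all graph pairs over each parameter set is sound and terminating (the search space is finite, and each candidate is checked by Theorem~\ref{thm:dec}), but it is exponential where the paper's fixpoint is polynomial, and it duplicates machinery that the decision procedure already runs internally. Your worry about datatype constraints interacting with recursion is legitimate but belongs to Theorem~\ref{cor:inference} rather than to this statement: in Theorem~\ref{thm:inference} the procedure parameters still carry their annotations $\wtil{\pid q^T}$, so only the free processes of $C$ need types and each occurrence forces the type directly; the paper handles the harder unannotated case separately by iterating an operator $\mathcal T_\types$ over tuples of typing contexts, which is essentially the fixpoint form of the unification argument you sketch. What your approach buys is robustness (it would survive a richer expression language where local types are not immediately determined); what the paper's buys is a much shorter argument that leans on the minimality result already established for Theorem~\ref{thm:dec}.
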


\begin{theorem}
  \label{cor:inference}
  The types of arguments in procedure definitions and the types of freshly created processes can be inferred
  automatically.
\end{theorem}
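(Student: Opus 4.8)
The plan is to reduce the problem to a first-order unification problem over the datatype algebra $\types$, solved independently of the connection-graph component of typing. The key observation about Figure~\ref{fig:pc_types} is that the datatype annotations --- the $T$ in $\genstart$ and the $\widetilde T$ in each $X(\wtil{\pid q^T}) = C_X$ --- interact only with the premises of $\rname{T}{Com}$, $\rname{T}{Cond}$, $\rname{T}{Start}$ and $\rname{T}{Call}$ that mention $\vdash_\types$ or the typing of a process name, and never with the manipulation of $G$. Conversely, inferring the graph typings $X:G_X\cseq G_X'$ --- the delicate part, which requires expanding recursive calls finitely, exactly as in the proofs of Theorems~\ref{thm:dec} and~\ref{thm:inference} --- does not depend on any datatype annotation. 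Hence it suffices to (i)~infer the missing datatype annotations by unification and (ii)~invoke Theorem~\ref{thm:inference} on the resulting fully annotated $\langle\defs,C\rangle$ to obtain a witnessing $\Gamma$; the conjunction of the two outputs is what the corollary asks for.

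For step~(i): erase all datatype annotations and replace each by a fresh type variable --- one variable $\alpha_X^i$ per parameter $\pid q_i$ of each procedure $X$, one variable per occurrence of $\genstart$, and (for the top-level process typings in $\Gamma$) one variable per free process name of $C$. Traverse $C$ and every body $C_X$ with an environment mapping each in-scope process name to its variable, generating equality constraints: for $\com{\pid p.e}{\pid q.f}$, those expressing that $e$ is typable with $\pcont$ of $\pid p$'s type, yielding some $T_1$, and that $f$ has type $T_1\to(\text{type of }\pid q)$ with $\pcont$ of $\pid q$'s type; for $\gencondE$, that $e$ has type $\m{bool}$ with $\pcont$ of $\pid p$'s type; for $\callP X{\pids p}$, that $\pid p_i$'s variable equals $\alpha_X^i$. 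Here we rely on the (standard) assumption that the expression/function system $\vdash_\types$ admits principal types, so each ``typable-with'' condition is captured by finitely many equalities between type expressions over $\types$ and our variables. The resulting constraint set is finite, linear in $|\langle\defs,C\rangle|$ (recursion merely makes it refer back to the $\alpha_X^i$, since a procedure's parameter types are shared across all call sites); we run unification, output \lstinline+NO+ if it fails, and otherwise apply the most general unifier, instantiating any remaining free variable with a fixed element of $\types$, and read off the annotations.

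The main obstacle is justifying this constraint generation, via the usual two inductions. \emph{Soundness}: any solution yields annotations making $\langle\defs,C\rangle$ typable --- shown by building the derivations of Figure~\ref{fig:pc_types} and the per-definition check $\Gamma_\defs\vdash\defs$ from the solved constraints, by structural induction on $C$ and each $C_X$. \emph{Completeness}: if \emph{some} choice of annotations makes $\langle\defs,C\rangle$ typable, the constraints are satisfiable --- by induction on the typing derivation, exhibiting those annotations together with the principal decompositions of $\vdash_\types$ as a solution. The only points not already routine in Hindley--Milner-style arguments are bookkeeping: a single variable $\alpha_X^i$ must serve at the definition of $X$ and at every call of $X$, and the well-formedness condition on $\defs$ must be threaded through the induction. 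Once step~(i) is established, step~(ii) is immediate from Theorem~\ref{thm:inference}, which then also supplies the graph typings, completing the argument.
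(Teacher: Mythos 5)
Your proposal is correct, but it routes the argument through machinery different from the paper's. The paper does not set up a unification problem at all: it extends the direct algorithm of Theorem~\ref{thm:inference} (walk the choreography, record a \emph{concrete} type $\pid p:T_{\pid p}$ at every action that determines $\pid p$'s type, fail on contradiction) to procedure definitions by defining a monotone operator $\mathcal T_\types$ over tuples of typing contexts, one per $X_i\in\defs$, and iterating it to a fixpoint; the iteration is what propagates type information across call sites and definitions, and any process variable still untyped at the fixpoint is given an arbitrary type such as $\mathbb N$. Your version achieves the same propagation in one global pass by sharing a variable $\alpha_X^i$ between the definition of $X$ and all of its call sites and then computing a most general unifier, with leftover variables likewise instantiated arbitrarily. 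The two are interchangeable for the paper's purposes, but they trade off slightly different assumptions and strengths: the paper's fixpoint iteration presupposes that each individual action pins down a concrete type for the processes it touches (so contexts only ever contain ground types and ``contradiction'' is just inequality of ground types), whereas your reduction needs the explicitly flagged assumption that $\vdash_\types$ admits principal types, and in exchange handles situations the paper's algorithm would mishandle --- e.g.\ a communication through a polymorphic function that only forces two processes to have \emph{equal} types without determining either, where collecting ground assignments yields nothing and the final arbitrary instantiation could pick two different types, while unification correctly links the variables before instantiating them uniformly. Your orthogonality observation (datatype constraints and connection-graph constraints never interact in Figure~\ref{fig:pc_types}, so the graph typings can be delegated wholesale to Theorem~\ref{thm:inference} and the fixpoint construction in Theorem~\ref{thm:dec}) is also how the paper implicitly structures things, though it never states the separation as cleanly as you do.
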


\begin{remark}[Inferring introductions]
\label{remark:tell}
Theorems~\ref{thm:inference} and~\ref{cor:inference} allow us to omit type annotations in choreographies,
if the types of functions and expressions at processes are known (from $\vdash_{\types}$).
Thus, programmers can write choreographies as in our examples.

The same reasoning could be adopted to infer missing introductions ($\gentell$) in a
choreography automatically, thus
lifting the programmer also from having to think about connections entirely. However, while the types inferred for a
choreography do not affect its behaviour, the placement of introductions does.
In particular, when invoking procedures one is faced with the choice of adding the necessary introductions inside the
procedure definition (weakening the conditions for its invocation) or in the code calling it
(making the procedure body more efficient).
\end{remark}


\section{Synthesising Process Implementations}\label{sec:epp}
%
We now present our EndPoint Projection (EPP), which compiles a choreography to a concurrent
implementation represented in terms of a process calculus.

\subsection{Procedural Processes (PP)}
We introduce our target process model, Procedural Processes (PP).

\smallpar{Syntax}
The syntax of PP is given in Figure~\ref{fig:pp_syntax}.
%
\begin{figure}[t]
\footnotesize
\begin{align*}
B &{}::= \asend{\pid q}{e};B \acspar \arecv{\pid p}{f};B \acspar \atells{\pid q}{\pid r};B \acspar \atellr{\pid p}{\pid r};B \acspar \asel{\pid q}{l};B
\acspar \abranch{\pid p}{\{ l_i : B_i\}_{i\in I}};B
\\
&  \acspar \nil\acspar \astart{\pid q^T}{B_2};B_1 \acspar \cond{e}{B_1}{B_2};B \acspar \gencallP ;B \acspar \nil;B\\
\pdefs &{}::= \genpdef, \pdefs \acspar \emptyset
\qquad\qquad
 N,M ::= \actor{\pid p}{v}{B} \ \acspar \ (N \parp M) \ \acspar \ \emptyN
\end{align*}
\caption{Procedural Processes, Syntax.}
\label{fig:pp_syntax}
\end{figure}

A term $\actor{\pid p}{v}{B}$ is a process, where $\pid p$ is its name, $v$ is its value, and $B$ is its behaviour.
Networks, ranged over by $N,M$, are parallel compositions of processes, where $\emptyN$ is the inactive network.
Finally, $\langle\pdefs,N\rangle$ is a procedural network, where $\pdefs$ defines the procedures that the
processes in $N$ may invoke.
Values, expressions and functions are as in PC.

A process executing a send term $\asend{\pid q}{e};B$ sends the evaluation of expression $e$ to $\pid q$, and
proceeds as $B$.
Term $\arecv{\pid p}{f};B$ is the dual receiving action: the process executing it receives a value from
$\pid p$, combines it with its value as specified by $f$, and then proceeds as $B$.
Term $\atells{\pid q}{\pid r}$ sends process name $\pid r$ to $\pid q$ and process name $\pid q$ to $\pid r$,
making $\pid q$ and $\pid r$ ``aware'' of each other.
The dual action is $\atellr{\pid p}{\pid r}$, which receives a process name from $\pid p$ that replaces the
bound variable $\pid r$ in the continuation.
Term $\asel{\pid q}{l};B$ sends the selection of a label $l$ to process $\pid q$.
Selections are received by the branching term $\abranch{\pid p}{\{ l_i : B_i\}_{i\in I}}$, which can receive a
selection for any of the labels $l_i$ and proceed according to $B_i$.
Branching terms must offer at least one branch.
Term $\astart{\pid q}{B_2};B_1$ starts a new process (with a fresh name) executing $B_2$, and proceeds in
parallel as $B_1$.
Conditionals, procedure calls, and termination are standard.
Term $\astart{\pid q}{B_2};B_1$ binds $\pid q$ in $B_1$, and $\atellr{\pid p}{\pid r};B$ binds $\pid r$ in~$B$.

\smallpar{Semantics}
The rules defining the reduction relation $\tob$ for PP are shown in Figure~\ref{fig:pp_semantics}.
As in PC, they are parameterised on the set of behavioural procedures $\mathcal B$.
%
\begin{figure}[t]
\footnotesize
\begin{eqnarray*}
&\infer[\rname{P}{Com}]
{
	\actor{\pid p}{v}{\asend{\pid q}{e};B_1}
	\ \parp\ 
	\actor{\pid q}{w}{\arecv{\pid p}{f};B_2}
	\ \tob \ 
	\actor{\pid p}{v}{B_1}
	\ \parp\ 
	\actor{\pid q}{u}{B_2}
}
{
	u = (f[w/\pcont])(e[v/\pcont])
}
\\[1ex]
&\infer[\rname{P}{Sel}]
{
	\actor{\pid p}{v}{\asel{\pid q}{l_j};B}
	\ \parp\ 
	\actor{\pid q}{w}{\abranch{\pid p}{\{ l_i : B_i\}_{i\in I}}}
	\ \tob \
	\actor{\pid p}{v}{B}
	\ \parp\ 
	\actor{\pid q}{w}{B_j}
}
{j \in I}
\\[1ex]
&\infer[\rname{P}{Cond}]
{
	\actor{\pid p}{v}{\cond{e}{B_1}{B_2}}
	\ \tob \ 
	\actor{\pid p}{v}{B_i}
}
{
	i = 1 \ \text{if } e[v/\pcont] = \m{true},\quad
	i = 2 \ \text{otherwise}
}
\\[1ex]
&\infer[\rname{P}{Start}]
{
	\actor{\pid p}{v}{
		(\astart{\pid q^T}{B_2};B_1)
	}
	\ \tob \
	\actor{\pid p}{v}{B_1[\pid q'/\pid q]}
	\ \parp\ 
	\actor{\pid q'}{\bot_T}{B_2}
}
{
\pid q' \text{ fresh}
}
\\[1ex]
&\infer[\rname{P}{Tell}]
{
	\actor{\pid p}{v}{\atells{\pid q}{\pid r};B_1}
	\ \parp\ 
	\actor{\pid q}{w}{\atellr{\pid p}{\pid r};B_2}
	\ \parp\ 
	\actor{\pid r}{u}{\atellr{\pid p}{\pid q};B_3}
	\ \tob \ 
	\actor{\pid p}{v}{B_1}
	\ \parp\ 
	\actor{\pid q}{w}{B_2}
	\ \parp\ 
	\actor{\pid r}{u}{B_3}
}
{
}
\\[1ex]
&\infer[\rname{P}{Par}]
{
	N \parp M \ \tob \ N' \parp M
}
{
	N \ \tob\  N'
}
\qquad
\infer[\rname{P}{Struct}]
{
	N \ \tob \ N'
}
{
	N \precongrb M & M\ \tob\ M' & M' \precongrb N'
}
\end{eqnarray*}
\caption{Procedural Processes, Semantics.}
\label{fig:pp_semantics}
\end{figure}
Rule \rname{P}{Com} models value communication: a process $\pid p$ executing a send action towards a process
$\pid q$ can synchronise with a receive-from-$\pid p$ action at $\pid q$; in the reductum, $f$ is used to
update the memory of $\pid q$ by combining its contents with the value sent by $\pid p$.
The placeholder $\pcont$ is replaced with the current value of $\pid p$ in $e$ (resp.\ $\pid q$ in $f$).
Rule \rname{P}{Tell} establishes a three-way synchronisation, allowing a process to introduce two others.
Since the received names are bound at the receivers, we rely on $\alpha$-conversion to make the receivers
agree on each other's name, as done in session types~\cite{HVK98}.
(Differently from PC, we do not assume the Barendregt convention here, in line with the tradition
of process calculi.)
Rule \rname{P}{Sel} is standard selection~\cite{HVK98}, where the sender process selects one of the branches offered by
the receiver.
In rule \rname{P}{Start}, we require the name of the created process to be globally fresh.
All other rules are standard.
Rule \rname{P}{Struct} uses structural precongruence $\precongrb$, which is the smallest precongruence satisfying
associativity and commutativity of parallel ($\parp$) and the rules in Figure~\ref{fig:pp_precongr}.
%
%
\begin{figure}[t]
\footnotesize
\begin{eqnarray*}
  & \infer[\rname{P}{AZero}]
     {\actor{\pid p}{v}{\nil} \ \precongrb \ \nil}{}
    \qquad
    \infer[\rname{P}{NZero}]
    {N \parp {\emptyN} \ \precongrb \ N}{}\\
  & \infer[\rname{P}{End}]{\nil;B\ \precongrb\ B}{}
    \qquad
    \infer[\rname{P}{Unfold}]
          {
            \callP X{\pids p};B
            \precongrb
            B_X[\pids p/\pids q]\fatsemi B
          }
          {
            X(\wtil{\pid q^T}) = B_X \in \pdefs
          }
\end{eqnarray*}
\caption{Procedural Processes, Structural precongruence $\precongrb$.}
\label{fig:pp_precongr}
\end{figure}

Rule \rname{P}{Unfold} expands procedure calls.
It uses again the $\fatsemi$ operator, defined as for PC
but with terms in the PP language.

\begin{remark}
  Our three-way synchronisation in rule \rname{P}{Tell} could be easily encoded with two standard two-way
  communications of names, as in the $\pi$-calculus~\cite{SW01} (see also Remark~\ref{rem:design}).
  Our choice gives a clearer formulation of EPP.
\end{remark}

\begin{example}
\label{ex:ms_proc}
We show a process implementation of the merge sort choreography in Example~\ref{ex:mergesort} from
\S~\ref{sec:intro}.
All processes are annotated with type $\mathbf{List}(T)$ (omitted); $\m{id}$ is the identity function
(Example~\ref{ex:ms_pc}).
\begin{lstlisting}
MS_p(p) = if is_small then 0
  else start q_1 |> (p?id; MS_p<q_1>; p!*);
       start q_2 |> (p?id; MS_p<q_2>; p!*);
       q_1!split_1; q_2!split_2; q_1?id; q_2?merge
\end{lstlisting}
In the next section, we show that our EPP generates this process implementation
automatically from the choreography in Example~\ref{ex:mergesort}.
\end{example}

\subsection{EndPoint Projection (EPP)}\label{sec:pc_epp}
We now show how to compile programs in PC to processes in PP.

\smallpar{Behaviour Projection}
We start by defining how to project the behaviour of a single process $\pid p$, a partial function denoted
$\epp{C}{\pid p}$.
The rules defining behaviour projection are given in Figure~\ref{fig:epp}.
\begin{figure}[t]
\footnotesize
\begin{eqnarray*}
&\epp{\gencomf;C}{\pid r} =
	\begin{cases}
		\asend{\pid q}{e};\epp{C}{\pid r} & \text{if } \pid r = \pid p \\
		\arecv{\pid p}{f};\epp{C}{\pid r} & \text{if } \pid r = \pid q \\
		\epp{C}{\pid r} & \text{otherwise}
	\end{cases}
\qquad
\epp{\gensel;C}{\pid r} =
	\begin{cases}
		\asel{\pid q}{l};\epp{C}{\pid r} & \text{if } \pid r = \pid p \\
		\abranch{\pid p}{\{ l : \epp{C}{\pid r} \}} & \text{if } \pid r = \pid q \\
		\epp{C}{\pid r} & \text{otherwise}
	\end{cases}
\\[1ex]
&\epp{\gentell;C}{\pid s} = 
\begin{cases}
	\atells{\pid q}{\pid r};\epp{C}{\pid s} & \text{if } \pid s = \pid p
	\\
	\atellr{\pid p}{\pid r};\epp{C}{\pid s} & \text{if } \pid s = \pid q
	\\
	\atellr{\pid p}{\pid q};\epp{C}{\pid s} & \text{if } \pid s = \pid r
	\\
	\epp{C}{\pid s} & \text{otherwise}
\end{cases}
\qquad
\begin{array}{l}
\epp{\gencallP;C}{\pid r} =
	\begin{cases}
		\callP{X_i}{\pids p};	\epp{C}{\pid r}
		& \text{if } \pid r = \pid p_i \\
		\epp{C}{\pid r} & \text{otherwise}
	\end{cases}
\\[1.5em]
\epp{\nil}{\pid r} = \nil
\qquad
\epp{\nil;C}{\pid r} = \epp{C}{\pid r}
\end{array}
\\[1ex]
&\epp{\gencondE;C}{\pid r} =
	\begin{cases}
		\cond{e}{\epp{C_1}{\pid r}}{\epp{C_2}{\pid r}}
		; \epp{C}{\pid r}
		& \text{if } \pid r = \pid p \\
		( \epp{C_1}{\pid r} \merge \epp{C_2}{\pid r} )
		;\epp{C}{\pid r} & \text{otherwise}
	\end{cases}
\\[1ex]
&\epp{\genstart;C}{\pid r} =
\begin{cases}
	\astart{\pid q}{\epp{C}{\pid q}};\epp{C}{\pid r}
	& \text{if } \pid r = \pid p
	\\
	\epp{C}{\pid r} & \text{otherwise}
\end{cases}
\end{eqnarray*}
\caption{Procedural Choreographies, Behaviour Projection.}
\label{fig:epp}
\end{figure}
Each choreography term is projected to the local action of the process that we are projecting.
For example, a communication term $\gencomf$ projects a send action for the sender
$\pid p$, a receive action for the receiver $\pid q$, or skips to the continuation otherwise.
The rules for projecting a selection or an introduction (name mobility) are similar.

The rule for projecting a conditional uses the partial merging operator $\merge$: $B\merge B'$
is isomorphic to $B$ and $B'$ up to branching, where the branches of $B$ or $B'$ with distinct labels are also
included.
The interesting rule defining merge is:
\begin{multline*}
  \left(\abranch{\pid p}{\{l_i:B_i\}_{i\in J}};B\right) \merge
  \left(\abranch{\pid p}{\{l_i:B'_i\}_{i\in K}};B'\right) = \\
  \abranch{\pid p}{\left(\{l_i:(B_i \merge B'_i)\}_{i\in J\cap K}\cup\{l_i:B_i\}_{i\in J \setminus K}\cup\{l_i:B'_i\}_{i\in K \setminus J}\right)};\left(B\merge B'\right)
\end{multline*}
The idea of merging comes from~\cite{CHY12}. Here, we extend it to general recursion, parametric
procedures, and process starts.
The complete definition of merging is given in the Appendix.
%
%
Merging allows the process that decides a conditional to inform other processes of its choice later on,
using selections. It is found repeatedly in most choreography models~\cite{CHY12,CDYP16,LGMZ08}.

Building on behaviour projection, we define how to project the set $\defs$ of procedure definitions.
We need to consider two main aspects.
The first is that, at runtime, the choreography may invoke a procedure $X$ multiple times, but potentially passing
a process $\pid r$ at different argument positions each time.
This means that $\pid r$ may be called to play different ``roles'' in the implementation of the procedure.
For this reason, we project the behaviour of each possible process parameter $\pid p$ as the local procedure
$X_{\pid p}$.
The second aspect is: depending on the role that $\pid r$ is called to play by the choreography, it needs
to know the names of the other processes that it is supposed to communicate with in the choreographic
procedure.
We deal with this by simply passing all arguments, which means that some of them may even be unknown to the
process invoking the procedure.
This is not a problem: we focus on typable choreographies, and typing ensures that those parameters are not
actually used in the projected procedure (so they act as ``dummies'').
We do this for clarity,
since it yields a simpler formulation of EPP. In practice,
we can annotate the EPP by analysing which parameters of each recursive definition are actually used in each of its
projections, and instantiating only those (see Appendix).
We can now define
\vspace{-3mm}
\[\epp{\defs}{}=\bigcup\left\{\epp\gendef{}\mid\gendef{}\in\defs\right\}\]
where, for $\wtil{\pid q^T} = \pid q_1^{T_1},\ldots,\pid q_n^{T_n}$,
\[\epp{\gendef}{} \ = \
\left\{ X_1(\pids q)={\epp{C}{\pid q_1}},\ldots,
X_n(\pids q)={\epp{C}{\pid q_n}} \right\} \, .
\]

\begin{definition}[EPP]
Given a procedural choreography $\langle\defs,C\rangle$ and a state $\sigma$, the endpoint projection
$\epp{\defs,C,\sigma}{}$ is the parallel composition of the processes in $C$ with all
definitions from $\defs$:
\[
\textstyle\epp{\defs,C,\sigma}{}  \ = \
\left\langle\epp\defs{},\epp{C,\sigma}{}\right\rangle \ = \
\left\langle\epp\defs{},\prod_{\pid p \in \pn(C)} \actor{\pid p}{\sigma(\pid p)}{\epp{C}{\pid p}}\right\rangle
\]
where $\epp{C,\sigma}{}$, the EPP of $C$ wrt state $\sigma$, is independent of $\defs$.
\end{definition}
Since the $\sigma$s are total, if $\epp{C,\sigma}{}$ is defined for some $\sigma$, then $\epp{C,\sigma'}{}$ is
defined also for all other $\sigma'$.
When $\epp{C,\sigma}{}=N$ is defined for any $\sigma$, we say that $C$ is \emph{projectable} and that $N$ is
the projection of $C,\sigma$.
Similar considerations apply to $\epp{\defs,C,\sigma}{}$.

\begin{example}
  The EPP of the choreography in Example~\ref{ex:mergesort} is given in Example~\ref{ex:ms_proc}.
\end{example}

\begin{example}
  We give a more sophisticated example involving merging and introductions: the projection of
  procedure \lstinline+par_download+ (Example~\ref{ex:parallel_downloader}) for process \lstinline+s+. (We omit the type annotations.)

\begin{lstlisting}
  par_download_ss(c,s) = c&{
    more: start s' |> (s?c; c?id; c?c'; download_ss<c',s'>);
          c!!s'; par_download_ss<c,s>
    end: 0                                 }
\end{lstlisting}
Observe that we call procedure \lstinline+download_ss+, as \lstinline+s'+ occurs in the position of that
procedure's formal argument \lstinline+s+.
\end{example}

\smallpar{Properties}
EPP guarantees correctness by construction: the code synthesised from a choreography follows it precisely.
\begin{theorem}[EPP Theorem]
\label{thm:epp}
If $\langle \defs, C \rangle$ is projectable, $\Gamma \seq \defs$, and
$\tjudge{\Gamma}{G}{C}{G^\ast}$, then, for all
$\sigma$:
\begin{itemize}
\item (Completeness) if $G, C,\sigma \tod G', C',\sigma'$,
  then $\epp{C,\sigma}{} \to_{\epp\defs{}} \succ \epp{C',\sigma'}{}$;
\item (Soundness) if $\epp{C,\sigma}{} \to_{\epp\defs{}} N$,
  then $G, C,\sigma \tod G', C',\sigma'$ for some $G'$, $\sigma'$ such that $\epp{C',\sigma'}{} \prec N$.
\end{itemize}
\end{theorem}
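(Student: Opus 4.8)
The plan is to prove both directions by structural induction on $C$, relying on the structural precongruences $\precongrd$ and $\precongrb$ to handle the runtime term $\nil;C$, procedure unfolding via $\rname{C}{Unfold}$/$\rname{P}{Unfold}$, and the garbage-collection rules. The key technical device is the standard \emph{pruning} relation $\prec$ (the converse of $\succ$): $N \prec M$ means $N$ is $M$ with some unreachable branches (introduced by merging) removed. The main lemmas to establish first are (a) pruning is a strong bisimulation up to $\precongrb$ (so reductions on the pruned network match reductions on the full one), (b) behaviour projection commutes with $\fatsemi$ in the sense that $\epp{C_1 \fatsemi C_2}{\pid r}$ merges correctly with $\epp{C_1}{\pid r}$ and $\epp{C_2}{\pid r}$ -- this is needed for the $\rname{C}{Cond}$ and $\rname{C}{Unfold}$ cases -- and (c) a merging-compatibility lemma: if $C \precongrd C'$ and both are projectable, then $\epp{C}{\pid r}$ and $\epp{C'}{\pid r}$ are related by $\precongrb$ composed with pruning, and likewise for the reverse inclusions produced by $\rname{C}{Struct}$.

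For \textbf{completeness}, I would case-split on the last rule used in $G,C,\sigma \tod G',C',\sigma'$. For $\rname{C}{Com}$: only $\pid p$ and $\pid q$ have non-trivial projections that change, $\pid p$'s behaviour is $\asend{\pid q}{e};\epp{C}{\pid p}$ and $\pid q$'s is $\arecv{\pid p}{f};\epp{C}{\pid q}$, so $\rname{P}{Com}$ fires and the resulting memory update $u = (f[\sigma(\pid q)/\pcont])(e[\sigma(\pid p)/\pcont])$ matches $\eval{f[\sigma(\pid q)/\pcont](v)}{w}$ exactly; the projections of all other processes are literally unchanged, so the result is $\epp{C',\sigma'}{}$ on the nose (no pruning needed here). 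The cases $\rname{C}{Sel}$, $\rname{C}{Tell}$, $\rname{C}{Start}$ are analogous, with $\rname{C}{Sel}$ requiring that the branching process's projection is a singleton branch $\abranch{\pid p}{\{l : \cdots\}}$ so $\rname{P}{Sel}$ applies, and $\rname{C}{Start}$ requiring $\alpha$-renaming of the freshly spawned name. For $\rname{C}{Cond}$: the deciding process $\pid p$ reduces by $\rname{P}{Cond}$ to $\epp{C_i}{\pid p};\epp{C}{\pid p}$, while every other process $\pid r$ currently runs $(\epp{C_1}{\pid r}\merge\epp{C_2}{\pid r});\epp{C}{\pid r}$; here pruning is genuinely needed, because the merged behaviour contains branches from the non-taken branch $C_{3-i}$, and I must show $\epp{C_i \fatsemi C}{\pid r} \prec (\epp{C_1}{\pid r}\merge\epp{C_2}{\pid r});\epp{C}{\pid r}$ using lemma (b) and properties of $\merge$. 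The $\rname{C}{Struct}$ case is discharged by lemma (c) together with the bisimulation property of pruning.

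For \textbf{soundness}, I would proceed by induction on the structure of $C$ (using $\precongrd$ to expose a program term at the head), examining which rule of $\tob$ could have produced $\epp{C,\sigma}{} \to_{\epp\defs{}} N$. The delicate point is that a reduction in the network might be ``enabled too early'' from the choreography's viewpoint -- but this cannot happen for communication/selection/introduction because those require matching dual actions at \emph{both} endpoints, which by the shape of behaviour projection forces the corresponding $\eta$ (up to $\precongrd$, after commuting independent actions via $\rname{C}{Eta-Eta}$ etc.) to be at the head of $C$; and it cannot happen for $\rname{P}{Start}$ and $\rname{P}{Cond}$ since those are purely local and correspond directly to $\rname{C}{Start}$/$\rname{C}{Cond}$. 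Typability (via Theorem~\ref{thm:type}) guarantees that the ``dummy'' extra arguments passed to projected procedures are never actually used, so $\rname{P}{Unfold}$ steps correspond to $\rname{C}{Unfold}$ steps. In each case one reconstructs $C'$, $G'$, $\sigma'$ and checks $\epp{C',\sigma'}{} \prec N$, where the pruning arises exactly as in the conditional case of completeness.

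I expect the main obstacle to be the interaction of \emph{merging with procedure unfolding}: when $\rname{C}{Unfold}$ replaces $\callP X{\pids p};C$ by $C_X[\pids p/\pids q]\fatsemi C$, one must show behaviour projection is stable, i.e. $\epp{\callP X{\pids p};C}{\pid r}$, which is $\callP{X_i}{\pids p};\epp{C}{\pid r}$ expanded via $\rname{P}{Unfold}$ to $\epp{C}{\pid q_i}[\pids p/\pids q]\fatsemi\epp{C}{\pid r}$, matches $\epp{C_X[\pids p/\pids q]\fatsemi C}{\pid r}$ modulo pruning and $\precongrb$ -- and this has to interact correctly with the $\merge$ operator because unfolding can occur underneath a conditional branch, where the two branches are being merged pointwise. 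Getting the precise statement of the ``projection commutes with $\fatsemi$ and substitution, up to pruning'' lemma right -- strong enough to carry the induction through the conditional case, yet true -- is where the real work lies; everything else is bookkeeping over the rule cases. The full details are deferred to the Appendix.
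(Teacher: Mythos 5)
Your proposal matches the paper's proof in essentially every respect: the paper likewise argues Completeness by induction on the derivation of $G,C,\sigma \tod G',C',\sigma'$ with a case analysis on the last rule applied, treats Soundness symmetrically, invokes pruning only where merging introduces extra branches, and defers the standard scaffolding (pruning as a bisimulation, compatibility of projection with $\fatsemi$ and $\precongrd$) to prior work on choreographic programming. The only point the paper singles out that you pass over as mere $\alpha$-renaming is the \rname{C}{Start} case, where one must additionally argue that the freshly spawned process name does not occur free in the remainder of the network --- a fact the paper derives from typability, since the typing rules forbid any process from communicating with the new process before being introduced to it.
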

Above, the \emph{pruning relation} $\prec$ from~\cite{CHY12} eliminates the branches introduced by the
merging operator $\merge$ when they are not needed anymore to follow the originating choreography (we write $N\succ N'$
when $N' \prec N$).
Pruning does not alter reductions, since the eliminated branches are never selected~\cite{CHY12}.
Combining Theorem~\ref{thm:epp} with Theorem~\ref{thm:type} we get that the projections of typable PC terms never
deadlock.

\begin{corollary}[Deadlock-freedom by construction]
\label{cor:df-ap}
Let $N = \epp{C,\sigma}{}$ for some $C$ and $\sigma$, and assume that $\tjudge{\Gamma}{G}{C}{G'}$ for some $\Gamma$
such that $\Gamma \vdash \defs$ and some $G$ and $G'$.
Then, either:
\begin{itemize}
\item $N \precongr_{\epp{\defs}{}} \emptyN$ ($N$ has terminated);
\item or there exists $N'$ such that $N \to_{\epp\defs{}} N'$ ($N$ can reduce).
\end{itemize}
\end{corollary}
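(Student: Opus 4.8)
The plan is to derive Corollary~\ref{cor:df-ap} directly by composing Theorem~\ref{thm:type} (deadlock freedom and subject reduction for choreographies) with the EPP Theorem~\ref{thm:epp}. First I would fix $C$, $\sigma$, $\Gamma$, $\defs$, $G$ and $G'$ as in the statement, so that $\tjudge{\Gamma}{G}{C}{G'}$ and $\Gamma\vdash\defs$; note that $N = \epp{C,\sigma}{}$ is by assumption defined, i.e.\ $C$ is projectable. I would then invoke Theorem~\ref{thm:type} on $C$: either $C\precongrd\nil$, or $G,C,\sigma\tod G_2,C',\sigma'$ for some $G_2$, $C'$, $\sigma'$ (the theorem gives this for \emph{every} $\sigma$, in particular our fixed one).

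In the first case, $C\precongrd\nil$, I would argue that behaviour projection sends $C$ to the empty network up to $\precongrb$. Concretely, $\precongrd$ can only turn $C$ into $\nil$ using the rules \rname{C}{End} and \rname{C}{Unfold} (plus the permutation rules, which preserve having no pending actions); projecting each process of $C$ then yields a behaviour that is $\precongrb$-equivalent to $\nil$, and by \rname{P}{AZero} and \rname{P}{NZero} the whole network is $\precongr_{\epp{\defs}{}}\emptyN$. This is essentially the observation that projection and pruning commute with the ``has terminated'' predicate; I would phrase it as: $C\precongrd\nil$ implies $\epp{C,\sigma}{}\precongr_{\epp\defs{}}\emptyN$.

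In the second case, I would apply the Completeness part of Theorem~\ref{thm:epp} to the reduction $G,C,\sigma\tod G_2,C',\sigma'$, obtaining $\epp{C,\sigma}{}\to_{\epp\defs{}}\succ\epp{C',\sigma'}{}$, i.e.\ there is $N'$ with $\epp{C,\sigma}{}\to_{\epp\defs{}} N'$ (and $N'\succ\epp{C',\sigma'}{}$, though we only need that $N$ reduces). Hence $N\to_{\epp\defs{}} N'$, which is the second alternative of the corollary. Chaining the two cases gives the dichotomy claimed.

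The only real subtlety — and the step I expect to need the most care — is the terminated case: making precise that projectability of $C$ together with $C\precongrd\nil$ forces the projected network to be $\precongrb$-below $\emptyN$. One has to check that none of the $\precongrd$ rules used to reach $\nil$ (in particular \rname{C}{Unfold}, which can unfold a guarded recursive call whose body is itself reducible to $\nil$) can leave a residual action at any process; this follows because \rname{C}{Unfold} is only applied finitely often along a $\precongrd$-derivation and each application replaces a call by a $\fatsemi$-composition whose projections are again handled inductively. Everything else is a direct appeal to the two cited theorems, so the corollary is immediate modulo this bookkeeping.
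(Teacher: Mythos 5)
Your proof is correct and follows exactly the route the paper intends: the corollary is stated there as an immediate consequence of combining Theorem~\ref{thm:type} (either $C\precongrd\nil$ or $C$ reduces) with the Completeness direction of Theorem~\ref{thm:epp}, which is precisely your two-case argument. Your extra care in the terminated case --- checking that $C\precongrd\nil$ forces $\epp{C,\sigma}{}\precongr_{\epp\defs{}}\emptyN$ via \rname{P}{AZero} and \rname{P}{NZero} --- is the right bookkeeping, which the paper leaves implicit.
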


\begin{remark}[Amendment]
A choreography in PC can only be unprojectable because of unmergeable subterms.
Thus, every choreography can be made projectable by only adding label selections.
This can be formalized in an amendment algorithm, as in other choreography
languages~\cite{LMZ13,ourstuff}, reported in the Appendix.
For example, the first (unprojectable) choreography in Remark~\ref{remark:selections} can be
amended to the projectable choreography presented at the end of the same remark.

In practice, the same argument as for inferring introduction terms (Remark~\ref{remark:tell}) applies.
Although amendment allows us to write choreographies without worrying about label selections, it is useful
to give the programmer the option to place them where most convenient.
For example, consider a process $\pid p$ making an internal choice that affects processes $\pid q$ and
$\pid r$.
If one of these two processes has to perform a slower computation in response to that choice, then it makes
sense for $\pid p$ to send a label selection to it first, and only afterwards to notify the other process.
\end{remark}

\section{Asynchrony}\label{sec:async}
%
We define an alternative semantics to PC and PP, whereby communication becomes asynchronous.
We show that the results stated in the previous sections also hold for the asynchronous case.

\subsection{Asynchronous PC (aPC)}

\smallpar{Syntax} In asynchronous PC (aPC), communications are not atomically executed anymore, but consist of
multiple actions.
In particular, the sender of a message can proceed in its execution without waiting for the receiver to receive such
message.

The intuition behind aPC is that the language is extended with new runtime
terms that capture this refinement of execution steps.\footnote{Recall that runtime terms are assumed never to
  be used by programmers (like term $\nil;C$ in PC). They are used only to represent runtime states.}
At runtime communications are expanded into multiple actions.
For example, a communication $\gencomf$ expands in $\genasend$ -- a send action from $\pid p$ -- and
$\asrecv{\pid p}{\pid q}{f}{x}$ -- a receive action by $\pid q$. The process subscripts at $\bullet$ are immaterial for
the semantics of aPC, but are useful for the definition of EPP. The tag $x$ specifies that, in the original choreography,
that message from $\pid p$ should reach that receive action at $\pid q$.
Executing $\genasend$ replaces $x$ in the corresponding receive action with the actual value $v$ computed from $e$ at $\pid p$,
yielding $\asrecv{\pid p}{\pid q}{f}{v}$; executing $\asrecv{\pid p}{\pid q}{f}{v}$ updates the state of $\pid q$.

We now define aPC formally.
The new terms are given in Figure~\ref{fig:apc_syntax}, and they are all runtime terms.
%
\begin{figure}[t]
\footnotesize
\begin{align*}
\eta ::={}& \ldots
\acspar \genasend
\acspar \genarecv
\acspar \genasels
\acspar \genaselr
\acspar \genatell
\acspar \genatold
\\[1ex]
\maybe v ::={}& x \acspar v
\qquad
\maybe l ::= x \acspar l
\qquad
\maybe{\pid p} ::= x \acspar \pid p
\end{align*}
\caption{Asynchronous PC, Syntax of New Runtime Terms.}
\label{fig:apc_syntax}
\end{figure}

%
The terms follow the intuition given for value communications, extended to selection and introduction.
Label selection expands in $\genasels$ and $\genaselr$, and introduction expands in $\genatell$ and two
$\genatold$ actions.
The tags in these actions are not essential for propagating values as above, but they make the treatment of
all new actions similar.
%
%
%
Process names for the new runtime terms are defined as follows, ignoring process names in tags and
in $\bullet$ subscripts.
\begin{align*}
\pn(\genasend)={}&\pn(\genasels)=\pn(\genatell)=\{\pid p\}\\
\pn(\genarecv)={}&\pn(\genaselr)=\pn(\genatold)=\{\pid q\}
\end{align*}

\smallpar{Semantics}
The interplay between asynchronous communications and name mobility requires the connection graph to be
directed in aPC, since the processes $\pid q$ and $\pid r$ in an introduction term $\gentell$ may receive
names at different times now.
An edge from $\pid p$ to $\pid q$ in $G$, denoted $\know{\pid q}{\pid p}$, now means that $\pid p$ knows
$\pid q$'s name -- so $\pid p$ is able to send messages to $\pid q$ or to listen for incoming messages from
$\pid q$.

The semantics for aPC includes:
\begin{itemize}
\item the rules from Figure~\ref{fig:apc_red_semantics}, which are the asynchronous counterpart to \rname{C}{Com} in PC,
  and similar rules for selection and introduction;
\item rules \rname{C}{Cond} and \rname{C}{Struct} from PC;
\item rule \rname{C}{Start} from PC, where $G \cup\! \{\pid p\! \leftrightarrow\! \pid q\}$ is the graph obtained by adding
  the two directed edges between $\pid p$ and~$\pid q$ to $G$;
\item the rules defining $\precongrd$ (Figure~\ref{fig:pc_precongr}), with $\eta$ now
  ranging over all communication actions in aPC;
\item the new rules for $\precongrd$ in Figure~\ref{fig:apc_precongr}.
\end{itemize}
\begin{figure}[t]
\footnotesize
\begin{eqnarray*}
&\infer[\rname{C}{Com-S}]
{
  G,\genasend;C,\sigma
  \ \tod \ 
  G, C[v/x], \sigma
}
{
  \know{\pid p}{\pid q} &
  \eval{e[\sigma(\pid p)/\pcont]}v
}
\\[1ex]
&\infer[\rname{C}{Com-R}]
{
  G,\asrecv{\pid p}{\pid q}fv;C,\sigma
  \ \tod \ 
  G, C, \sigma[\pid q \mapsto w]
}
{
\know{\pid q}{\pid p}
&
  \eval{f[\sigma(\pid q)/\pcont](v)}w
}
\end{eqnarray*}
\caption{Asynchronous PC, Semantics of New Runtime Terms.}
\label{fig:apc_red_semantics}
\end{figure}

\begin{figure}[t]
\footnotesize
\begin{eqnarray*}
&
\infer[\rname{C}{Com-U}]
{
  \gencomf\ \precongrd\ \genasend;\asrecv{\pid p}{\pid q}fx
}
{}
\\[1ex]
&\infer[\rname{C}{Sel-U}]
{
  \gensel\ \precongrd\ \genasels;\acom{\bullet_{\pid p}}{\pid q[l]}x
}
{}\\[1ex]
&\infer[\rname{C}{Tell-U}]
{
  \gentell\ \precongrd\ \genatell;\acom{\bullet_{\pid p}.\pid r}{\pid q}{x};\acom{\bullet_{\pid p}.\pid q}{\pid r}{y}
}
{}
\end{eqnarray*}
\caption{Asynchronous PC, Structural Precongruence (New Additional Rules).}
\label{fig:apc_precongr}
\end{figure}

%
The missing new rules are given in the Appendix.
By the Barendregt convention, tags introduced by
the rules in Figure~\ref{fig:apc_precongr}
are globally fresh.
Thus, they maintain the correspondence between the value being sent and that being received.

The key to the semantics of aPC lies in the new swaps allowed by $\precongrd$, due to the definition of $\pn$ for
the new terms.
This captures the concurrency that arises from asynchronous communications.

\begin{example}
Let
$C$ be $\gencomf; \com{\pid p.e'}{\pid r.f'}
$.
To execute $C$ in aPC, we expand it:
\[
C \quad \precongrd \quad \genasend;\ \asrecv{\pid p}{\pid q}{f'}{x};\ \assend{\pid p}{e'}{\pid r}{y};\ \asrecv{\pid
p}{\pid r}{f'}{y}
\]
Using $\precongrd$, we can swap the second term to the end (it shares no process names with the subsequent
terms according to $\pn$):
\[
C \quad \precongrd \quad \genasend;\ \assend{\pid p}{e'}{\pid r}{y};\ \asrecv{\pid
p}{\pid r}{f'}{y};\ \asrecv{\pid p}{\pid q}{f'}{x}
\]
Now $\pid p$ can send both messages immediately, and $\pid r$ can receive its message before~$\pid q$.
\end{example}

\begin{example}
Due to asynchrony, a process $\pid q$ can now send a message to
another process $\pid r$ that does not yet know about $\pid q$. However, $\pid r$ is still unable to receive
it before learning $\pid q$'s name, as expected~\cite{SW01}.
Assume that $\knows{\pid p}{\pid q}$ and $\knows{\pid p}{\pid r}$ and consider the choreography:
$
C\ \defeq \ \tell{\pid p}{\pid q}{\pid r};\ \com{\pid q.\pcont}{\pid r.f}
$.
We can unfold $C$ to
\[
C \ \ \precongrd \ \  \genatell;\ \acom{\bullet_{\pid p}.\pid r}{\pid q}{x};\ \acom{\bullet_{\pid p}.\pid q}{\pid
r}{y};\ \assend{\pid q}{\pcont}{\pid r}z;\ \asrecv{\pid q}{\pid r}{f}x
\]
and, by $\precongrd$, swap the third and fourth term
\[
C \ \ \precongrd \ \ \genatell;\ \acom{\bullet_{\pid p}.\pid r}{\pid q}{x};\ \assend{\pid q}{\pcont}{\pid r}z;\
\acom{\bullet_{\pid p}.\pid q}{\pid r}{y};\ \asrecv{\pid q}{\pid r}{f}x
\]
which corresponds to an execution path where $\pid q$ sends its contents to $\pid r$ before $\pid r$ is notified of
$\pid q$'s name.
The last two actions can not be swapped, so $\pid r$ cannot access $\pid q$'s message before receiving its name.
\end{example}

To restate Theorems~\ref{thm:type}--\ref{cor:inference} for aPC, we extend our type system to the new runtime terms.
In particular, we augment contexts $\Gamma$ to contain also type declarations for tags ($x:T$) and
assignments of labels or process identifiers to tags ($x=l$ or $x=\pid p$).
The type system of aPC contains the rules for PC (Figure~\ref{fig:pc_types}) together with the new
rules given in Figure~\ref{fig:apc_red_types} and analogous ones for selections and introductions.
\begin{figure}[t]
\footnotesize
\begin{eqnarray*}
&\infer[\rname{T}{Com-S}]
{
  \Gamma;G\seq\genasend;C \cseq G'
}
{
  \know{\pid p}{\pid q} &
  \Gamma\seq\pid p:T_{\pid p} &
  \pcont:T_{\pid p}\vdash_\types e:T &
  \Gamma\oplus(x:T);G\seq C \cseq G'
}
\\[1ex]
&\infer[\rname{T}{Com-RV}]
{
  \Gamma;G\seq\asrecv{\pid p}{\pid q}fx;C \cseq G'
}
{
  \know{\pid q}{\pid p} &
  \Gamma\seq\pid q:T_{\pid q} &
  \pcont:T_{\pid q}\vdash_\types f:T\to T_{\pid q} &
  \Gamma\oplus(x:T);G\seq C \cseq G'
}
\\[1ex]
&\infer[\rname{T}{Com-RT}]
{
  \Gamma;G\seq\asrecv{\pid p}{\pid q}fv;C \cseq G'
}
{
  \know{\pid q}{\pid p} &
  \Gamma\seq\pid q:T_{\pid q} &
  \vdash_\types v:T &
  \pcont:T_{\pid q}\vdash_\types f:T\to T_{\pid q} &
  \Gamma;G\seq C \cseq G'
}
\end{eqnarray*}
\caption{Asynchronous PC, Typing Rules (New Runtime Terms).}
\label{fig:apc_red_types}
\end{figure}

Furthermore, in rule \rname{T}{End} (from PC) we require that $\Gamma$ does not contain any assertions $x:T$ or $x=t$,
meaning that there should be no dangling actions.

The context $\Gamma\oplus(x:T)$ used in the rules is defined below; the interpretation of $\Gamma\oplus(x=t)$ is
similar. The operator $\oplus$ is partial. In particular, any judgement involving
$\Gamma\oplus(x:T)$ is false if $\Gamma$ already declares $x$ with a
different type.
\[
\Gamma\oplus(x:T) =
\begin{cases}
  \Gamma,x:T & \mbox{if $x$ is not declared in $\Gamma$} \\
  \Gamma\setminus\{x:T\} & \mbox{if $x:T\in\Gamma$}\\
  \mbox{undefined} & \mbox{otherwise}
\end{cases}
\]
Thus, typing a receiving action depends on whether or not it corresponds to a message that has already been sent
according to $\Gamma$.

If we consider only program choreographies (without runtime terms), then the typing systems for aPC and PC coincide
(as the connection graph is always symmetric); in particular, Theorems~\ref{thm:dec}--\ref{cor:inference}
automatically hold for program choreographies in aPC.
The proofs of these results, as well as that of Theorem~\ref{thm:type}, can be readily adapted to aPC, as the new
cases do not pose any difficulty.

PC and aPC enjoy the following operational correspondence.
\begin{theorem}
  \label{thm:pc-vs-apc}
  For any $C$ in PC, connection graph $G$, and state $\sigma$:
  \begin{itemize}
  \item If $G,C,\sigma\tod G',C',\sigma'$ (in PC), then $G,C,\sigma\tod^* G',C',\sigma'$ (in aPC).
  \item If $G,C,\sigma\tod G',C',\sigma'$ (in aPC), then there exist $C''$, $G''$ and $\sigma''$ such that
    $G,C,\sigma\tod G'',C'',\sigma''$ (in PC) and $G',C',\sigma'\tod^* G'',C'',\sigma''$ (in aPC).
  \end{itemize}
\end{theorem}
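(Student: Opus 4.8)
The plan is to prove the two directions of the correspondence separately, using the structural precongruence $\precongrd$ of aPC (in particular the new unfolding rules \rname{C}{Com-U}, \rname{C}{Sel-U}, \rname{C}{Tell-U} of Figure~\ref{fig:apc_precongr}) as the bridge between an atomic PC step and its simulating sequence of asynchronous steps. For the first item, I would proceed by induction on the derivation of $G,C,\sigma\tod G',C',\sigma'$ in PC. The interesting base cases are the communication-like rules \rname{C}{Com}, \rname{C}{Sel}, \rname{C}{Tell}: here I expand the leading action in aPC via the corresponding $\precongrd$ rule, then fire the send action (e.g.\ \rname{C}{Com-S}), then immediately fire the matching receive action (\rname{C}{Com-R}); the tag freshness granted by the Barendregt convention guarantees the receive picks up exactly the value just computed, and the two directed edges introduced by \rname{C}{Start} make the connection-graph side conditions $\know{\pid p}{\pid q}$ (and symmetrically) available in aPC. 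The rules \rname{C}{Start} and \rname{C}{Cond} are literally shared between the two semantics, so those cases are immediate; \rname{C}{Struct} is handled by the fact that $\precongrd$ for aPC includes all the PC rules, so congruence lifts through the inductive hypothesis.

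For the second item — simulating an aPC step by a PC step followed by a sequence of aPC steps — the subtlety is that a single aPC reduction may be only a \emph{fragment} of a PC communication (e.g.\ just the send half of a \rname{C}{Com}, or just one of the two name transmissions of a \rname{C}{Tell}). The approach is: given $G,C,\sigma\tod G',C',\sigma'$ in aPC, first analyze which rule was used. If it is \rname{C}{Cond} or \rname{C}{Start}, then the very same rule is a PC rule and we take $C''=C'$, $G''=G'$, $\sigma''=\sigma'$ with an empty trailing aPC sequence. If it is one of the asynchronous fragment rules \rname{C}{Com-S}, \rname{C}{Com-R}, \rname{C}{Sel-S/R}, \rname{C}{Tell-S}, \rname{C}{Tell-R}, then $C$ must have been obtained (up to $\precongrd$) by unfolding some atomic PC communication $\eta$ via the relevant \rname{}{*-U} rule and then possibly permuting the resulting fragments away from each other by \rname{C}{Eta-Eta}. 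I would argue that in the reduct $C'$ the remaining fragments of that same $\eta$ are still present and still fireable (the $\pn$ definition for the new runtime terms ensures the permutations never block the complementary fragment), so we can: (i) in PC, take the step $G,C,\sigma\tod G'',C'',\sigma''$ that performs the whole of $\eta$ atomically on $C$ with its fragments un-expanded; (ii) in aPC, from $G',C',\sigma'$, finish off the leftover fragments of $\eta$ one by one until we reach $C''$. Care is needed with the state $\sigma$: a \rname{C}{Com-S} fragment does not touch $\sigma$, so $\sigma'=\sigma$, and then the matching \rname{C}{Com-R} in the trailing sequence produces exactly $\sigma''$; conversely firing \rname{C}{Com-R} requires that the send already happened, so the send fragment is not in $C'$ and the atomic PC step on $C$ is still well-defined. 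The connection-graph bookkeeping for \rname{C}{Tell} needs the extra observation that the directed graph in aPC and the undirected graph in PC agree once both name transmissions have occurred.

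The main obstacle I anticipate is the bookkeeping around \emph{partial} expansions in the second direction: the aPC term $C$ at the moment of reduction is only related to a PC term up to $\precongrd$, and that relation may have already split several communications into fragments scattered through the term by \rname{C}{Eta-Eta}/\rname{C}{I-*} permutations. To make the argument rigorous I would prove a normalization lemma — every reachable aPC choreography is $\precongrd$-equivalent to one in which each ``dangling'' fragment is paired with a canonical position, or equivalently, define a ``re-fold'' map from aPC runtime terms back to PC terms that is the identity on program terms and collapses matched send/receive (and tell) fragments — and show this map commutes appropriately with the two reduction relations. The tag discipline and the $\pn$ definitions do the real work here, and the remaining cases (selection, the two-message introduction) are routine variations of the value-communication case. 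I would also note that since $\precongrd$ in aPC subsumes the PC rules, structural steps never cause trouble, and the induction goes through smoothly once the fragment-pairing lemma is in place.
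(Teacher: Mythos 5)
Your proposal is correct and takes essentially the same approach as the paper, whose entire proof of this theorem is the single line ``straightforward by case analysis on the possible transitions of $C$.'' Your elaboration --- expanding each atomic action via the unfolding rules of Figure~\ref{fig:apc_precongr} and firing the send followed by the matching receive(s) for the first item, and re-folding the dangling fragments after performing the whole action atomically in PC for the second --- fills in exactly the case analysis that the paper leaves implicit.
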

In particular, the sequence of messages from a process $\pid p$ to another process $\pid q$ is received by $\pid q$
in exactly the same order as it is sent by $\pid p$. Thus, our semantics matches an interpretation of
asynchrony supported by queues between processes, as we make evident in the asynchronous
version of PP below.

\subsection{Asynchronous PP (aPP)}

\smallpar{Syntax} The asynchronous variant of PP, aPP, is easier to define, as the underlying language is almost
unchanged. The only syntactic difference is that processes now have the form $\actor[\rho]{\pid p}vB$, where $\rho$ is a
queue of incoming messages. A message is a pair $\langle\pid q,m\rangle$, where $\pid q$ is the sender process
and $m$ is a value, label, or process identifier.

\smallpar{Semantics} The semantics of aPP consists of rules \rname{P}{Cond}, \rname{P}{Par}, \rname{P}{Cond},
and \rname{P}{Start} from PP (Figure~\ref{fig:pp_semantics}) together with the rules given in
Figure~\ref{fig:app_red_semantics} and analogous variants for selection and introduction (see Appendix).
In the reductum of \rname{P}{Start}, the newly created process is initialized with an empty queue.
Structural precongruence for aPP is defined exactly as for PP.
%
\begin{figure}[t]
\footnotesize
\begin{eqnarray*}
&\infer[\rname{P}{Com-S}]
{
  \actor[\rhop]{\pid p}{v}{\asend{\pid q}{e};B_{\pid p}} \parp \actor[\rhoq]{\pid q}w{B_{\pid q}}
  \ \tob \ 
  \actor[\rhop]{\pid p}{v}{B_{\pid p}} \parp \actor[\rho'_{\pid q}]{\pid q}w{B_{\pid q}}
}
{
  \rho'_{\pid q}=\rhoq\cdot\langle\pid p,e{[v/\pcont]}\rangle
}
\\[1ex]
&\infer[\rname{P}{Com-R}]
{
  \actor[\rhoq]{\pid q}{w}{\arecv{\pid p}{f};B}
  \ \tob \ 
  \actor[\rho'_{\pid q}]{\pid q}{u}{B}
}
{
  \rhoq\preceq\langle\pid p,v\rangle\cdot\rho'_{\pid q} &
  u = (f[w/\pcont])(v)
}
\end{eqnarray*}
\caption{Asynchronous Procedural Processes, Semantics (New Rules).}
\label{fig:app_red_semantics}
\end{figure}
%

%
We write $\rho \cdot \langle\pid q,m\rangle$ to denote the queue obtained by appending message $\langle\pid q,m\rangle$
to $\rho$, and $\langle\pid q,m\rangle \cdot \rho$ for the queue with $\langle\pid q,m\rangle$ at the head and $\rho$
as tail.
We simulate having one separate queue for each other process by allowing incoming messages from different
senders
to be exchanged, which we represent using the congruence $\rho\preceq\rho'$ defined by the rule
$\langle\pid p,m\rangle \cdot \langle\pid q,m'\rangle \preceq \langle\pid q,m'\rangle
\cdot \langle\pid p,m\rangle$ if $\pid p \neq \pid q$.

All behaviours of PP are valid also in aPP.
\begin{theorem}
  \label{thm:pp-vs-app}
  Let $N$ be a PP network.
  If $N\tob N'$ (in PP), then $N_{[]}\tob^*N'_{[]}$ (in aPP), where $N_{[]}$ denotes the
  asynchronous network obtained by adding an empty queue to each process.
\end{theorem}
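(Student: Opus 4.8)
The plan is to argue by induction on the derivation of $N\tob N'$ in PP, showing that each synchronous reduction step is simulated by a bounded number of asynchronous steps that start and finish with every queue empty. The key observation is that a synchronous communication is atomic: in the asynchronous simulation the transmitted datum (a value, a label, or a process name) is enqueued and then immediately dequeued, so every queue touched returns to the empty state. This invariant is exactly what lets the argument close under the inductive rules, since $N_{[]}$ has all queues empty by definition and we need the target to be precisely $N'_{[]}$.

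For the base cases, rule \rname{P}{Cond} is also a rule of aPP, so it yields $N_{[]}\tob N'_{[]}$ in one step without touching any queue; likewise \rname{P}{Start}, since the freshly spawned process is initialised with an empty queue both in $N'_{[]}$ and in the aPP reductum. Rule \rname{P}{Com} is simulated by \rname{P}{Com-S} followed by \rname{P}{Com-R}: the first step appends $\langle\pid p,e[v/\pcont]\rangle$ to $\pid q$'s (previously empty) queue, and, since a singleton queue is $\preceq$-equivalent to itself, the second step dequeues precisely that message, sets $\pid q$'s value to $(f[w/\pcont])(e[v/\pcont])$, and leaves an empty queue, matching the PP reductum. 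Rules \rname{P}{Sel} and \rname{P}{Tell} are treated identically using their asynchronous counterparts: one send step followed by one branching-receive step for selection, and one send step followed by two receive steps for the three-way introduction, in each case returning all involved queues to empty.

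For the inductive cases, rule \rname{P}{Par} is immediate: by the induction hypothesis $N_{[]}\tob^* N'_{[]}$, and since $(N\parp M)_{[]}=N_{[]}\parp M_{[]}$ and $\tob$ is closed under parallel composition, $(N\parp M)_{[]}\tob^*(N'\parp M)_{[]}$. For rule \rname{P}{Struct}, note that attaching an empty queue to every process commutes with $\precongrb$: its generating rules (\rname{P}{AZero}, \rname{P}{NZero}, \rname{P}{End}, \rname{P}{Unfold}, together with associativity and commutativity of $\parp$) act only on behaviours and on parallel structure, so $N\precongrb M$ implies $N_{[]}\precongrb M_{[]}$ and likewise $M'\precongrb N'$ implies $M'_{[]}\precongrb N'_{[]}$; combining this with the induction hypothesis $M_{[]}\tob^* M'_{[]}$ and rule \rname{P}{Struct} of aPP gives $N_{[]}\tob^* N'_{[]}$. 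The only delicate point, and it is bookkeeping rather than a genuine difficulty, is checking that each simulating sequence really restores every queue it uses to the empty state, so that the endpoint is exactly $N'_{[]}$ and not a network carrying leftover messages; this reduces to inspecting the few asynchronous communication rules, and the enqueue-then-immediately-dequeue pattern used above is all that is needed.
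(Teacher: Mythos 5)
Your proposal is correct and matches the paper's approach: the paper's own proof is just the one-line remark ``straightforward by case analysis on the possible transitions of $N$'', and your argument is precisely that case analysis carried out in detail, with the enqueue-then-immediately-dequeue simulation of each synchronous communication and the observation that the structural and parallel rules lift to networks with empty queues.
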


The converse is not true, so the relation between PP and aPP is not so strong as
in Theorem~\ref{thm:pc-vs-apc} for PC and aPC. This is because of deadlocks: in
PP, a communication action can only take place when the sender and receiver are
ready to synchronize; in aPP, a process can send a message to another process,
even though the intended recipient is not yet able to receive it.
For example, the network $\actor{\pid p}{v}{\asend{\pid q}\pcont}\parp\actor{\pid q}{w}{\nil}$
is deadlocked in PP but not in aPP.
%
%

\subsection{EndPoint Projection (EPP) from aPC to aPP}

Defining an EPP from aPC to aPP requires extending the previous definition with clauses for the new runtime terms.
The most interesting part is generating the local queues for each process: when compiling
$\com{\bullet_{\pid p}.v}{\pid q.f}$, for example, we need to add $\langle \pid p, v \rangle$ at the top of $\pid q$'s
queue.
However, if we follow this intuition, Theorem~\ref{thm:epp} no longer holds for arbitrary aPC choreographies.
This happens because we can write choreographies that use runtime terms in a ``wrong'' way, and these are not
correctly compiled to aPP.
Consider the choreography $C=\com{\pid p.1}{\pid q.f};\asrecv{\pid p}{\pid q}{f}{2}$.
If we naively project it for some $\sigma$, we obtain
$
\actor[{[]}]{\pid p}{\sigma(\pid p)}{\asend{\pid q}1}\parp\actor[\langle\pid p,2\rangle]{\pid q}{\sigma(\pid q)}{\arecv{\pid p}f;\arecv{\pid p}f}\,,
$
where $[]$ is the empty queue, and $\pid q$ will receive $2$ before $1$.

To avoid this undesired behaviour, we recall that runtime terms should not be used by
the programmer and restrict ourselves to \emph{well-formed} choreographies: those that can arise from executing a
choreography that does not contain runtime terms (i.e., a program).
\begin{definition}[Well-formedness]
  A choreography $C$ in aPC is \emph{well-formed} if $C^R\fatsemi C^{PC} \precongr C$ where:
  \begin{itemize}
  \item $\precongr$ is structural precongruence without rule \rname{C}{Unfold};
  \item $C^R$ only contains unmatched and instantiated receive actions: $\asrecv{\pid p}{\pid q}{f}v$,
    $\acom{\bullet_{\pid p}}{\pid q[l]}l$, $\acom{\bullet_{\pid p}.\pid r}{\pid q}{\pid r}$;
  \item $C^{PC}$ is a PC choreography.
  \end{itemize}
  A procedure $\gendef$ is \emph{well-formed} if $C$ does not contain any runtime actions.
  A procedural choreography $\langle \defs,C \rangle$ is well-formed if all procedures in $\defs$ are well-formed and
$C$ is well-formed.
\end{definition}
Well-formedness is decidable, since the set of choreographies equivalent up to $\precongr$ is decidable.
More efficiently, one can check that $C$ is well-formed by swapping all runtime actions to the beginning and
folding all paired send/receive terms.
Furthermore, choreography execution preserves well-formedness. The problematic choreography described above is not
well-formed.

\begin{definition}[EPP from aPC to aPP]
  Let $\langle\defs,C\rangle$ be in aPC and $\sigma$ be a state.
  The EPP of $\langle\defs,C\rangle$ and $\sigma$ is defined as
  \[
  \textstyle\epp{\defs,C,\sigma}{}  \ = \
  \left\langle\epp\defs{},\epp{C,\sigma}{}\right\rangle \ = \
  \left\langle\epp\defs{},\prod_{\pid p \in \pn(C)} \actor[\something{C}{\pid p}]{\pid p}{\sigma(\pid p)}{\epp{C}{\pid p}}\right\rangle
  \]
  where $\epp{C}{\pid p}$ is defined as for PC (Figure~\ref{fig:epp}) with the extra rules in Figure~\ref{fig:aepp_red}
  and $\something{C}{\pid p}$ is defined by the rules in Figure~\ref{fig:aepp_red_state}.
  The similar rules for selection and introduction are defined in the Appendix.
\end{definition}
\begin{figure}[t]
\footnotesize
\begin{eqnarray*}
&\epp{\genasend;C}{\pid r}=
  \begin{cases}
    \asend{\pid q}{e};\epp{C}{\pid r} & \text{if } \pid r = \pid p \\
    \epp{C}{\pid r} & \text{otherwise}
  \end{cases}
\qquad
\epp{\genarecv;C}{\pid r}=
  \begin{cases}
    \arecv{\pid p}{f};\epp{C}{\pid r} & \text{if } \pid r = \pid q \\
    \epp{C}{\pid r} & \text{otherwise}
  \end{cases}
\end{eqnarray*}
\caption{Asynchronous PC, Behaviour Projection (New Rules).}
\label{fig:aepp_red}
\end{figure}

\begin{figure}[t]
\footnotesize
\begin{eqnarray*}
&\something{\asrecv{\pid p}{\pid q}fv;C}{\pid r}=
  \begin{cases}
    \langle\pid p,v\rangle\cdot\something{C}{\pid r} & \text{if } \pid r = \pid q \\
    \something{C}{\pid r} & \text{otherwise}
  \end{cases}
\qquad
\begin{array}l
\something{\gencond;C}{\pid r} = \something{C_1}{\pid r}\cdot\something{C}{\pid r}\\[1ex]
\something{\eta;C}{\pid r} = \something{I;C}{\pid r} = \something{C}{\pid r}
\end{array}
\end{eqnarray*}
\caption{Asynchronous PC, State Projection.}
\label{fig:aepp_red_state}
\end{figure}

In the last case of the definition of $\something{C}{\pid p}$ (bottom-right), $\eta$ and $I$ range over all cases that
are not covered previously.
The rule for the conditional may seem a bit surprising: in the case of projectable choreographies, mergeability
and well-formedness together imply that unmatched receive actions at a process must occur in the same order in both
branches.
Note that projection of choreographies with ill-formed runtime terms (e.g.,
$\acom{\bullet_{\pid p}}{\pid q[l]}{l'}$ with $l\neq l'$, which cannot occur in a well-formed choreography) is
not defined.

With this definition, we can restate Theorem~\ref{thm:epp} for aPC and aPP.
\begin{theorem}[Asynchronous EPP Theorem]
  \label{thm:aepp}
  If $\langle \defs, C \rangle$ in aPC is projectable and well-formed,
  $\Gamma \seq \defs$, and $\tjudge{\Gamma}{G}{C}{G^\ast}$, then, for all
  $\sigma$:
  \begin{itemize}
  \item (Completeness) if $G, C,\sigma \tod G', C',\sigma'$,
    then $\epp{C,\sigma}{} \to_{\epp\defs{}} \succ \epp{C',\sigma'}{}$;
  \item (Soundness) if $\epp{C,\sigma}{} \to_{\epp\defs{}} N$,
    then $G, C,\sigma \tod G', C',\sigma'$ for some $G'$, $\sigma'$ such that $\epp{C',\sigma'}{} \prec N$.
  \end{itemize}
\end{theorem}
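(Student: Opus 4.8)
The plan is to follow closely the proof of Theorem~\ref{thm:epp}, isolating the genuinely new ingredients of the asynchronous setting: the queue projections $\something{C}{\pid p}$, the new runtime terms of aPC, and the need to carry \emph{well-formedness} along reductions. As there, I would prove completeness and soundness separately, maintaining throughout both directions two invariants: that the current choreography is well-typed (so that the connections $\know{\pid p}{\pid q}$ demanded by the asynchronous reduction rules of Figure~\ref{fig:apc_red_semantics} are available — this follows from subject reduction, Theorem~\ref{thm:type}, adapted to aPC as noted in the text), and that it is well-formed (so that all projections, including queues, are defined — the ill-formed choreography $\com{\pid p.1}{\pid q.f};\asrecv{\pid p}{\pid q}{f}{2}$ discussed before the definition is precisely what this rules out, and well-formedness is preserved by $\tod$).

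First I would establish three auxiliary lemmas. \emph{(1) Projection respects $\precongrd$ up to pruning:} if $C_1\precongrd C_2$ then for every process $\pid p$, $\epp{C_1}{\pid p}$ and $\epp{C_2}{\pid p}$ coincide up to the extra merge-branches removed by $\prec$, and $\something{C_1}{\pid p}$ and $\something{C_2}{\pid p}$ agree up to the queue congruence $\preceq$. The cases for the PC rules are inherited from Theorem~\ref{thm:epp}; the new swap rules and the new $\precongrd$ cases of Figure~\ref{fig:apc_precongr} are easy, since each new runtime term projects either to a single local action at its unique process in $\pn$ or is skipped entirely, so swapping two such terms with disjoint $\pn$ literally leaves every per-process behaviour unchanged. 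Compatibility of $\epp{-}{\pid p}$ and of $\something{-}{\pid p}$ with $\fatsemi$, and the unfolding rule \rname{C}{Unfold}, are handled as in the synchronous case. \emph{(2) Queue-update lemma:} if a single asynchronous step rewrites $C$ to $C'$ — e.g.\ \rname{C}{Com-S} removes a send $\genasend$ and replaces the matching tag $x$ by $v$ — then $\something{C'}{\pid r}$ is obtained from $\something{C}{\pid r}$ exactly by the queue manipulation of the corresponding aPP rule (Figure~\ref{fig:app_red_semantics}); here the positional definition of $\something{}$ on an instantiated receive together with well-formedness (which keeps already-instantiated receives as a prefix ordered consistently with the originating sends) matches appending $\langle\pid p,e[v/\pcont]\rangle$ to the tail of $\pid p$'s portion of $\pid q$'s queue, modulo $\preceq$ for interleavings of messages from distinct senders. \emph{(3) Well-definedness of $\something{}{}$:} for projectable well-formed $C$, $\something{C}{\pid p}$ is defined, the only non-obvious point being the conditional, where mergeability of the two branches forces the unmatched receive actions at $\pid p$ to occur in the same order in both, as remarked after Figure~\ref{fig:aepp_red_state}.

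With these in hand, \emph{completeness} is an induction on the derivation of $G,C,\sigma\tod G',C',\sigma'$. For \rname{C}{Cond}, \rname{C}{Start} and \rname{C}{Struct} the argument is verbatim the synchronous one (using Lemma~(1) for \rname{C}{Struct}). For the new base rules \rname{C}{Com-S}, \rname{C}{Com-R} and their selection/introduction analogues one computes the projections of the redex: \rname{C}{Com-S} projects at $\pid p$ to a send $\asend{\pid q}{e};\cdots$ (using Figure~\ref{fig:aepp_red}), whose firing in aPP appends $\langle\pid p,e[v/\pcont]\rangle$ to $\pid q$'s queue, which by Lemma~(2) is exactly $\something{C'}{\pid q}$, while all other processes are untouched; hence $\epp{C,\sigma}{}\to_{\epp\defs{}}\epp{C',\sigma'}{}$ with no pruning needed. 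Symmetrically \rname{C}{Com-R} consumes the head message of $\something{C}{\pid q}$ and updates $\pid q$'s state through $f$. \emph{Soundness} is an induction on the derivation of $\epp{C,\sigma}{}\to_{\epp\defs{}}N$: identify the process(es) that moved and the local action performed, use $\precongrd$ via Lemma~(1) to float the responsible choreography action to the head of $C$, fire it, and verify via Lemmas~(1)--(2) that the result equals $N$ up to $\prec$; pruning enters exactly as in Theorem~\ref{thm:epp}, discarding merge-branches that are no longer reachable.

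The main obstacle is the bookkeeping around the queues in the presence of merging and of the much finer concurrency of aPC: showing that $\something{C}{\pid p}$ stays well-defined and evolves correctly through conditionals — where the two branches may instantiate receives in orders that must be reconciled by mergeability — and that the head/tail discipline of aPP queues is faithfully mirrored by the positional definition of $\something{}$, so that the correspondence is only maintained up to $\preceq$ and $\prec$. In the soundness direction there is a second delicate point: an aPP send has no synchronising partner, so unlike the synchronous case one cannot read the matching choreographic step off a two-party synchronisation; one must instead argue, from well-formedness and typing, that the send action heading $\epp{C}{\pid p}$ can be floated to the front of $C$ (the needed connection edge being supplied by typability) and fired there.
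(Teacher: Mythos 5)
Your proposal is correct and follows essentially the same route as the paper, which proves the synchronous EPP theorem by induction on the reduction derivation (with careful treatment of connections) and then asserts that the asynchronous version is obtained by restating it with the new runtime terms, queue projections, and well-formedness; the paper's appendix does not spell out a separate proof of Theorem~\ref{thm:aepp}. Your three auxiliary lemmas --- invariance of projection under $\precongrd$ up to pruning, the queue-update correspondence modulo $\preceq$, and well-definedness of $\something{C}{\pid p}$ at conditionals via mergeability --- are exactly the points the paper leaves implicit (the last one is even flagged in the remark after Figure~\ref{fig:aepp_red_state}), so your writeup is a faithful and somewhat more explicit rendering of the intended argument.
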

As a consequence, Corollary~\ref{cor:df-ap} applies also to the asynchronous
case: the processes projected from aPC into aPP are deadlock-free.




\section{Language Extensions}\label{sec:params}
%
We sketch two simple extensions that enhance the expressivity of (synchronous and asynchronous) PC without affecting its
underlying theory: they only require simple modifications, and all the
results from the previous sections still hold.

\smallpar{Parameter Lists}
Our first extension allows procedure parameters to be
lists of processes, on which procedures can then act uniformly by recursion.
Formally, a procedure parameter can now be either a process or a list of processes all with the same type.
We restrict the usage of such lists to the arguments of procedure calls; however, they may be manipulated by means
of pure total functions that take a list as their only argument.
For example, if procedure $X$ takes a list $P$ as an argument, then $X$'s body may call other procedures on $P$,
$\mathsf{head}(P)$ or $\mathsf{tail}(P)$, but it may not include a communication involving $\mathsf{head}(P)$ and another
process.

The semantics is extended with a new garbage collection rule: $\callP X{[]}\precongrd\nil$, i.e., calling a
procedure with an empty list is equivalent to termination ($\nil$).
In order to type procedures that take process lists as arguments, in a typing $X\!:\!G\cseq G'$ we
allow the vertices of $G$ and $G'$ to be not only processes, but also (formal) lists.
This requires adjusting the premise of \rname{T}{Call} to the case when some of the arguments are lists. For when a
concrete list is given as argument, we update
the premise $G_X[\pids p/\pids q]\subseteq G$ to check
that: if $G_X$ contains an edge between two process lists $P$ and $Q$, then $G$ must contain
edges between $\pid p$ and $\pid q$ for each $\pid p$ in the list supplied for $P$ and each $\pid q$ in the
list supplied for $Q$.
The interpretation of an edge between an argument process and an argument list is similar.

We extend EPP to this enriched language by similarly extending the syntax of PP and adding the rule $\actor{\pid q}v{\gencallP}\precongrd\actor{\pid q}v\nil$
if $\pid q\not\in\pids p$.
Although we cannot call procedures with parameter lists based on the result of a conditional without changing PP
(in particular, due to the definition of merge), in this language we can write more sophisticated
examples.
In~\cite{CM16a}, we show how to implement Quicksort, Gaussian elimination and the Fast Fourier Transform in PC,
and discuss how implicit parallelism yields efficient implementations of these algorithms.




\smallpar{Procedures with Holes}
The second extension allows the presence of holes in procedure definitions.
Holes are denoted $\Box_h$, where $h$ is a unique name for the hole in the procedure.
Procedure calls are then allowed to specify a choreography to inject in each hole by means of the syntax
$\gencallP\ \m{with}\ [h_i \mapsto C_i]_i$ where each $C_i$ is a choreography (in particular, it
can be another procedure call) with $\pn(C_i) \subseteq \pids p$. We assume unspecified holes to be filled with $\nil$.

The typing of procedure definitions is extended to
$X(\wtil{\pid q^T}):G\cseq G' \ \m{with}\ \{h_i:G_i\}_i$, where $G_i$ is the guaranteed connection graph
when reaching hole $\Box_{h_i}$.
The remaining adaptations (syntax and semantics of PC and PP, and EPP) are trivial.

\begin{example}
Holes enable higher-order composition of choreographies. Consider the following
\lstinline+exchange+ procedure:
\begin{lstlisting}
exchange(p,q,r,s) = p.item -> q.get; \Box_h1; r.item -> s.get; \Box_h2
\end{lstlisting}
We can either leave the holes empty, or use them to perform some extra actions,
e.g., inserting a payment:
\begin{lstlisting}
exchange<p,q,r,s> with h1 \mapsto pay<p,q>, h2 \mapsto pay<r,s>
\end{lstlisting}
We can even insert new causal dependencies between the two communications in
\lstinline+exchange+, e.g., using \lstinline+q+ as a broker:
\begin{lstlisting}
exchange<p,q,r,s> with h1 \mapsto q.item -> r.get
\end{lstlisting}
\end{example}

Using holes, we can also define a general-purpose iterator (see Appendix).


\section{Discussion, Future and Related Work}\label{sec:related}
%

\smallpar{Design of PC}
PC is part of a long-term research effort motivated by empirical studies on concurrency bugs, such as 
the taxonomies~\cite{LPSZ08} and~\cite{LLLG16}. They distinguish between deadlock and non-deadlock bugs.
In distributed systems, both are mostly caused by the wrong composition of
different protocols, rather than the wrong implementation of single protocols.
Deadlocks are prevented in PC (Corollary~\ref{cor:df-ap}). Non-deadlock bugs are
more subtle, and typically due to the informal specification (or lack thereof) of the
programmer's intentions wrt protocol composition, which leads to unexpected bad
executions. In PC, the programmer's intentions are formalised as a choreography
and the composition of procedures in code. Synthesis (EPP) guarantees that these
intentions are respected (Theorem~\ref{thm:epp}). Typical non-deadlock bugs
include lack of causality between messages, or between an
internal computation and a message~\cite{LLLG16}. For example, a client may forget to wait for an authentication 
protocol to complete and include the resulting security key in its request to a server in
another protocol. We can formalise the correct scenario in PC using our
\lstinline+auth+ procedure from Example~\ref{ex:fine-grained} (\lstinline+c+ is
the client and \lstinline+s+ is the server):
\begin{lstlisting}
auth<c,a,r,l>;
if r.ok then r -> c,s[ok];
             r.key -> c.addKey; request<c,s>
        else r -> c,s[ko]
\end{lstlisting}
%

There is still ground to be covered to support all the features (e.g., pre-emption~\cite{LLLG16}) used in some 
of the programs 
included in these empirical studies. Thus, this kind of studies will also
be important for future developments of PC.

\smallpar{Multiparty Session Types} In Multiparty Session Types (MPST)~\cite{HYC16}, global types are 
specifications of single protocols, used for verifying the code of
manually-written implementations in process models.
Global types are similar to a simplified fragment of PC, obtained (among other restrictions) by replacing expressions 
and functions with constants (representing types), removing process creation (the processes are fixed), and restricting 
recursion to parameterless tail recursion.

MPST leaves the composition of protocols as an implementation detail to the implementors of processes.
As a consequence, protocol compositions may lead to deadlocks, unlike in PC. We illustrate this key difference with 
an example, using our syntax. Consider the
protocols $X(\pid r,\pid s) = \com{\pid r.e}{\pid s.f}$ and $Y(\pid r',\pid s') = \com{\pid r'.e'}{\pid 
s'.f'}$, and assume that we want to compose their instantiations $\callP{X}{\pid p, \pid q}$ and $\callP{Y}{\pid q,\pid 
p}$.
According to MPST, a valid implementation (paraphrased in PP) is
$\actor{\pid p}{v}{\arecv{\pid q}{f'}; \asend{\pid q}{e}} \parp \actor{\pid q}{v}{\arecv{\pid p}{f}; \asend{\pid 
p}{e'}}$. Although this network is obviously deadlocked, this is not detected by MPST because the interleaving of the 
two protocols is not checked. In PC, we can only obtain correct implementations, because compositions are 
defined at the level of choreographies, e.g., $\callP{X}{\pid p, \pid q}; \callP{Y}{\pid q,\pid 
p}$ or $\callP{Y}{\pid q,\pid p}; \callP{X}{\pid p, \pid q}$.

The authors of~\cite{CDYP16} augment MPST with a type system that prevents protocol compositions that may lead 
to deadlocks, but their approach is too restrictive when actions from different protocols
are interleaved. Consider the following example, reported as untypable but deadlock-free in~\cite{CDYP16} (adapted 
to PC).
\begin{lstlisting}
X(p,q) = p.* -> q; \Box_h1; q.* -> p; p.* -> q; \Box_h2

X<p,q> with h1 \mapsto q.* -> p; p.* -> q, h2 \mapsto q.* -> p
\end{lstlisting}
This code is both typable and projectable in PC, yielding a correct implementation.

Another technique for deadlock-freedom in MPST and similar models is to restrict connections
among processes participating in different protocols to form a tree~\cite{CD10,CLMSW16,CMSY15}.
PC is more expressive, since connections can form an arbitrary graph.

Recent work investigated how to extend MPST to capture
protocols where the number of participants in a session is fixed only 
at runtime~\cite{YDBH10}, or can grow during execution~\cite{DY11}. These results use ad-hoc primitives 
and ``middleware'' terms in the process model, e.g., for tracking the number of participants in a
session~\cite{DY11}. Such machinery is not needed in PC.
The authors of~\cite{DH12} propose a theory of nested MPST, which
partially recalls our notion of parametric procedures. Differently from PC, procedures are
invoked by a coordinator (requiring extra communications), and 
compositions of such nested types can deadlock.


\smallpar{Choreographic Programming} None of the examples in this
work can be written in previous models for choreographic programming, which
lack full procedural abstraction. As far as we know, this is also
the first work exploring choreographic programming in a
general setting, instead of in the context of web services~\cite{CHY12,CM13,wscdl}.

The models in~\cite{CHY12,CM13} are based on choreographic 
programming and thus support deadlock-freedom even when multiple protocols are used.
However, these models do not support general recursion and parametric procedures, both of which are novel in this work.
Composition and reusability in these models are thus limited: procedures cannot have 
continuations; there can only be a limited number of protocols running at 
any time (modulo dangling asynchronous actions); and the process names used in a procedure are statically determined, 
preventing reuse.
In PC, all these limitations are lifted.
%
%

Asynchronous communications in choreographic programming were addressed in~\cite{CM13} using an ad-hoc transition rule.
Our approach in aPC, which builds on implicit parallelism, is simpler and yields an EPP Theorem (Theorem~\ref{thm:aepp}) 
that states a lockstep correspondence between the programmer's choreography and the
synthesised implementation. Instead, in~\cite{CM13}, EPP guarantees only a weaker
and more complex confluence correspondence.

Holes in PC recall adaptation scopes for choreographies in~\cite{PGGLM15}, used to pinpoint 
where runtime adaptation of code can take place. However, that work deals only with choreographies with
a finite number of processes.


A major distinguishing feature of PC is the management of connections among
processes, using graphs that are manipulated at runtime. Channel delegation in choreographies~\cite{CM13} behaves in a 
similar way, but it is less expressive: a process that introduces two other 
processes cannot communicate with them thenceforth without requiring a new auxiliary channel.

\smallpar{Other paradigms} Reo is another approach that, like PC, disentangles computation 
from communication and focuses on the programming of interactions, rather than the actions that implement them. In Reo, 
protocols are obtained by composing (graphical) connectors, which mediate 
communications among components~\cite{A04}. As such, Reo produces rather different and more abstract artifacts than PC, 
which require additional machinery to be implemented (a compiler from Reo to Java uses constraint 
automata~\cite{ABRS04}).
Programs in PC can be implemented rather directly via standard send and receive primitives (which is typical of
choreographies~\cite{CM13}). PC also supports dynamic creation of 
processes and guarantees deadlock-freedom by construction, through its simple type system and EPP.
Reo supports many interesting connectors and compositions (e.g., multicast), which
would be interesting to investigate in PC.

In a different direction, there have been proposals of integrating communication protocols (given as session 
types) with functional programming~\cite{VGR06}. These approaches allow parallel computations of functions to 
synchronise and exchange values. As for MPST, protocol compositions are left to the programmer, and 
the same limitations discussed for MPST apply.

\smallpar{Local concurrency}
Some choreography models support an explicit parallel operator, e.g., $C \parp C'$.
We chose not to present it here to keep our model simple, as most behaviours of $C \parp C'$ are captured 
by implicit parallelism and asynchrony in PC. For example, $\callP{X}{\pid p,\pid q} \parp \callP{Y}{\pid r,\pid s}$ 
is equivalent to $\callP{X}{\pid p,\pid q}; \callP{Y}{\pid r,\pid s}$ in PC (Rule~\rname{C}{I-I}, 
Figure~\ref{fig:pc_precongr}). However, when some processes names are shared
in the parallel composition, then the 
parallel operator is more expressive, because it 
allows for \emph{local concurrency}.
As an example, consider
$\callP{X}{\pid p,\pid q} \parp \callP{Y}{\pid p,\pid s}$:
here, $\pid p$ is doing more actions concurrently.
In PC, we can encode this by spawning new processes, e.g.~our example would become
 $\start{\pid p}{\pid r}; \tell{\pid p}{\pid r}{\pid s}; \callP{X}{\pid p,\pid q}; \callP{Y}{\pid r,\pid s}$.
The price we pay is the extra communications to introduce $\pid r$ ($\pid p$'s new 
process) to $\pid s$ (plus, we may need to send $\pid p$'s value to $\pid r$, and maybe send $\pid r$'s value to $\pid p$ after $Y$ terminates).

If local concurrency in PC is desired, there are different options for introducing it.
A straightforward one is to adapt the parallel operator from~\cite{CHY12,HYC16}, which, however,
abstracts from how local concurrency is implemented.
A more interesting 
strategy would be to support named local threads in processes. Threads would share the process location, such that 
they can receive at the same name without the need to be introduced.
Then, implicit parallelism would 
allow for swapping actions at threads (not just processes) with distinct names. Threads at the 
same process should not receive from the same sender process at the same time, to avoid 
races, which can be prevented via typing (cf.~\cite{CHY12}).
Integration with separation logic~\cite{R02} would also be interesting, to allow threads also to safely share data.
%

\smallpar{Sessions, Connections, and Mobility} All recent theories based on
session types (e.g.,~\cite{CHY12,CM13,CMSY15,CDYP16,HYC16}) assume that all processes in a session (a protocol 
execution) have
a private full-duplex channel to communicate with each other in that session.
This amounts to requiring that the graph of
connections among processes in a protocol is always complete. This assumption is
not necessary in PC, since our semantics and typing require only the processes
that actually communicate to be connected. This makes PC a suitable model
for reasoning about different kinds of network topologies. In the future, it
would be interesting to see whether our type system and connection graphs can be
used to enforce pre-defined network structures (e.g., hypercubes or
butterflies), making PC a candidate for the programming of choreographies that
account for hardware restrictions.

Another important aspect of sessions is that each new protocol execution
requires the creation of a new session (and all the connections among its
participants!). By contrast, protocol executions (running procedures) in PC can
reuse all available connections --
allowing for more efficient implementations.
%
We use this feature in our parallel downloader example
(Example~\ref{ex:parallel_downloader}), where \lstinline+c'+ can still
communicate with \lstinline+c+ even after the latter introduced it to
\lstinline+s'+. Also, aPC supports for the first time the asynchronous
establishment of new connections among processes.

The standard results of communication safety found in session-typed calculi
can be derived from our EPP Theorem (Theorem~\ref{thm:epp}), as discussed
in~\cite{CM13}.

\smallpar{Faults}
Some interesting distributed bugs~\cite{LLLG16} are triggered by unexpected fault conditions
at nodes, making such faults an immediate candidate for
the future developments of PC. Useful inspiration to this aim may be provided by~\cite{CGY16}.

We have also abstracted from faults and divergence of internal computations: in
PC, we assume that all internal computations terminate successfully. If we relax
these conditions, deadlock-freedom can still be achieved simply by using
timeouts and propagating faults through communications.

\smallpar{Integration} Sometimes, the code synthesised from a choreography has
to be used in combination with legacy process code. For example, we may want to
use an existing authentication server in our \lstinline+auth+ procedure in
Example~\ref{ex:fine-grained}. This issue is addressed in~\cite{MY13} by using a type
theory based on sessions. We leave an adaptation of this idea in our setting to
future work.

\clearpage

\bibliographystyle{plain}
\bibliography{biblio}

\clearpage\appendix
\section{Detailed definitions and proofs}
We report on definitions and proofs that were omitted from the main part of this paper, as well
as some more detailed examples.

\subsection{Type checking and type inference}
We start with some technical lemmas about typing.

\begin{lemma}[Monotonicity]
  \label{lem:mon}
  Let $\Gamma$ and $\Gamma'$ be typing contexts with $\Gamma\subseteq\Gamma'$, $G_1$,
  $G_1'$ and $G$ be connection graphs such that $G_1\subseteq G$, and $C$ be a
  choreography.
  If $\tjudge\Gamma{G_1}C{G'_1}$, then $\tjudge{\Gamma'}{G}C{G\cup G'_1}$.
\end{lemma}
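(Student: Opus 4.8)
The plan is to prove the statement by induction on the derivation of $\tjudge{\Gamma}{G_1}{C}{G'_1}$ (equivalently, on the structure of $C$), with a case analysis on the last typing rule applied. First I would establish two preliminary observations. One is a standard weakening property: each side-condition in the rules of Figure~\ref{fig:pc_types} that mentions $\Gamma$ --- the process typings $\Gamma\seq\pid p:T$, the expression and function typings, and the procedure lookup $\Gamma\seq X(\wtil{\pid q^T}):G_X\cseq G'_X$ --- remains derivable when $\Gamma$ is replaced by a larger context $\Gamma'\supseteq\Gamma$. The other, which I would prove by a short independent induction, is that \emph{typing never shrinks the connection graph}: if $\tjudge{\Delta}{H}{D}{H'}$ then $H\subseteq H'$. (For \rname{T}{Cond} one combines the induction hypotheses on the two branches to get $H\subseteq H_1\cap H_2$; in \rname{T}{Start}, \rname{T}{Tell}, \rname{T}{Call} the continuation is typed from a graph that already contains $H$.) I will call this second fact $(\star)$ and use it freely.

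With these in place, the cases of the rules that leave the graph unchanged --- \rname{T}{End}, \rname{T}{EndSeq}, \rname{T}{Sel}, \rname{T}{Com} --- are routine: the connectivity premises such as $\knows{\pid p}{\pid q}$ hold in $G$ because $G_1\subseteq G$, the remaining premises hold under $\Gamma'$ by weakening, and the induction hypothesis applied to the continuation turns $\tjudge{\Gamma}{G_1}{C_0}{G'_1}$ into $\tjudge{\Gamma'}{G}{C_0}{G\cup G'_1}$; re-applying the rule gives the claim. For \rname{T}{End} in particular $G'_1=G_1$, so $G\cup G'_1=G$ since $G_1\subseteq G$.

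The two cases that require genuine care are the conditional and the graph-producing rules. For \rname{T}{Cond} with $C=(\gencondE);C_0$, the induction hypothesis on each branch gives $\tjudge{\Gamma'}{G}{C_i}{G\cup G_i}$, so in the reconstructed derivation the continuation $C_0$ must be typed from $(G\cup G_1)\cap(G\cup G_2)$; here I would invoke distributivity of union over intersection, $(G\cup G_1)\cap(G\cup G_2)=G\cup(G_1\cap G_2)$, apply the induction hypothesis to $C_0$ (legitimate since $G_1\cap G_2\subseteq G\cup(G_1\cap G_2)$), and use $(\star)$ --- which gives $G_1\cap G_2\subseteq G'_1$ --- to simplify the resulting output graph to $G\cup G'_1$. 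For \rname{T}{Start}, \rname{T}{Tell}, \rname{T}{Call}, the induction hypothesis on the continuation yields an output graph of the shape $G\cup E\cup G'_1$, where $E$ is the set of edges the rule freshly introduces ($\{\pid p\leftrightarrow\pid q\}$ for \rname{T}{Start}, $\{\pid q\leftrightarrow\pid r\}$ for \rname{T}{Tell}, $G'_X[\pids p/\pids q]$ for \rname{T}{Call}); by $(\star)$ applied to the original premise for the continuation, $E\subseteq G'_1$, so $G\cup E\cup G'_1=G\cup G'_1$ and the rule re-applies with exactly the desired conclusion. In \rname{T}{Start} one also uses the Barendregt convention, as in the original rule, to keep the bound name $\pid q$ fresh for $\Gamma'$.

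I expect the only delicate point to be this last bit of bookkeeping: fact $(\star)$ is precisely what lets the extra edges introduced by the induction hypothesis in the \rname{T}{Start}/\rname{T}{Tell}/\rname{T}{Call} cases (and the intersection in \rname{T}{Cond}) be reabsorbed into $G'_1$, so that the conclusion is the stated $G\cup G'_1$ and not something strictly larger. Everything else is a direct rule-by-rule reconstruction.
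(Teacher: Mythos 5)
Your proof is correct and takes the same route as the paper, which simply states that the lemma follows "by induction on the derivation" and leaves the details implicit; your auxiliary fact $(\star)$ (typing never shrinks the graph) is exactly the observation needed to reabsorb the freshly added edges into $G'_1$ in the \rname{T}{Start}/\rname{T}{Tell}/\rname{T}{Call}/\rname{T}{Cond} cases, and your case analysis is sound.
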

\begin{proof}
  Straightforward by induction on the derivation of $\tjudge\Gamma{G_1}C{G'_1}$.
\end{proof}

\begin{lemma}[Sequentiality]
  \label{lem:seq}
  Let $\Gamma$ be a typing context, $G_1$, $G_1'$, $G_2$ and $G_2'$ be connection graphs
  such that $G_2\subseteq G'_1$, and $C_1$, $C_2$ be choreographies.
  If $\tjudge\Gamma{G_1}{C_1}{G'_1}$ and $\tjudge\Gamma{G_2}{C_2}{G'_2}$, then
  $\tjudge\Gamma{G_1}{C_1\fatsemi C_2}{G'_1\cup G'_2}$.
\end{lemma}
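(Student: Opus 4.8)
The plan is to prove Lemma~\ref{lem:seq} by induction on the structure of $C_1$, with the statement generalised so that it quantifies over the context $\Gamma$ and over all four connection graphs; this generality is needed because a couple of cases will apply the induction hypothesis under an enlarged context or from an enlarged input graph. In every case the recipe is the same: invert the (essentially unique) typing rule that derived $\tjudge\Gamma{G_1}{C_1}{G'_1}$; compute $C_1\fatsemi C_2$ from the defining clauses of $\fatsemi$; use Lemma~\ref{lem:mon} (Monotonicity) to re-base the typing $\tjudge\Gamma{G_2}{C_2}{G'_2}$ so that its input graph becomes $G'_1$ — this is exactly where the hypothesis $G_2\subseteq G'_1$ is consumed, yielding $\tjudge\Gamma{G'_1}{C_2}{G'_1\cup G'_2}$ — possibly also enlarging its context; feed this into the induction hypothesis applied to the tail of $C_1$; and re-apply the inverted rule.

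For the base case $C_1=\nil$, rule \rname{T}{End} forces $G'_1=G_1$, so $G_2\subseteq G_1$ and $\nil\fatsemi C_2=\nil;C_2$; Monotonicity gives $\tjudge\Gamma{G_1}{C_2}{G_1\cup G'_2}$, and rule \rname{T}{EndSeq} then delivers the conclusion. The runtime prefix $C_1=\nil;C'$ is handled the same way, via \rname{T}{EndSeq} and the induction hypothesis on $C'$. When $C_1=\eta;C'$ with $\eta$ a communication or selection, rules \rname{T}{Com} and \rname{T}{Sel} leave the graph unchanged, so their premise $\tjudge\Gamma{G_1}{C'}{G'_1}$ together with $G_2\subseteq G'_1$ feeds the induction hypothesis directly and all side conditions (connectedness in $G_1$, typing of $e$ and $f$) are reused verbatim.

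The remaining cases are where Monotonicity earns its keep. For $C_1=\gentell;C'$ the continuation is typed from $\update[G_1]{\pid q}{\pid r}$, and for $C_1=\genstart;C'$ from $\update[G_1]{\pid p}{\pid q}$ and under the extended context $\Gamma,\pid q:T$; in both situations we first use Monotonicity to obtain a typing of $C_2$ starting from $G'_1$ (under the extended context in the \rname{T}{Start} case, which is legitimate because by the Barendregt convention the spawned name may be assumed fresh for $C_2$), then apply the induction hypothesis to $C'$, then re-apply \rname{T}{Tell} or \rname{T}{Start}. The conditional $C_1=(\cond{\pid p.e}{D_1}{D_2});C'$ is the most delicate, since $\fatsemi$ must be pushed past the whole conditional: inverting \rname{T}{Cond} gives branch derivations $\tjudge\Gamma{G_1}{D_i}{H_i}$ and a continuation derivation $\tjudge\Gamma{H_1\cap H_2}{C'}{G'_1}$; we leave the branch derivations untouched and only need $\tjudge\Gamma{H_1\cap H_2}{C'\fatsemi C_2}{G'_1\cup G'_2}$, which follows from the induction hypothesis once $C_2$ has been re-based at $G'_1$ (note $H_1\cap H_2$ plays no special role here). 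The procedure-call case $C_1=(\gencallP);C'$ is analogous, reusing the premises $\Gamma\seq X(\wtil{\pid q^T}):G_X\cseq G'_X$ and $G_X[\pids p/\pids q]\subseteq G_1$ and applying the induction hypothesis to $C'$ from the graph $G_1\cup(G'_X[\pids p/\pids q])$.

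I expect no genuine obstacle; the single point that requires care is stating the induction hypothesis strongly enough — universally quantified over the context and over all four graphs — so that the \rname{T}{Start} and \rname{T}{Tell} cases, which modify the context and the input graph of the continuation, still go through. Given that, each case is a mechanical combination of rule inversion, one application of Monotonicity to re-base the typing of $C_2$ at $G'_1$, and re-application of the original rule. The same argument adapts without difficulty to the runtime terms of aPC, whose typing rules (\rname{T}{Com-S}, \rname{T}{Com-RV}, and their analogues) likewise leave the connection graph essentially unchanged.
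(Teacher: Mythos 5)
Your proof is correct and matches the paper's (one-line) proof in spirit: the paper merely says ``straightforward by induction'' and leaves every detail to the reader, while you supply the actual case analysis, correctly identifying that Monotonicity (Lemma~\ref{lem:mon}) is the ingredient that consumes the hypothesis $G_2\subseteq G'_1$ by re-basing the typing of $C_2$ at $G'_1$, and that the induction hypothesis must be generalised over the context and the graphs so that the \rname{T}{Start} and \rname{T}{Tell} cases go through. The only discrepancy is cosmetic: the paper states the induction is on the derivation of $\tjudge\Gamma{G_2}{C_2}{G'_2}$, whereas you induct on the structure of $C_1$ (equivalently, on the derivation of $\tjudge\Gamma{G_1}{C_1}{G'_1}$) --- your choice is the one that actually tracks the recursive definition of $\fatsemi$, which decomposes its \emph{first} argument, so the paper's phrasing is most plausibly an oversight.
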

\begin{proof}
  Straightforward by induction on the derivation of $\tjudge\Gamma{G_2}{C_2}{G'_2}$.
\end{proof}

\begin{lemma}[Substitution]
  \label{lem:sub}
  Let $\Gamma$ be a typing context, $G$ and $G'$ be connection graphs, and $C$ be a
  choreography.
  Let $\pids p$ be a set of process names that are free in $C$ and $\pids q$ be a set of
  process names that do not occur (free or bound) in $C$.
  If $\tjudge\Gamma GC{G'}$, then
  $\tjudge{\Gamma[\pids p/\pids q]}{G[\pids p/\pids q]}{C[\pids p/\pids q]}{G'[\pids p/\pids q]}$.
\end{lemma}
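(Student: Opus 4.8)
The plan is to argue by induction on the derivation of $\tjudge\Gamma GC{G'}$ --- equivalently, by structural induction on $C$, since the typing rules of Figure~\ref{fig:pc_types} are syntax-directed. Write $\theta$ for the substitution $[\pids p/\pids q]$. Because $\pids q$ does not occur in $C$ and the two lists consist of distinct process names, $\theta$ restricts to an injective renaming of the process names that actually appear in $C$, $G$, $G'$, and in the process-typing part of $\Gamma$. This injectivity is exactly what makes $\theta$ commute with the graph operations used in the rules: $\theta$ preserves and reflects edge membership $\knows{\pid a}{\pid b}$, $\theta(G_1\cap G_2)=\theta(G_1)\cap\theta(G_2)$, $\theta$ is monotone for $\subseteq$, and $\theta(\update{\pid a}{\pid b})=\update[\theta(G)]{\theta(\pid a)}{\theta(\pid b)}$. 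I would record these commutation facts at the outset, since every case of the induction invokes them; the expression- and function-typing premises ($\pcont{:}T\vdash_\types e:T_1$ and the like) mention no process names and are therefore transported verbatim.

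The base cases \rname{T}{End} and \rname{T}{EndSeq} are immediate (the latter directly from the induction hypothesis). For \rname{T}{Com}, \rname{T}{Sel} and \rname{T}{Tell}, the connectedness premises $\knows{\cdot}{\cdot}$ transfer along $\theta$ by the commutation facts, the local-typing premises are unchanged as noted, and the continuation is handled by the induction hypothesis; for \rname{T}{Tell} one additionally uses that $\theta$ commutes with adding the new edge between the two introduced processes. The \rname{T}{Cond} case is the first mildly delicate one: its branches are typed to graphs $G_1$ and $G_2$ while the continuation is typed under $G_1\cap G_2$, so reassembling the derivation after applying the induction hypothesis to the two branches and to the continuation relies on $\theta(G_1\cap G_2)=\theta(G_1)\cap\theta(G_2)$. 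For \rname{T}{Start}, I would first $\alpha$-rename so that the process spawned by $\genstart$ is distinct from every name in $\pids p$ and $\pids q$ (legitimate under the Barendregt convention the paper adopts); then $\theta$ fixes that bound name, causes no capture, and the premise $\tjudge{\Gamma,\pid q:T}{\update{\pid p}{\pid q}}{C}{G'}$ yields the conclusion by the induction hypothesis together with the commutation of $\theta$ with the edge addition.

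The main obstacle is the \rname{T}{Call} case, where a second substitution --- the one replacing $X$'s formal parameters by the call's actual arguments --- must be threaded through $\theta$. Writing $\pids z$ for $X$'s formals and $\pids a$ for the actual arguments of the call, and $\alpha$-renaming $\pids z$ apart from $\pids p$ and $\pids q$ so that the typing $X(\pids z){:}G_X\cseq G'_X$ recorded in $\Gamma$ is untouched by $\theta$, the key step is the substitution-composition identity $\theta(G_X[\pids a/\pids z])=G_X[\theta(\pids a)/\pids z]$, and likewise for $G'_X$; combined with monotonicity of $\theta$ for $\subseteq$, this turns the premise $G_X[\pids a/\pids z]\subseteq G$ into $G_X[\theta(\pids a)/\pids z]\subseteq\theta(G)$, while the continuation premise $\tjudge{\Gamma}{G\cup(G'_X[\pids a/\pids z])}{C}{G'}$ is discharged by the induction hypothesis plus the same identity for $G'_X$. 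The side conditions $\Gamma\seq\pid a_i:T_i$ transfer because $\Gamma[\pids p/\pids q]$ assigns $\theta(\pid a_i)$ exactly the type that $\Gamma$ assigned $\pid a_i$. No extra difficulty is introduced by the set $\defs$ of procedure definitions, since Lemma~\ref{lem:sub} speaks only about the choreography judgement $\tjudge\Gamma GC{G'}$.
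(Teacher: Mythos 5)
Your proof is correct and follows exactly the paper's own route: the paper proves Lemma~\ref{lem:sub} by a one-line ``straightforward induction on the derivation of $\tjudge\Gamma GC{G'}$'', and your case analysis, the commutation of the renaming with the graph operations, and the treatment of binders and of the parameter substitution in \rname{T}{Call} are precisely the details that one-liner leaves implicit.
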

\begin{proof}
  Straightforward by induction on the derivation of $\tjudge\Gamma GC{G'}$, as all typing
  rules are valid when substitutions are applied.
\end{proof}

We are now ready to start proving Theorem~\ref{thm:type}.
The following lemma takes care of the base cases, and is required for one of the inductive
steps.

\begin{lemma}
  \label{lem:type}
  Let $\Gamma$ be a set of typing judgements, $\defs$ a set of procedure definitions,
  $G_1$ and $G_1'$ connection graphs, and $C$ a choreography that does not start with
  $\nil$ or a procedure call.
  Assume that $\Gamma\vdash\defs$ and $\tjudge\Gamma{G_1}C{G'_1}$.
  For every state $\sigma$, there exist $\Gamma'$, $\sigma'$, $C'$, $G_2$ and
  $G_2'$ such that $G_1,C,\sigma\tod G_2,C',\sigma'$ and
  $\tjudge{\Gamma'}{G_2}{C'}{G'_2}$.
\end{lemma}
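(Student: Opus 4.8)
The plan is to proceed by case analysis on the first statement of $C$, using the fact that $C$ starts neither with $\nil$ nor a procedure call, so the possible head statements are: a communication $\gencomf$, a selection $\gensel$, a process start $\genstart$, an introduction $\gentell$, or a conditional $\gencondE$; moreover $C$ may be written (modulo $\precongrd$) as a head statement followed by a continuation, since $\nil$ by itself also falls under the communication/conditional-free shape only as the trivial case $C = \nil$, which is excluded. For each of these cases I must (i) exhibit a reduction step $G_1,C,\sigma \tod G_2,C',\sigma'$, and (ii) produce a context $\Gamma'$ and graphs witnessing $\tjudge{\Gamma'}{G_2}{C'}{G'_2}$.

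First I would treat the communication case $C = \gencomf;C_0$. Inverting rule \rname{T}{Com} on the hypothesis $\tjudge{\Gamma}{G_1}{C}{G'_1}$ gives $\knows{\pid p}{\pid q}$ in $G_1$, the typing facts $\pcont:T_{\pid p}\vdash_\types e:T_1$ and $\pcont:T_{\pid q}\vdash_\types f:T_1\to T_{\pid q}$, and $\tjudge{\Gamma}{G_1}{C_0}{G'_1}$. The typing premises are exactly the conditions encapsulated by $\eval{e[\sigma(\pid p)/\pcont]}{v}$ and $\eval{f[\sigma(\pid q)/\pcont](v)}{w}$ — here I invoke the paper's standing assumption that well-typed expressions and functions evaluate to values of the right type (the footnote on faults). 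Hence rule \rname{C}{Com} fires, yielding $G_1,C,\sigma \tod G_1, C_0, \sigma[\pid q\mapsto w]$, and I take $\Gamma' = \Gamma$, $G_2 = G_1$, $G'_2 = G'_1$; the residual judgement $\tjudge{\Gamma}{G_1}{C_0}{G'_1}$ was obtained by the same inversion. The selection case is entirely analogous but simpler: \rname{T}{Sel} gives $\knows{\pid p}{\pid q}$ and $\tjudge{\Gamma}{G_1}{C_0}{G'_1}$, rule \rname{C}{Sel} fires with no state change, and $\Gamma,G_1,G'_1$ carry over. The start case uses inversion of \rname{T}{Start} to get $\tjudge{\Gamma,\pid q:T}{\update{\pid p}{\pid q}}{C_0}{G'_1}$; rule \rname{C}{Start} always fires, producing connection graph $\update{\pid p}{\pid q}$ and state $\sigma[\pid q\mapsto\bot_T]$, and I set $\Gamma' = \Gamma,\pid q:T$ — this is where the conclusion genuinely needs $\Gamma' \supseteq \Gamma$ rather than $\Gamma' = \Gamma$. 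The introduction case mirrors the start case: \rname{T}{Tell} gives the two connectivity premises (which are the side conditions of \rname{C}{Tell}) and $\tjudge{\Gamma}{\update{\pid q}{\pid r}}{C_0}{G'_1}$, so \rname{C}{Tell} fires and everything carries over with $\Gamma' = \Gamma$.

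The conditional case $C = (\gencondE);C_0$ is the one requiring the auxiliary operator $\fatsemi$ and is, I expect, the main point of friction. Inverting \rname{T}{Cond} yields $\tjudge{\Gamma}{G_1}{C_i}{G_i}$ for $i=1,2$ and $\tjudge{\Gamma}{G_1\cap G_2}{C_0}{G'_1}$. Rule \rname{C}{Cond} reduces $C$ to $C_i \fatsemi C_0$ for whichever $i$ the evaluation of $e[\sigma(\pid p)/\pcont]$ selects (well-typedness of $e$ as a $\m{bool}$ again guarantees it evaluates to $\m{true}$ or $\m{false}$). To re-type the reductum I would apply the Sequentiality lemma (Lemma~\ref{lem:seq}): I need $\tjudge{\Gamma}{G_1}{C_i}{G_i}$ — which I have — together with $\tjudge{\Gamma}{G_1\cap G_2}{C_0}{G'_1}$ and the inclusion $G_1\cap G_2 \subseteq G_i$, which holds since $G_1\cap G_2 \subseteq G_1$ and, by Monotonicity (Lemma~\ref{lem:mon}) applied to $\tjudge{\Gamma}{G_1}{C_i}{G_i}$ with $G_1 \subseteq G_1$, one gets $G_1 \subseteq G_i$ is not automatic — rather, I should observe directly that $G_1\cap G_2 \subseteq G_1$ and the side condition of Lemma~\ref{lem:seq} is stated against the \emph{final} graph of the first derivation, so I actually need $G_1\cap G_2$ to sit below the final graph of $\tjudge{\Gamma}{G_1}{C_i}{G_i}$, namely $G_i$; since $G_i \supseteq G_1 \supseteq G_1\cap G_2$ is \emph{not} guaranteed in general, the cleaner route is to first strengthen $\tjudge{\Gamma}{G_1}{C_i}{G_i}$ via Monotonicity to $\tjudge{\Gamma}{G_1}{C_i}{G_1\cup G_i}$ and note $G_1\cap G_2 \subseteq G_1 \subseteq G_1\cup G_i$, then apply Lemma~\ref{lem:seq} to conclude $\tjudge{\Gamma}{G_1}{C_i\fatsemi C_0}{(G_1\cup G_i)\cup G'_1}$. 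Thus $\Gamma' = \Gamma$, $G_2 = G_1$, $G'_2 = (G_1\cup G_i)\cup G'_1$. The only remaining subtlety is the Barendregt/scope-extension interaction: $\fatsemi$ extends the scope of names bound in $C_i$ over $C_0$, but by the Barendregt convention those names do not clash with anything in $C_0$, so the substitution lemma is not even needed and the typing derivations compose verbatim. I would wrap up by remarking that these cases exhaust the possible shapes of $C$ under the hypothesis, which establishes the lemma.
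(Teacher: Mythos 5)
Your proof is correct and follows essentially the same route as the paper's: a case analysis on the head construct of $C$ (equivalently, on the last typing rule applied), with the same witnesses $\Gamma'$, $G_2$, $G'_2$ in each case and an appeal to Sequentiality for the conditional. The one place you wobble is unnecessary: writing $G^\circ_1$, $G^\circ_2$ for the final graphs of the two branches (so $\tjudge\Gamma{G_1}{C_i}{G^\circ_i}$ and $\tjudge\Gamma{G^\circ_1\cap G^\circ_2}{C^\circ}{G'_1}$), the side condition of Lemma~\ref{lem:seq} asks exactly that $G^\circ_1\cap G^\circ_2$ be contained in the final graph $G^\circ_i$ of the chosen branch, and an intersection is always contained in each of its components, so the lemma applies directly and yields $\tjudge\Gamma{G_1}{C_i\fatsemi C^\circ}{G^\circ_i\cup G'_1}$ with no detour through Monotonicity; your confusion stems from overloading $G_1$ for both the initial graph of the lemma and the final graph of the first branch. (Your detour does still go through, since the inclusion you end up needing reduces to the same trivial fact, but the claim that the required containment ``is not guaranteed in general'' is mistaken.)
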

\begin{proof}
  By case analysis on the last step of the proof of $\tjudge\Gamma{G_1}C{G'_1}$.
  By hypothesis, this proof cannot end with an application of rules \rname{T}{End},
  \rname{T}{EndSeq} or \rname{T}{Call}; we detail all cases for completeness, but the only
  non-trivial one is the last.
  \begin{itemize}
  \item\rname{T}{Start}: then $C$ is $\genstart;C^\circ$ and by hypothesis
    $$\tjudge{\Gamma,q:T}{\update[G_1]pq}{C^\circ}{G'_1}\,.$$
    Since $G_1,\genstart;C,\sigma\tod\update[G_1]pq,C^\circ,\sigma[\pid q\mapsto\bot_T]$
    by rule \rname{C}{Start}, taking $\Gamma'=\Gamma,\pid q:T$,
    $\sigma'=\sigma[\pid q\mapsto\bot_T]$, $C'=C^\circ$, $G_2=\update[G_1]pq$ and
    $G'_2=G'_1$ establishes the thesis.
  \item\rname{T}{Com}: then $C$ is $\gencomf;C^\circ$ and by hypothesis $\knows[G_1]pq$,
    $f[\sigma(\pid q)/\pcont](e[\sigma(\pid p)/\pcont])$ is a valid expression of type
    $T_q$, and $\tjudge\Gamma{G_1}C{G'_1}$.
    Then all the preconditions of \rname{C}{Com} are met, so
    taking $\Gamma'=\Gamma$,
    $\sigma'=\sigma[\pid q\mapsto f[\sigma(\pid q)/\pcont](e[\sigma(\pid p)/\pcont])]$,
    $C'=C^\circ$, $G_2=G_1$ and $G'_2=G'_1$ establishes the thesis.
  \item\rname{T}{Sel}: then $C$ is $\gensel;C^\circ$ and by hypothesis $\knows[G_1]pq$ and
    $\tjudge\Gamma{G_1}C{G'_1}$.
    By \rname{C}{Sel}, $G_1,\gensel;C^\circ,\sigma\tod G_1,C^\circ,\sigma$, so
    taking $\Gamma'=\Gamma$, $\sigma'=\sigma$, $C'=C^\circ$, $G_2=G_1$ and
    $G'_2=G'_1$ again establishes the thesis.
  \item\rname{T}{Tell}: then $C$ is $\gentell;C^\circ$ and by hypothesis both
    $\knows[G_1]pq$, $\knows[G_1]pr$, and $\tjudge\Gamma{\update[G_1]qr}{C^\circ}{G'_1}$.
    Since the preconditions of rule \rname{C}{Tell} are met, by taking
    $\Gamma'=\Gamma$, $\sigma'=\sigma$, $C'=C^\circ$, $G_2=\update[G_1]qr$ and
    $G'_2=G'_1$ establishes the thesis.
  \item\rname{T}{Cond}: then $C$ is $\gencondE;C^\circ$ and by hypothesis
    $e[\sigma(\pid p)/\pcont]$ is a valid Boolean expression,
    $\tjudge\Gamma{G_1}{C_i}{G^\circ_i}$ and
    $\tjudge\Gamma{G^\circ_1\cap G^\circ_2}{C^\circ}{G'_1}$.

    Suppose $e[\sigma(\pid p)/\pcont]=\m{true}$ (the other case is similar).
    Then $G_1,\gencondE;C^\circ,\sigma\tod G_1,C_1\fatsemi C,\sigma$.
    Since $G^\circ_1\cap G^\circ_2\subseteq G^\circ_1$, Lemma~\ref{lem:seq} allows us to
    conclude that $\tjudge\Gamma{G_1}{C_1\fatsemi C}{G'_1\cup G^\circ_1}$, whence the
    thesis follows by taking $\Gamma'=\Gamma$, $C'=C_1\fatsemi C$, $G_2=G_1$ and
    $G'_2=G'_1\cup G^\circ_1$.
  \end{itemize}
\end{proof}

\begin{proof}[Theorem~\ref{thm:type}]
  If $C\precongrd\nil$, then the first case holds.
  Assume that $C\not\precongrd\nil$; we show that the second case holds by induction on
  the proof of $\tjudge\Gamma{G_1}C{G'_1}$.
  By hypothesis, the last rule applied in this proof cannot be \rname{T}{End}; the cases
  where the last rule applied is \rname{T}{Start}, \rname{T}{Com}, \rname{T}{Sel},
  \rname{T}{Tell} or \rname{T}{Cond} follow immediately from Lemma~\ref{lem:type}, while
  the case of rule \rname{T}{EndSeq} is straightforward from the induction hypothesis.

  We focus on the case of rule \rname{T}{Call}.
  In this case, $C$ has the form $\gencallP;C^\circ$, and we know that
  $\Gamma\vdash X(\wtil{\pid q^T}):(G_X\cseq G'_X)$, $\Gamma\vdash\pids p:T$,
  $G_X[\pids p/\pids q]\subseteq G_1$ and
  $\tjudge\Gamma{G_1\cup(G'_X[\pids p/\pids q])}{C^\circ}{G'_1}$.
  From the hypothesis that $\Gamma\vdash\defs$ we also know that
  $\tjudge{\Gamma_X}{G_X}{C_X}{G'_X}$, where $C_X$ is the body of $X$ as defined in
  $\defs$.
  By Lemma~\ref{lem:sub},
  $\tjudge{\Gamma_X[\pids p/\pids q]}{G_X[\pids p/\pids q]}{C_X[\pids p/\pids q]}{G'_X[\pids p/\pids q]}$,
  whence by Lemma~\ref{lem:mon} also
  $\tjudge{\Gamma}{G_1}{C_X[\pids p/\pids q]}{G'_X[\pids p/\pids q]\cup G_1}$.
  By applying rule \rname{C}{Unfold}, we conclude that
  $\gencallP;C^\circ\precongrd C_X[\pids p/\pids q]\fatsemi C^\circ$, and
  Lemma~\ref{lem:seq} allows us to conclude that
  $\tjudge\Gamma{G_1}{C_X[\pids p/\pids q]\fatsemi C^\circ}{G'_1}$.
  Since procedure calls are guarded, $C_X$ does not begin with a procedure call, and Lemma~\ref{lem:type}
  establishes the thesis.
\end{proof}

\begin{proof}[Theorem~\ref{thm:dec}]
  The proof of this result proceeds in several stages.
  We first observe that deciding whether $\tjudge\Gamma GC{G'}$ is completely mechanical,
  as the typing rules are deterministic.
  Furthermore, those rules can also be used to construct $G'$ from $G$ and $C$; therefore,
  the key step of this proof is showing, given $\Gamma$ and $\langle\defs,C\rangle$, how
  to find a ``canonical typing'' for the recursive definitions, the set $\Gamma_\defs$,
  such that $\Gamma_\defs\vdash\defs$ and $\tjudge{\Gamma,\Gamma_\defs}{G_C}C{G'}$ (with
  $G'$ inferred) iff $\tjudge{\Gamma,\Gamma'}{G_C}C{G''}$ for some $\Gamma'$ and $G''$.
  More precisely, we need to find graphs $G_X$ and $G'_X$ for each procedure $X$ defined
  in $\defs$.

  Our proof proceeds in three steps.
  First, for each $X$ we compute an underapproximation $G^\circ_X$ of the output graph
  $G'_X$, containing all the relevant connections that executing $X$ can add.
  Using this, we are able to compute the input graph $G_X$ and the output graph
  $G'_X=G_X\cup G^\circ_X$.
  Both these steps are achieved by computing a minimal fixpoint of a monotonic operator in
  the set of all graphs whose vertices are the parameters of $X$.
  Finally, we argue that the typing $X:G_X\cseq G_X$ is minimal, and therefore the set
  $\Gamma_\defs$ of all such typings fulfills the property we require.

  Throughout the remainder of this proof, we assume
  $\defs=\{X_i(\wtil{\pid q^i})=C_i\mid i=1,\ldots,n\}$.
  \begin{enumerate}
  \item In order to compute $G^\circ_{X_i}$, we define an auxiliary function
    \lcfaname\ with intended meaning as follows: \lcfa{C_j}G\ computes
    the communication graph obtained from $G$ after one execution of the body of $X_j$,
    assuming that $X_i(\wtil{\pid q_i}):\emptyset\cseq G_i$ for all $i$ and ignoring newly
    created processes.
    We use a conditional union operator $\uplus$ where $G\uplus\{e\}$ denotes $G\cup\{e\}$
    if $e$ is an edge connecting two vertices in $G$, and $G$ otherwise.
    The function $\lcfaname$ is defined as follows.
    {\small
    \begin{align*}
      \lcfa\nil G &= G \\
      \lcfa{\nil;C}G &= \lcfa CG \\
      \lcfa{\gencomf;C}G &= \lcfa CG
    \end{align*}
    \begin{align*}
      \lcfa{\gensel;C}G &= \lcfa CG \\
      \lcfa{\genstart;C}G &= \lcfa CG \\
      \lcfa{\gentellsubscript;C}G &= \lcfa C{G\uplus\{\pid q\leftrightarrow \pid r\}} \\
      \lcfa{\gencondE;C}G &= \lcfa C{\lcfa{C_1}G\cap\lcfa{C_2}G} \\
      \lcfa{\callP{X_i}{\pids p};C}G &= \lcfa C{G\uplus G_i[\pids p/\wtil{\pid q^i}]}
    \end{align*}}

    Using $\lcfaname$, we define an operator $\mathcal T_\lcfaname$ over the set
    $\mathcal G$ of tuples of graphs over the parameters of $X_i$, i.e.\ 
    $\mathcal G=\{\wtil{G_i}\mid G_i\mbox{ is a graph over $\wtil{\pid q^i}$}\}$.
    Observe that $\mathcal G$ is a complete lattice wrt componentwise inclusion.
    \[\mathcal T_\lcfaname(\wtil{G_i}) = \wtil{\lcfa{C_i}{G_i}}\]
    This operator is monotonic, since \lcfaname\ only adds edges to its argument,
    and thus has a least fixpoint that can be computed by iterating $\mathcal T_\lcfaname$
    from the tuple of empty graphs over the right sets of vertices.
    Furthermore, since $\mathcal G$ is finite (each graph has a finite number of vertices)
    this fixpoint corresponds to a finite iterate, and can thus be computed in finite
    time.
    We denote this fixpoint by $\wtil{G^\circ_{X_i}}$.

  \item The construction of the input graphs $G_{X_i}$ follows the same idea: we go
    through the $C_i$s noting the edges that are required for all communications to be
    able to take place.
    It is however slightly more complicated, because we have to keep track of edges that
    the choreography adds to the graph; we therefore need a function \lcfbname\ that
    manipulates two graphs instead of one.
    More precisely, $\lcfb CGG$ returns the graph extending $G$ that is needed for
    correctly executing $C$ (ignoring newly created processes); the first argument keeps
    track of the edges that need to be added to $G$, and the second argument keeps track
    of edges added by executing $C$.
    This function uses the graphs $G^\circ_{X_i}$ computed earlier, which explains why it
    has to be defined afterwards.
    We use the same notational conventions as above, and let $\fst\langle a,b\rangle=a$
    and $\snd\langle a,b\rangle=b$.
    \begin{figure*}
    \scriptsize
    \begin{align*}
      \lcfb\nil{G_a}{G_b} &= \langle{G_a},{G_b}\rangle \\
      \lcfb{\nil;C}{G_a}{G_b} &= \lcfb C{G_a}{G_b} \\
      \lcfb{\gencomf;C}{G_a}{G_b} &= %
        \begin{cases}
          \lcfb C{G_a}{G_b} & \mbox{if $\pid p\leftrightarrow\pid q\in G_b$} \\
          \lcfb C{G_a\uplus\{\pid p\leftrightarrow\pid q\}}{G_b\uplus\{\pid p\leftrightarrow\pid q\}} & \mbox{otherwise}
        \end{cases} \\
      \lcfb{\gensel;C}{G_a}{G_b} &= %
        \begin{cases}
          \lcfb C{G_a}{G_b} & \mbox{if $\pid p\leftrightarrow\pid q\in G_b$} \\
          \lcfb C{G_a\uplus\{\pid p\leftrightarrow\pid q\}}{G_b\uplus\{\pid p\leftrightarrow\pid q\}} & \mbox{otherwise}
        \end{cases} \\
      \lcfb{\genstart;C}{G_a}{G_b} &= \lcfb C{G_a}{G_b} \\
      \lcfb{\gentellsubscript;C}{G_a}{G_b} &= %
        \begin{cases}
          \lcfb C{G_a}{G_b\uplus\{\pid q\leftrightarrow\pid r\}}
          & \mbox{if $\pid p\leftrightarrow\pid  q,\pid p\leftrightarrow\pid r\in G_b$} \\
          \lcfb C{G_a\uplus\{\pid p\leftrightarrow\pid q\}}{G_b\uplus\{\pid p\leftrightarrow\pid q,\pid q\leftrightarrow\pid r\}}
          & \mbox{if $\pid p\leftrightarrow\pid q\not\in G_b$, $\pid p\leftrightarrow\pid r\in G_b$} \\
          \lcfb C{G_a\uplus\{\pid p\leftrightarrow\pid r\}}{G_b\uplus\{\pid p\leftrightarrow\pid r,\pid q\leftrightarrow\pid r\}}
          & \mbox{if $\pid p\leftrightarrow\pid q\in G_b$, $\pid p\leftrightarrow\pid r\not\in G_b$} \\
          \lcfb C{G_a\uplus\{\pid p\leftrightarrow\pid q,\pid p\leftrightarrow\pid r\}}{G_b\uplus\{\pid p\leftrightarrow\pid q,\pid p\leftrightarrow\pid r,\pid q\leftrightarrow\pid r\}}
          & \mbox{if $\pid p\leftrightarrow\pid q,\pid p\leftrightarrow\pid r\not\in G_b$} \\
        \end{cases} \\
      \lcfb{\gencondE;C}{G_a}{G_b} &=
        \lcfb C%
              {\fst(\lcfb{C_1}{G_a}{G_b})\cup\fst(\lcfb{C_2}{G_a}{G_b})}%
              {\snd(\lcfb{C_1}{G_a}{G_b})\cap\snd(\lcfb{C_2}{G_a}{G_b})} \\
      \lcfb{\callP{X_i}{\pids p};C}{G_a}{G_b} &=
        \lcfb C%
              {G_a\uplus(G_i[\pids p/\wtil{\pid q^i}]\setminus G_b)}%
              {G_b\uplus G_i[\pids p/\wtil{\pid q^i}]\uplus G^\circ_i[\pids p/\wtil{\pid q^i}]}
    \end{align*}
    \caption{Definition of \lcfbname\ (case 2 in the proof of Theorem~\ref{thm:dec}).}
    \label{fig:bck}
    \end{figure*}
    The definition of \lcfbname\ is given in Figure~\ref{fig:bck}; it is not the simplest possible, but the formulation
    given is sufficient for our purposes.

    Again we define a monotonic operator over the same $\mathcal G$ as above.
    \[\mathcal T_\lcfbname(\wtil{G_i}) = \wtil{\fst(\lcfb{C_i}{G_i}{G_i})}\]
    We do not need to recompute $G^\circ_i$, since these graphs contain all
    edges that can possibly be added by executing $C_i$.
    The least fixpoint of $\mathcal T_\lcfbname$ can again be computed by finitely
    iterating this operator, and it is precisely $\wtil{G_{X_i}}$.
    We then define $G'_{X_i}=G_{X_i}\cup G^\circ_{X_i}$.

  \item We now show that $\Gamma_\defs=\{X_i(\wtil{\pid q_i}):G_{X_i}\cseq G'_{X_i}\}$ is
    a minimal typing of $\defs$, in the sense explained earlier.
    Observe that it is possible that $\Gamma_\defs\not\vdash\defs$, in particular if the
    $X_i$ are ill-formed choreographies.

    Suppose that $\tjudge{\Gamma,\Gamma'}{G_C}C{G}$ for some $\Gamma'$ and $G$.
    We argue that $\tjudge{\Gamma,\Gamma_\defs}{G_C}C{G'}$, where $G'$ is inferred from
    the typing rules.
    For each procedure $X_i(\wtil{\pid q_i})=C_i$, there must be a unique typing
    $X_i(\wtil{\pid q_i}):G^\ast_{X_i}\cseq G^{\ast\ast}_{X_i}$ in $\Gamma'$.
    By a simple inductive argument one can show that $G^{\circ}_{X_i}\subseteq G^{\ast\ast}$
    (since $\emptyset\subseteq G^{\ast\ast}_{X_i}$ and $\mathcal T_\lcfaname$ preserves
    inclusion in $G^{\ast\ast}_{X_i}$).
    Similarly, one shows that $G_{X_i}\subseteq G^\ast_{X_i}$ and that
    $G^{\ast\ast}_{X_i}\setminus G^\ast_{X_i}\subseteq G'_{X_i}\setminus G_{X_i}$.
    As a consequence, the typing derivation for $\tjudge{\Gamma,\Gamma'}{G_C}C{G}$ can be
    used for $\tjudge{\Gamma,\Gamma_\defs}{G_C}C{G'}$, as all applications of rule
    \rname{T}{Call} are guaranteed to be valid (their preconditions hold) and to produce
    the same results (they change the communication graph in the same way).
  \end{enumerate}
\end{proof}

A consequence of this result is that we can obtain a type inference algorithm
(Theorem~\ref{thm:inference}), as we only need to ``guess'' types for the processes in the
choreography.
As a corollary of the proof, we can also infer the types for parameters of procedural
definitions and freshly created processes.

\begin{proof}[Theorem~\ref{thm:inference}]
  Construct $\Gamma$ by going through $C$ and adding $\pid p:T_{\pid p}$ every time there
  is an action that depends on $\pid p$'s type (i.e.~$\pid p$ is a sender or receiver in a
  communication, or an argument of a procedure call).
  If $\Gamma$ contains two different types for any process, then output \verb+NO+, else
  output $\Gamma$.
  This algorithm will not necessarily assign a type to all processes in $C$, in case $C$
  contains processes whose memory is never accessed.
\end{proof}

\begin{proof}[Theorem~\ref{cor:inference}]
  Inferring the types of freshly created processes is analogous to the previous proof.

  As for parameters of procedure definitions, we omit the details of the proof, as it
  repeats ideas previously used.
  Define an operator $\mathcal T_\types$ over tuples of typing contexts (one for each
  $X_i$ defined in $\defs$) that generates a typing context for each $X_i$ in the same way
  as in the previous proof.
  If any contradictions are found, then fail.
  Iterate $\mathcal T_\types$ until either failure occurs (in which case the $X_i$s are
  not properly defined) or a fixpoint is reached.
  Finally, assign a random type (e.g.~$\mathbb N$) to each process variable that has not
  received a type during this procedure.
  The algorithm readily extends to infer the types of processes created inside procedure
  definitions.
\end{proof}


\subsection{EndPoint Projection}

\paragraph{Merging.}
The full definition of merging is given in Figure~\ref{fig:merge}.
\begin{figure*}
\centering\small
\begin{align*}
&
\left(\asend{\pid q}{e};B\right) \ \merge \ \left(\asend{\pid q}{e};B'\right)
 \ = \ \asend{\pid q}{e};\left(B\merge B'\right)
\qquad
\left(\arecv{\pid p}f;B\right) \ \merge \ \left(\arecv{\pid p}f;B'\right)
 \ = \ \arecv{\pid p}f;\left(B\merge B'\right)
\\[1ex]
&\left(\atells{\pid q}{\pid r};B\right) \ \merge \ \left(\atells{\pid q}{\pid r};B'\right)
 \ = \ \atells{\pid q}{\pid r};\left(B\merge B'\right)
\qquad
\left(\atellr{\pid q}{\pid r};B\right) \ \merge \ \left(\atellr{\pid q}{\pid r};B'\right)
 \ = \ \atellr{\pid q}{\pid r};\left(B\merge B'\right)
\\[1ex]
&\left(\asel{\pid q}{l};B\right)\ \merge \ \left(\asel{\pid q}{l};B'\right)
\ = \ \asel{\pid q}{l};\left(B\merge B'\right)
\qquad
\left(\gencallP;B\right)\ \merge \ \left(\gencallP;B'\right) \ = \ \gencallP;\left(B\merge B'\right)
\\[1ex]
&\left(\astart{\pid q}{B_2};B_1\right) \ \merge \ \left(\astart{\pid q}{B'_2};B'_1\right)
 \ = \ \astart{\pid q}{\left(B_2\merge B'_2\right)};\left(B_1 \merge B'_1\right)
\\[1ex]
&B_1 \ \merge\ B_2 \ = \ B'_1 \ \merge\ B'_2 \quad \left(\text{if } B_1 \precongr B'_1 \text{ and } B_2 \precongr B'_2\right)
\qquad
\left(\nil;B\right) \ \merge \ B'
 \ = \ B\merge B'
\\[1ex]
&\left(\cond{e}{B_1}{B_2}\right);B \ \merge \
\left(\cond{e}{B'_1}{B'_2}\right);B' \ = \
\left(\cond{e}{(B_1 \merge B'_1)}{(B_2 \merge B'_2)}\right);\left(B\merge B'\right)
\\[1ex]
&
  \left(\abranch{\pid p}{\{l_i:B_i\}_{i\in J}};B\right) \merge
  \left(\abranch{\pid p}{\{l_i:B'_i\}_{i\in K}};B'\right) =
  \abranch{\pid p}{\left(\{l_i:(B_i \merge B'_i)\}_{i\in J\cap K}\cup\{l_i:B_i\}_{i\in J \setminus K}\cup\{l_i:B'_i\}_{i\in K \setminus J}\right)};\left(B\merge B'\right)
\end{align*}
\caption{Merging operator in PP.}
\label{fig:merge}
\end{figure*}

\paragraph{Projections of procedure definitions.}
When projecting procedure definitions, we simply kept all arguments, and relied on typing to guarantee that no process
will ever attempt to communicate with another process it does not know.
However, it is ``cleaner'' to refine the definition of projection so that such parameters are not even formally used.

This is technically not challenging, as we can use typing information to decide which parameters should be kept in each
projection.
In other words, when computing $\epp\gendef{}$, the projected procedure $X_i$ will only contain as arguments those
$\pid q_j$ such that $\knows[G_X]{\pid q_i}{\pid q_j}$, where $G_X$ is given by typing.
(This definition is non-deterministic, but it can be made deterministic by using the minimal graph $G_X$ computed by
the type inference algorithm.)
A simple annotation can then ensure that projected procedure calls only keep the arguments in the corresponding
positions, and typing guarantees that all those arguments are known at runtime by the process invoking the recursive
call.

An implementation of Quicksort showcasing this reasoning was previously published in~\cite{CM16a}.

\paragraph{EPP.} We sketch the proof of the EPP Theorem.

\begin{proof}[Theorem~\ref{thm:epp} (sketch)]
The structure of the proof is standard, from~\cite{M13:phd}, so we only show the most interesting differing details.
In particular, we need to be careful about how we deal with connections, which is a new key ingredient in PC wrt to
previous work.
We demonstrate this point for the direction of \emph{(Completeness)}; the direction for \emph{(Soundness)} is proven
similarly.
The proof proceeds by induction on the derivation of $G, C,\sigma \tod G'', C',\sigma'$. The interesting cases are
reported below.
\begin{itemize}
\item \rname{C}{Tell}:
From the definition of EPP we get:
\begin{multline*}
\epp{\gentell;C^\circ,\sigma}{} \precongr
\\
\actor{\pid p}{\sigma(\pid p)}{
	\atells{\pid q}{\pid r}; \epp{C^\circ}{\pid p}
}
\ \parp\
\actor{\pid q}{\sigma(\pid q)}{
	\atellr{\pid p}{\pid r}; \epp{C^\circ}{\pid q}
}
\ \parp\
\actor{\pid r}{\sigma(\pid r)}{
	\atellr{\pid p}{\pid q}; \epp{C^\circ}{\pid r}
}
\ \parp\
N
\end{multline*}
By \rname{P}{Tell} we get:
\begin{multline*}
\epp{\gentell;C^\circ,\sigma}{} \to \\
\actor{\pid p}{\sigma'(\pid p)}{
	\epp{C^\circ}{\pid p}
}
\ \parp\
\actor{\pid q}{\sigma'(\pid q)}{
	\epp{C^\circ}{\pid q}
}
\ \parp\
\actor{\pid r}{\sigma'(\pid r)}{
	\epp{C^\circ}{\pid r}
}
\ \parp\
N
\end{multline*}
which proves the thesis, since we can assume that the projection of $C'$ remains unchanged for the other processes ($N$
stays the same).

\item \rname{C}{Start}:
This is the most interesting case.
From the definition of EPP we get:
\[
\epp{\genstart;C^\circ,\sigma}{} \precongr
\actor{\pid p}{\sigma(\pid p)}{
	\astart{\pid q^T}{\epp{C^\circ}{\pid q}}; \epp{C^\circ}{\pid p}
}
\ \parp\
N
\]
From the semantics of PP we get:
\[
\epp{\genstart;C^\circ,\sigma}{} \to
\actor{\pid p}{\sigma'(\pid p)}{
	(\epp{C^\circ}{\pid p})[\pid q'/\pid q]
}
\ \parp\
\actor{\pid q'}{\bot_T}{
	\epp{C^\circ}{\pid q'}
}
\ \parp\
N
\]
The Barendregt convention implies $C \precongr (\genstart;C^\circ)[\pid q'/\pid q]$.
We now have to prove that:
\[
\epp{C^\circ[\pid q'/\pid q],\sigma}{} \precongr
\actor{\pid p}{\sigma'(\pid p)}{
	(\epp{C^\circ}{\pid p})[\pid q'/\pid q]
}
\ \parp\
\actor{\pid q'}{\bot_T}{
	\epp{C^\circ}{\pid q'}
}
\ \parp\
N
\]
We observe that this is true only if process $\pid q$ does not occur free in $N$, i.e., $\pid q$ appear in $N$ only
inside the scope of a binder. The latter must be of the form $\atellr{\pid r}{\pid q};B$.
This is guaranteed by the fact that $C$ is well-typed, since the typing rules prevent other processes in $N$ to
communicate with $\pid q$ without being first introduced.\qedhere
\end{itemize}
\end{proof}

\paragraph{Choreography Amendment.}
We now define precisely an amendment function that makes every choreography projectable.
As mentioned earlier, we follow ideas previously used in other choreography models~\cite{LMZ13,ourstuff}.
Observe that, by definition, the only choreography construct that can lead to unprojectability is the conditional.
\begin{definition}[Amendment]
Given a choreography $C$, the transformation $\amend(C)$ repeatedly applies the following procedure until it is no
longer possible, starting from the inner-most subterms in $C$.
For each conditional subterm $\gencondE$ in $C$, let $\pids r \subseteq (\pn(C_1)\cup\pn(C_2))$ be such that
$\epp{C_1}{\pid r} \merge \epp{C_2}{\pid r}$ is undefined for all $\pid r \in \pids r$; then $\gencond$ in $C$ is
replaced with:
\[
 \begin{array}{ll}
 \m{if}\, \pid p.e \, & \m{then} \, \sel{\pid p}{\pid r_1}{\lleft};\cdots;\sel{\pid p}{\pid r_n}{\lleft};C_1
\\
 & \m{else} \, \,\,
 {\sel{\pid p}{\pid r_1}{\lright};\cdots;\sel{\pid p}{\pid r_n}{\lright}};C_2
 \end{array}
\]
\end{definition}

By the definitions of $\amend$ and EPP and the semantics of PC, we get the following properties, where $\to^*$ is the
transitive closure of $\to$.
\begin{theorem}[Amendment]
  \label{lem:amend}
Let $C$ be a choreography. Then:
\begin{description}
\item[(Completeness)] $\amend(C)$ is defined;
\item[(Projectability)] for all $\sigma$, $\epp{\amend(C),\sigma}{}$ is defined;
\item[(Correspondence)] for all $G$, $\sigma$ and $\defs$:
  \begin{itemize}
  \item if $G,C,\sigma\tod G',C',\sigma$, then $G,\amend(C),\sigma\tod^*G',\amend(C'),\sigma'$;
  \item if $G,\amend(C),\sigma\tod G',C',\sigma'$, then there exist $C''$ and $\sigma''$ such that
    $G,C,\sigma\tod G',C'',\sigma''$ and $G',C',\sigma'\tod^\ast G',C'',\sigma''$.
\end{itemize}
\end{description}
\end{theorem}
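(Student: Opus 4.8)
The three clauses will be proved in order, exploiting throughout that $\amend$ only \emph{inserts} label selections and that selections are semantic no-ops in PC (rule \rname{C}{Sel}). For \textbf{Completeness} I would argue that the rewriting defining $\amend$ terminates. Applying the rewriting to a conditional subterm $\gencondE$ makes that conditional projectable (established in the next step), and since it merely prepends selections to the two branches -- changing the projections of $\pid p$ and of the chosen processes only inside those branches -- it can affect the projectability of another conditional only if that conditional strictly encloses $\gencondE$; disjoint conditionals, and those nested inside $\gencondE$ (already treated, since we work from the innermost subterms), are untouched. Hence a bottom-up traversal treats each conditional at most once, so $\amend(C)$ is always defined; taking the selected set $\pids r$ to be \emph{all} processes for which the branch merge is undefined makes a single pass suffice, and any other admissible choice still terminates because the number of conditionals carrying a non-mergeable process strictly decreases.

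For \textbf{Projectability} I would prove, by induction on the structure of a choreography $D$, that $\epp{\amend(D)}{\pid p}$ is defined for every $\pid p$; applying this to $C$ and to each procedure body then yields that $\langle\amend(\defs),\amend(C)\rangle$ is projectable. Every case but the conditional is immediate, as $\amend$ leaves non-conditional prefixes and procedure calls untouched and the induction hypothesis handles continuations. For $(\gencondE);C'$ the amended term is $\bigl(\m{if}\ \pid p.e\ \m{then}\ \sel{\pid p}{\pid r_1}{\lleft};\cdots;\sel{\pid p}{\pid r_n}{\lleft};\amend(C_1)\ \m{else}\ \sel{\pid p}{\pid r_1}{\lright};\cdots;\sel{\pid p}{\pid r_n}{\lright};\amend(C_2)\bigr);\amend(C')$, where $\pid r_1,\dots,\pid r_n$ are exactly the processes with $\epp{\amend(C_1)}{\pid r}\merge\epp{\amend(C_2)}{\pid r}$ undefined (the branches already being amended). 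Process $\pid p$ projects to a conditional, needing no merge; each $\pid r_i$ projects in the two branches to $\abranch{\pid p}{\{\lleft:\epp{\amend(C_1)}{\pid r_i}\}}$ and $\abranch{\pid p}{\{\lright:\epp{\amend(C_2)}{\pid r_i}\}}$, whose merge is the single branching term $\abranch{\pid p}{\{\lleft:\epp{\amend(C_1)}{\pid r_i},\,\lright:\epp{\amend(C_2)}{\pid r_i}\}}$ and hence always defined; and for every other process the merge is defined by the choice of $\pids r$. Appending $\epp{\amend(C')}{}$ and invoking the induction hypothesis for the branches and the continuation closes the case.

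For \textbf{Correspondence} I would first establish two commutation lemmas: $\amend(C[\pids p/\pids q])=\amend(C)[\pids p/\pids q]$ and $\amend(C_1\fatsemi C_2)=\amend(C_1)\fatsemi\amend(C_2)$, the latter crucially using that the selected set of a conditional depends only on its branches, not on its continuation. These let one show that $\precongrd$ is respected up to permutation and discharge of inserted selections; the delicate rules are \rname{C}{Unfold} (handled by the two lemmas once $\defs$ is amended as well) and the conditional-reshuffling rules \rname{C}{Eta-Cond}, \rname{C}{Cond-Eta} and \rname{C}{Cond-Cond}. Both directions then follow by induction on the reduction derivation: a \rname{C}{Cond} step of $C$ is simulated in $\amend(C)$ by the corresponding \rname{C}{Cond} step followed by $n=|\pids r|$ no-op \rname{C}{Sel} steps, after which the commutation lemmas identify the reached term with the amendment of the reduct; every other step is matched one-for-one; and conversely a step of $\amend(C)$ either mirrors a step of $C$ or is one of these \rname{C}{Sel} bursts, which we complete so as to realign with the amendment of the reduct of $C$.

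The step I expect to be the main obstacle is the interplay of $\amend$ with the conditional-permuting structural congruences: because the selected set $\pids r$ of a conditional changes when the conditional is pushed inside or pulled out of another conditional, $\amend$ does not \emph{literally} commute with \rname{C}{Cond-Cond} (nor with \rname{C}{Eta-Cond}, \rname{C}{Cond-Eta}), so the inserted selections genuinely differ on the two sides of such a congruence. The way through is that these selections are no-ops that \rname{C}{Eta-Eta} lets us freely permute past independent actions and \rname{C}{Sel} lets us erase, so the discrepancy never affects $\sigma$ nor any non-selection reduction; identifying the precise equivalence in which $\amend$ is stable under $\precongrd$, and checking that it still suffices to drive the induction, is the technical heart of the correspondence proof.
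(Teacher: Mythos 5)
Your proof follows essentially the same route as the paper's: Completeness and Projectability are discharged by unfolding the definition of $\amend$ (the paper declares them immediate), and Correspondence is proved by case analysis on transitions in which the only interesting case is a \rname{C}{Cond} step, matched on the amended side by the same step followed by the inserted \rname{C}{Sel} no-ops. Your treatment is considerably more detailed, and the subtlety you flag about $\amend$ not commuting literally with \rname{C}{Cond-Cond}, \rname{C}{Eta-Cond} and \rname{C}{Cond-Eta} under rule \rname{C}{Struct} is a genuine one that the paper's one-paragraph proof silently elides.
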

\begin{proof}[Theorem~\ref{lem:amend}]
  (Completeness) and (Projectability) are immediate by definition of $\amend$.
  For (Correspondence), the proof is by analysis of the possible transitions.
  The only interesting cases occur when the transition consumes a conditional.
  In the case $G,C,\sigma\to G',C',\sigma$, then $\amend(C)$ also has to consume the label selections introduced by
  amendment in the branch taken in order to reach $\amend(C')$.
  Conversely, if $G,\amend(C),\sigma$ makes a transition that consumes a conditional, then $C'$ needs to consume the
  label selections introduced by amendment in order to match the corresponding move by $C$.
\end{proof}

This result also holds if we replace $\defs$ by $\amend(\defs)$ in the relevant places.

The first choreography in Remark~\ref{remark:selections}, which is unprojectable, can be amended to the projectable
choreography presented at the end of the same remark.

\subsection{Asynchrony}
We detail the whole sets of rules for the semantics of aPC (Figure~\ref{fig:apc_semantics}) and its type system (Figure~\ref{fig:apc_types}), as well as for the semantics of aPP (Figure~\ref{fig:app_semantics}).
The full definition of EPP is given in Figures~\ref{fig:aepp} and~\ref{fig:aepp_state}.
\begin{figure}[t]
\footnotesize
\begin{eqnarray*}
&\infer[\rname{C}{Com-S}]
{
  G,\genasend;C,\sigma
  \ \tod \ 
  G, C[v/x], \sigma
}
{
  \know{\pid p}{\pid q} &
  \eval{e[\sigma(\pid p)/\pcont]}v
}
\\[1ex]
&\infer[\rname{C}{Com-R}]
{
  G,\asrecv{\pid p}{\pid q}fv;C,\sigma
  \ \tod \ 
  G, C, \sigma[\pid q \mapsto w]
}
{
\know{\pid q}{\pid p}
&
  \eval{f[\sigma(\pid q)/\pcont](v)}w
}
\\[1ex]
&\infer[\rname{C}{Sel-S}]
{
  G,\genasels;C,\sigma \ \tod \ G,C[l/x],\sigma
}
{
  \know{\pid p}{\pid q}
}
\\[1ex]
&\infer[\rname{C}{Sel-R}]
{
  G,\acom{\bullet_{\pid p}}{\pid q[l]}l;C,\sigma \ \tod \ G,C,\sigma
}
{
\know{\pid q}{\pid p}
}
\\[1ex]
&\infer[\rname{C}{Tell-S}]{
  G, \genatell;C, \sigma \ \tod\ G, C[\pid q/x, \pid r/y], \sigma
}{
  \know{\pid p}{\pid q}
  &
  \know{\pid p}{\pid r}
}
\\[1ex]
&\infer[\rname{C}{Tell-R}]{
  G, \acom{\bullet_{\pid p}.\pid r}{\pid q}{\pid r};C, \sigma
  \ \tod\ 
  \updates{\pid q}{\pid r}, C, \sigma
}{
\know{\pid q}{\pid p}
}
\end{eqnarray*}
\caption{Asynchronous PC, Semantics of New Runtime Terms.}
\label{fig:apc_semantics}
\end{figure}

\begin{figure}[t]
\footnotesize
\begin{eqnarray*}
&\infer[\rname{T}{Com-S}]
{
  \Gamma;G\seq\genasend;C \cseq G'
}
{
  \know{\pid p}{\pid q} &
  \Gamma\seq\pid p:T_{\pid p} &
  \pcont:T_{\pid p}\vdash_\types e:T &
  \Gamma\oplus(x:T);G\seq C \cseq G'
}
\\[1ex]
&\infer[\rname{T}{Com-RV}]
{
  \Gamma;G\seq\asrecv{\pid p}{\pid q}fx;C \cseq G'
}
{
  \know{\pid q}{\pid p} &
  \Gamma\seq\pid q:T_{\pid q} &
  \pcont:T_{\pid q}\vdash_\types f:T\to T_{\pid q} &
  \Gamma\oplus(x:T);G\seq C \cseq G'
}
\\[1ex]
&\infer[\rname{T}{Com-RT}]
{
  \Gamma;G\seq\asrecv{\pid p}{\pid q}fv;C \cseq G'
}
{
  \know{\pid q}{\pid p} &
  \Gamma\seq\pid q:T_{\pid q} &
  \vdash_\types v:T &
  \pcont:T_{\pid q}\vdash_\types f:T\to T_{\pid q} &
  \Gamma;G\seq C \cseq G'
}
\\[1ex]
&\infer[\rname{T}{Sel-S}]
{
  \Gamma;G\vdash\genasels;C \cseq G'
}
{
  \know{\pid p}{\pid q} &
  \Gamma\oplus(x=l);G\vdash C \cseq G'
}
\\[1ex]
&\infer[\rname{T}{Sel-RV}]
{
  \Gamma;G\vdash\acom{\bullet_{\pid p}}{\pid q[l]}x;C \cseq G'
}
{
  \know{\pid q}{\pid p} &
  \Gamma\oplus(x=l);G\vdash C \cseq G'
}
\quad
\infer[\rname{T}{Sel-RT}]
{
  \Gamma;G\vdash\acom{\bullet_{\pid p}}{\pid q[l]}l;C \cseq G'
}
{
  \know{\pid q}{\pid p} &
  \Gamma;G\vdash C \cseq G'
}
\\[1ex]
&\infer[\rname{T}{Tell-S}]{
  \Gamma;G\vdash\genatell;C \cseq G'
}{
  \know{\pid p}{\pid q} &
  \know{\pid p}{\pid r} &
  \Gamma\oplus(x=\pid q)\oplus(y=\pid r) \vdash C \cseq G'
}
\\[1ex]
&\infer[\rname{T}{Tell-RV}]{
  \Gamma;G\vdash\acom{\bullet_{\pid p}.\pid r}{\pid q}{\pid r};C \cseq G'
}{
  \know{\pid q}{\pid p} &
  \Gamma;\updates{\pid q}{\pid r} \vdash C \cseq G'
}
\\[1ex]
&\infer[\rname{T}{Tell-RT}]{
  \Gamma;G\vdash\acom{\bullet_{\pid p}.\pid r}{\pid q}{x};C \cseq G'
}{
\know{\pid q}{\pid p} &
  \Gamma\oplus(x=\pid r);\updates{\pid q}{\pid r} \vdash C \cseq G'
}
\end{eqnarray*}
\caption{Asynchronous PC, Typing Rules (New Runtime Terms).}
\label{fig:apc_types}
\end{figure}

\begin{figure*}
\footnotesize
\begin{eqnarray*}
&\infer[\rname{P}{Com-S}]
{
  \actor[\rhop]{\pid p}{v}{\asend{\pid q}{e};B_{\pid p}} \parp \actor[\rhoq]{\pid q}w{B_{\pid q}}
  \ \tob \ 
  \actor[\rhop]{\pid p}{v}{B_{\pid p}} \parp \actor[\rho'_{\pid q}]{\pid q}w{B_{\pid q}}
}
{
  \rho'_{\pid q}=\rhoq\cdot\langle\pid p,e{[v/\pcont]}\rangle
}
\\[1ex]
&\infer[\rname{P}{Com-R}]
{
  \actor[\rhoq]{\pid q}{w}{\arecv{\pid p}{f};B}
  \ \tob \ 
  \actor[\rho'_{\pid q}]{\pid q}{u}{B}
}
{
  \rhoq\preceq\langle\pid p,v\rangle\cdot\rho'_{\pid q} &
  u = (f[w/\pcont])(v)
}
\\[1ex]
&\infer[\rname{P}{Tell-S}]
{
  \actor[\rhop]{\pid p}{v}{\atells{\pid q}{\pid r};B_{\pid p}} \parp \actor[\rhoq]{\pid q}w{B_{\pid q}} \parp \actor[\rho_{\pid r}]{\pid r}z{B_{\pid r}}
  \ \tob \ 
  \actor[\rhop]{\pid p}{v}{B_{\pid p}} \parp \actor[\rho'_{\pid q}]{\pid q}w{B_{\pid q}} \parp \actor[\rho'_{\pid r}]{\pid r}z{B_{\pid r}}
}
{
  \rho'_{\pid q}=\rhoq\cdot\langle\pid p,\pid r\rangle &
  \rho'_{\pid r}=\rho_{\pid r}\cdot\langle\pid p,\pid q\rangle
}
\\[1ex]
&\infer[\rname{P}{Tell-R}]
{
  \actor[\rhoq]{\pid q}{w}{\atellr{\pid p}{\pid r};B}
  \ \tob \ 
  \actor[\rho'_{\pid q}]{\pid q}{w}{B}
}
{
  \rhoq\preceq\langle\pid p,\pid r\rangle\cdot\rho'_{\pid q} &
}
\\[1ex]
&\infer[\rname{P}{Sel-S}]
{
  \actor[\rhop]{\pid p}{v}{\asel{\pid q}{l};B} \parp \actor[\rhoq]{\pid q}w{B_{\pid q}}
  \ \tob \
  \actor[\rhop]{\pid p}{v}{B} \parp \actor[\rho'_{\pid q}]{\pid q}w{B_{\pid q}}
}
{
  \rho'_{\pid q}=\rhoq\cdot\langle\pid p,l\rangle
}
\\[1ex]
&\infer[\rname{P}{Sel-R}]
{
  \actor[\rhoq]{\pid q}{w}{\abranch{\pid p}{\{ l_i : B_i\}_{i\in I}}}
  \ \tob \
  \actor[\rho'_{\pid q}]{\pid q}{w}{B_j}
}
{
  \rhoq\preceq\langle\pid p,l_j\rangle\cdot\rho'_{\pid q} &
  j \in I
}
\end{eqnarray*}
\caption{Asynchronous Procedural Processes, Semantics (New Rules).}
\label{fig:app_semantics}
\end{figure*}

\begin{figure}[t]
\footnotesize
\begin{eqnarray*}
&\epp{\genasend;C}{\pid r}=
  \begin{cases}
    \asend{\pid q}{e};\epp{C}{\pid r} & \text{if } \pid r = \pid p \\
    \epp{C}{\pid r} & \text{otherwise}
  \end{cases}
\qquad
\epp{\genarecv;C}{\pid r}=
  \begin{cases}
    \arecv{\pid p}{f};\epp{C}{\pid r} & \text{if } \pid r = \pid q \\
    \epp{C}{\pid r} & \text{otherwise}
  \end{cases}
\\[1ex]
&\epp{\genasels;C}{\pid r}=
\begin{cases}
  \asel{\pid q}{l} & \text{if } \pid r = \pid p \\
  \epp{C}{\pid r} & \text{otherwise}
\end{cases}
\qquad
\epp{\genaselr;C}{\pid r}=
\begin{cases}
  \abranch{\pid p}{\{ l : \epp{C}{\pid r} \}} & \text{if } \pid r = \pid q \\
  \epp{C}{\pid r} & \text{otherwise}
\end{cases}
\\[1ex]
&\epp{\genatell;C}{\pid s}=
\begin{cases}
  \atells{\pid q}{\pid r};\epp{C}{\pid s} & \text{if } \pid s = \pid p
  \\
  \epp{C}{\pid s} & \text{otherwise}
\end{cases}
\qquad
\epp{\genatold;C}{\pid s}= 
\begin{cases}
  \atellr{\pid p}{\pid r};\epp{C}{\pid s} & \text{if } \pid s = \pid q
  \\
  \epp{C}{\pid s} & \text{otherwise}
\end{cases}
\end{eqnarray*}
\caption{Asynchronous PC, Behaviour Projection (New Rules).}
\label{fig:aepp}
\end{figure}

\begin{figure}[t]
\footnotesize
\begin{eqnarray*}
&\something{\asrecv{\pid p}{\pid q}fv;C}{\pid r}=
  \begin{cases}
    \langle\pid p,v\rangle\cdot\something{C}{\pid r} & \text{if } \pid r = \pid q \\
    \something{C}{\pid r} & \text{otherwise}
  \end{cases}
\qquad
\something{\acom{\bullet_{\pid p}}{\pid q[l]}l;C}{\pid r}=
\begin{cases}
  \langle\pid p,l\rangle\cdot\something{C}{\pid r} & \text{if } \pid r = \pid q \\
  \epp{C}{\pid r} & \text{otherwise}
\end{cases}
\\[1ex]
&\something{\acom{\bullet_{\pid p}.\pid r}{\pid q}{\pid r};C}{\pid s}= 
\begin{cases}
  \langle\pid p,\pid r\rangle\cdot\something{C}{\pid s} & \text{if } \pid s = \pid q
  \\
  \epp{C}{\pid s} & \text{otherwise}
\end{cases}
\\[1ex]
&\something{\gencond;C}{\pid r} = \something{C_1}{\pid r}\cdot\something{C}{\pid r}
\qquad
\something{\eta;C}{\pid r} = \something{I;C}{\pid r} = \something{C}{\pid r}
\end{eqnarray*}
\caption{Asynchronous PC, State Projection.}
\label{fig:aepp_state}
\end{figure}

The proofs of the relationships between PC/PP and their asynchronous counterparts are mechanical.

\begin{proof}[Theorem~\ref{thm:pc-vs-apc}]
  Straightforward by case analysis on the possible transitions of $C$.
\end{proof}

\begin{proof}[Theorem~\ref{thm:pp-vs-app}]
  Straightforward by case analysis on the possible transitions of $N$.
\end{proof}

\subsection{Procedures with holes}
%
We illustrate the use of holes to define a general-purpose iterator, procedure \lstinline+Iter+ below.
Using \lstinline+Iter+, we define a procedure \lstinline+Reverse+ for reversing the list of a process \lstinline+q+.
We omit the straightforward procedures for \lstinline+dec+rementing a counter and \lstinline+pop+ping a list.
\begin{lstlisting}
Iter(p,q,r) = if p.is_zero
  then p -> q,r[stop]
  else dec<p>; p -> q,r[cont]; \Box_h; Iter<p,q,r>

Reverse(q) =
  q starts p,r; q: p<->r;
  q.size -> p; q.empty -> r;
  Iter<p,q,r> with h \mapsto q.top -> r.append; pop<q>;
  r.* -> q
\end{lstlisting}



\end{document}